%
%
%
%
%
%
\RequirePackage{fix-cm}
\documentclass[numbook, envcountsect, envcountsame, envcountreset, runningheads, smallextended]{svjour3}

\usepackage{amsmath}
\usepackage{amssymb}
\usepackage{xcolor}
\usepackage{verbatim}
\usepackage{tikz}
\usepackage{pstricks}
\usepackage{pgfplots}
\usepackage{mathptmx}
\usepackage[shortcuts]{extdash}
\DeclareMathOperator*{\esssup}{ess\,sup}

\newtheorem{thm}{Theorem}[section]
\newtheorem{defn}[thm]{Definition}
\newtheorem{rem}[thm]{Remark}
\newtheorem{lem}[thm]{Lemma}
\newtheorem{prop}[thm]{Proposition}
\newtheorem{coro}[thm]{Corollary}

\newtheorem{eg}[thm]{Example}

\renewcommand{\P}{\mathbb{P}}
\newcommand{\F}{\mathbb{F}}

\newcommand{\Q}{\mathbb{Q}}
\newcommand{\R}{\mathbb{R}}
\newcommand{\E}{\mathbb{E}}
\newcommand{\N}{\mathbb{N}}
\newcommand{\cF}{\mathcal{F}}
\newcommand{\cH}{\mathcal{H}}

\newcommand{\cU}{\mathcal{U}}
\newcommand{\cP}{\mathcal{P}}
\newcommand{\cQ}{\mathcal{Q}}

\newcommand{\cE}{\mathcal{E}}
\newcommand{\cM}{\mathcal{M}}
\newcommand{\A}{\mathcal{A}}

\newcommand{\cS}{\mathcal{S}}

\newcommand{\eps}{\varepsilon}
\def\bcdot{\text{ \tiny\textbullet }~ }

\numberwithin{equation}{section}

\title{Model-independent superhedging under portfolio constraints}
\author{Arash Fahim \and Yu-Jui Huang}
\institute{A. Fahim \at Department of Mathematics, Florida State University\\
\email{fahim@math.fsu.edu}
\and
Y.-J. Huang \at School of Mathematical Sciences, Dublin City University \\ 
\email{yujui.huang@dcu.ie}\\
{We thank Pierre Henry-Labord{\`e}re, Constantinos Kardaras, and Jan Ob{\l}{\'o}j for their thoughtful suggestions. We are also thankful to the anonymous referees for their elaborate comments which contribute to the quality of this work.}\\
{A. Fahim is partially supported by Florida State University CRC FYAP (315-81000-2424) and the NSF (DMS-1209519).}\\
{Y.-J. Huang is partially supported by SFI (07/MI/008 and 08/SRC/FMC1389) and the ERC (278295).}
}

\begin{document}
\maketitle

\begin{abstract}
In a discrete-time market, we study model-independent superhedging, while the semi-static superhedging portfolio consists of {\it three} parts: static positions in liquidly traded vanilla calls, static positions in other tradable, yet possibly less liquid, exotic options, and a dynamic trading strategy in risky assets under certain constraints. By considering the limit order book of each tradable exotic option and employing the Monge-Kantorovich theory of optimal transport, we establish a general superhedging duality, which admits a natural connection to convex risk measures. With the aid of this duality, we derive a model-independent version of the fundamental theorem of asset pricing. The notion ``finite optimal arbitrage profit'', weaker than no-arbitrage, is also introduced. It is worth noting that our method covers a large class of Delta constraints as well as Gamma constraint.

\keywords{model-independent pricing \and robust superhedging \and limit order book \and fundamental theorem of asset pricing \and portfolio constraints \and Monge-Kantorovich optimal transport}
\subclass{91G20 \and 91G80}
\JEL{C61 \and G13}
\end{abstract}

\section{Introduction}
To avoid model mis-specification, one may choose to consider only ``must-be-true'' implications from the market. The standard approach, suggested by Dupire \cite{Dupire94}, leverages on market prices of liquidly traded vanilla call options: one does not manage to specify a proper physical measure, but considers all measures that are consistent with market prices of vanilla calls as plausible pricing measures. These measures then provide model-independent bounds for prices of illiquid exotic options, and motivate the practically useful semi-static hedging, which involves static holdings in vanilla calls and dynamic trading in risky assets. Pioneered by Hobson \cite{Hobson98}, this thread of research has drawn substantial attention; see e.g. \cite{BHR01}, \cite{BP02}, \cite{HLW05-QF}, \cite{HLW05-IME}, \cite{LW05}, \cite{CDDV08}, \cite{CO11}, \cite{BHP13}, and \cite{DS14-PTRF}. In particular, Beiglb{\"o}ck, Henry-Labord{\`e}re \& Penkner establish in \cite{BHP13} a general duality of model-independent superhedging, under a discrete-time setting where market prices of vanilla calls with maturities {\it at or before} the terminal time $T>0$ are all considered.

In reality, what we can rely on goes beyond vanilla calls. In the markets of commodities, for instance, Asian options and calendar spread options are largely traded, with their market or broker quotes easily accessible. In the New York Stock Exchange and the Chicago Board Options Exchange, standardized digital and barrier options have been introduced, mostly for equity indexes and Exchange Traded Funds.
What we can take advantage of, as a result, includes market prices of not only vanilla calls, but also certain tradable exotic options.

In this paper, we take up the model-independent framework in \cite{BHP13}, and intend to establish a general superhedging duality, under the consideration of additional tradable options besides vanilla calls, as well as portfolio constraints on trading strategies in risky assets. More specifically, our semi-static superhedging portfolio consists of three parts: 1. static positions in liquidly traded vanilla calls, as in the literature of robust hedging; 2. static positions in additional tradable, yet possibly less liquid, exotic options; 3. a dynamic trading strategy in risky assets under certain constraints. 

While tradable, the additional exotic options may be very different from vanilla calls, in terms of liquidity. Their limit order books are usually very shallow and admit large bid-ask spreads, compared to those of the underlying assets and the associated vanilla calls. It follows that we need to take into account the whole limit order book, instead of one single market quote, of each of the additional options, in order to make trading possible. We formulate the limit order books in Section~\ref{sec:options}, and consider the corresponding non-constant unit price functions. On the other hand, portfolio constraints on trading strategies in risky assets have been widely studied under the model-specific case; see \cite{CK93} and \cite{JK95} for deterministic convex constraints, and \cite{FK97}, \cite{CPT01}, \cite{Napp03}, and \cite{Rokhlin05}, among others, for random and other more general constraints. Our goal is to place portfolio constraints under current model-independent context, and investigate its implication to semi-static superhedging. 
 
We particularly consider a general class of constraints which enjoys adapted convexity and continuous approximation property (Definition~\ref{defn:cS}). This already covers a large collection of Delta constraints, including adapted convex constraints; see Remark~\ref{rem:adapted convex}. For the simpler case where no additional tradable option exists, we derive a superhedging duality in Proposition~\ref{thm:duality}, by using the theory of optimal transport. This in particular generalizes the duality in \cite{BHP13} to the multi-dimensional case with portfolio constraints; see Remarks~\ref{rem:S=H} and \ref{rem:multi-d}. Then, on strength of the convexity of the non-constant unit price functions, we are able to extend the above duality to the general case where additional tradable options exist; see Theorem~\ref{thm:duality main}. Note that Acciaio, Beiglb{\"o}ck, Penkner \& Schachermayer \cite{ABPS13} also applies to model-independent superhedging in the presence of tradable exotic options, while assuming implicitly that each option can be traded liquidly. Theorem~\ref{thm:duality main} can therefore be seen as a generalization of \cite{ABPS13} that deals with different levels of liquidity; see Remark~\ref{rem:relate to [1]}.

The second part of the paper investigates the relation between the superhedging duality and the fundamental theorem of asset pricing (FTAP). It is well known in the classical model-specific case that the FTAP yields the superhedging duality. This relation has been carried over to the model-independent case by \cite{ABPS13}, where an appropriate notion of model-independent arbitrage was introduced. In the same spirit as in \cite{ABPS13}, we define model-independent arbitrage in Definition~\ref{defn:NA}, under current setting with additional tradable options and portfolio constraints. With the aid of the superhedging duality in Theorem~\ref{thm:duality main}, we are able to derive a model-independent FTAP; see Theorem~\ref{thm:NA}. While the theorem itself does not distinguish between arbitrage due to risky assets and arbitrage due to additional tradable options, Lemmas~\ref{lem:P_S characterization} and \ref{lem:cE^Q_I=0} can be used to differentiate one from the other. It is also worth noting that we derive the FTAP as a consequence of the superhedging duality. This argument was first observed in \cite{DS14}, as opposed to the standard argument of deriving the superhedging duality as a consequence of the FTAP, used in both the model-specific case and \cite{ABPS13}.

With the FTAP (Theorem~\ref{thm:NA}) at hand, we observe from Theorem~\ref{thm:duality main} and Proposition~\ref{prop:risk measure} that the problems of superhedging and risk-measuring can be well-defined even when there is model-independent arbitrage to some extent. We relate this to optimal arbitrage under the formulation of \cite{CT13}, and show that superhedging and risk-measuring are well-defined as long as ``the optimal arbitrage profit is finite'', a notion weaker than no-arbitrage; see Proposition~\ref{thm:NUP}. We also compare Theorem~\ref{thm:NA} with \cite[Theorem 9.9]{FS-book-11}, the classical model-specific FTAP under portfolio constraints, and observe that a closedness condition in \cite{FS-book-11} is no longer needed under current setting. An example given in Section~\ref{sec:compare classical} indicates that availability of vanilla calls obviates the need of the closedness condition.

Finally, we extend our scope to Gamma constraint. While Gamma constraint does not satisfy adapted convexity in Definition~\ref{defn:cS} (ii), it admits additional boundedness property. Taking advantage of this, we are able to modify previous results to obtain the  corresponding superhedging duality and FTAP in Propositions~\ref{prop:duality bdd} and \ref{prop:NA bdd}.

This paper is organized as follows. In Section \ref{sec:setup}, we prescribe the set-up of our studies. In Section \ref{sec:duality}, we establish the superhedging duality, and investigate its connection to other dualities in the literature and convex risk measures. In Section \ref{sec:NA}, we define model-independent arbitrage under portfolio constraints with additional tradable options, and derive the associated FTAP. The notion ``finite optimal arbitrage profit'', weaker than no-arbitrage, is also introduced. Section \ref{sec:examples} presents concrete examples of portfolio constraints and the effect of additional tradable options. Section \ref{sec:bounded} deals with constraints which do not enjoy adapted convexity, but admit some boundedness property. 
Appendix~\ref{sec:appendix} contains a counter-example which emphasizes the necessity of the continuous approximation property required in Definition~\ref{defn:cS}.


\section{The set-up}\label{sec:setup}
We consider a discrete-time market, with a finite horizon $T\in\N$. There are $d$ risky assets $S=\{S_t\}_{t=0}^T=\{(S_t^1,\dots,S_t^d)\}_{t=0}^T$, whose initial price $S_0=x_0\in\R_+^d$ is given. There is also a risk-free asset $B=\{B_t\}_{t=0}^T$ which is normalized to $B_t\equiv 1$. Specifically, we take $S$ as the canonical process $S_t(x_1,x_2,\dots,x_T)=x_t$ on the path-space ${\Omega:=(\R^d_+)^T}$, and denote by $\F=\{\cF_t\}_{t=0}^T$ the natural filtration generated by $S$. 

\subsection{Vanilla calls and other tradable options}\label{sec:options}
At time $0$, we assume that the vanilla call option with payoff $(S^n_t-K)^+$ can be liquidly traded, at some price $C_n(t,K)$ given in the market, for all $n=1,\dots,d$, $t=1,\dots,T$, and $K\ge 0$. The collection of pricing measures consistent with market prices of vanilla calls is therefore 
\begin{equation}\label{defn Q}
\Pi\!\!:=\!\!\bigl\{\Q\in\cP(\Omega):\E^\Q[(S^n_t-K)^+]=C_n(t,K), \forall n=1,\dots,d, t=1,\dots,T,\hbox{and } K\ge 0\bigr\},
\end{equation}
where $\cP(\Omega)$ denotes the collection of all probability measures defined on $\Omega$.

In view of \cite[Proposition 2.1]{HR12}, for each $n=1,\dots,d$ and $t=1,\dots,T$, as long as $K\mapsto C_n(t,K)$ is nonnegative, convex and satisfies $\lim_{K\downarrow 0}\partial_KC_n(t,K)\ge-1$, and $\lim_{K\to\infty} C_n(t,K) = 0$, the relation $\E^\Q[(S^n_t-K)^+]=C_n(t,K)$ $\forall K\ge0$ already prescribes the distribution of $S^n_t$ on $\R_+$, which will be denoted by $\mu^n_t$. Thus, by setting $\Q^n_t$ as the law of $S^n_t$ under $\Q$, we have
\begin{equation}\label{Pi 2}
\Pi=\left\{\Q\in\cP(\Omega): \Q^n_t =\mu^n_t,\ \forall\  n=1,\dots,d\ \hbox{and}\ t=1,\dots,T\right\}.
\end{equation}

\begin{rem}\label{rem:E[S] finite}
Given $\Q\in\Pi$, note that $\E^\Q[S^n_t]<\infty$ for all $n=1,\dots,d$ and ${t=1,\dots,T}$ (which can be seen by taking $K=0$ in \eqref{defn Q}).
\end{rem}

\begin{rem}\label{rem:Pi compact}
In view of \eqref{Pi 2}, $\Pi$ is nonempty, convex, and weakly compact. This is a direct consequence of \cite[Proposition 1.2]{Kellerer84}, once we view $\Omega=(\R^d_+)^T$ as the product of $(d\times T)$ copies of $\R_+$.
\end{rem}

\begin{rem}
We do not assume that $t\mapsto C_n(t,K)$ is increasing for each fixed $n$ and $K$. This condition, normally required in the literature (see e.g. \cite[p. 481]{BHP13}), implies that the set of martingale measures
\begin{equation}\label{def:cM}
\cM:=\{\Q\in\Pi:S=\{S_t\}_{t=0}^T\ \hbox{is a martingale under}\ \Q\}
\end{equation}
is non-empty, which underlies the superhedging duality in \cite{BHP13}. In contrast, the superhedging duality in Proposition~\ref{thm:duality} below hinges on a different collection $\cQ_\cS$ which contains $\cM$ (see Definition~\ref{def:Q_S}). Since it is possible that our duality holds while $\cM=\emptyset$, imposing ``$t\mapsto C_n(t,K)$ is increasing'' is not necessary.    
\end{rem}

Besides vanilla calls, there are other options tradable, while less liquid, at time $0$. Let $I$ be a (possibly uncountable) index set. For each $i\in I$, suppose that $\psi_i:\Omega\mapsto\R$ is the payoff function of an option tradable at time $0$. Let $\eta\in\R$ be the number of units of $\psi_i$ being traded at time $0$, with $\eta\ge 0$ denoting a purchase order and $\eta<0$ a selling order. Let $c_i(\eta)\in\bar\R:=\R\cup\{-\infty,+\infty\}$ denote the total cost of trading $\eta$ units of $\psi_i$.  Throughout this paper, we impose the following condition:
\begin{equation}\tag{C}
\hbox{for all}\ i\in I, \hbox{the map}\ \eta\mapsto c_i(\eta)\ \hbox{is convex with $c_i(0)=0$}.
\end{equation}
 We can then define the unit price $p_i(\eta)$ for trading $\eta$ units of $\psi_i$ by 
\[
p_i(\eta):=\frac{c_i(\eta)}{\eta}\ \quad \hbox{for}\ \eta\in\R\setminus\{0\},\quad\hbox{and}\quad  p_i(0):=c'_i(0+).
\] 

\begin{rem}
Condition (C) is motivated by the typical structure of a limit order book of a nonnegative option, as demonstrated in Figure~\ref{fig:bid-ask-chart}. That is, the option $\psi_i$ can be purchased only at prices $0\le a_1 \le a_2\le\dots \le a_{\ell}$ with number of units $q_1, q_2,\dots, q_{\ell}>0$ respectively, and sold only at prices $b_1\ge b_2\ge \dots\ge b_{k}\ge 0$ with number of units $r_1, r_2,\dots, r_{k}>0$ respectively, where $b_1\le a_1$ reflects the bid-ask spread and $\ell$ and $k$ belong to $\N\cup\{+\infty\}$. The possibility of $\ell, k=\infty$ allows for infinitely many buy/sell prices in the order book.

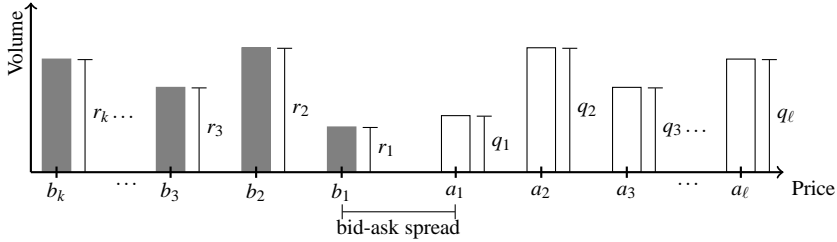
\begin{figure}[h!]
  \centering

\begin{tikzpicture}[scale=0.75]

\filldraw [gray] (-5,0) rectangle (-4.5,2);
\draw [-|]  (-4.25,0) -- (-4.25,1) node[right](-4.25,1) {$r_{k}\dots$} -- (-4.25,2);
\draw [thick] (-4.75,-.1) -- (-4.75,.1);
\node at (-4.75,-.35) {$b_{k}$};
\node   at (-3.5,-.2) {$\dots$};

\filldraw [gray] (-3,0) rectangle (-2.5,1.5);
\draw [-|]  (-2.25,0) -- (-2.25,.75) node[right](-2.25,.75) {$r_{3}$} -- (-2.25,1.5);
\draw [thick] (-2.75,-.1) -- (-2.75,.1);
\node at (-2.75,-.35) {$b_{3}$};

\filldraw [gray] (-1.5,0) rectangle (-1,2.2);
\draw [-|]  (-.75,0) -- (-.75,1.1) node[right](-.75,1.1) {$r_{2}$} -- (-.75,2.2);
\draw [thick] (-1.25,-.1) -- (-1.25,.1);
\node at (-1.25,-.35) {$b_{2}$};

\filldraw [gray] (0,0) rectangle (0.5,.8);
\draw [-|]  (.75,0) -- (.75,.4) node[right](.75,.4) {$r_{1}$} -- (.75,.8);
\draw [thick] (.25,-.1) -- (.25,.1);
\node at (.25,-.35) {$b_{1}$};

\draw  (2,0) rectangle (2.5,1);
\draw [-|]  (2.75,0) -- (2.75,.5) node[right](2.75,.5) {$q_{1}$} -- (2.75,1);
\draw [thick] (2.25,-.1) -- (2.25,.1);
\node at (2.25,-.35) {$a_{1}$};

\draw  (3.5,0) rectangle (4,2.2);
\draw [-|]  (4.25,0) -- (4.25,1.1) node[right](4.25,1.1) {$q_{2}$} -- (4.25,2.2);
\draw [thick] (3.75,-.1) -- (3.75,.1);
\node at (3.75,-.35) {$a_{2}$};

\draw  (5,0) rectangle (5.5,1.5);
\draw [-|]  (5.75,0) -- (5.75,.75) node[right](5.75,.75) {$q_{3}\dots$} -- (5.75,1.5);
\draw [thick] (5.25,-.1) -- (5.25,.1);
\node at (5.25,-.35) {$a_{3}$};
\node   at (6.35,-.2) {$\dots$};

\draw  (7,0) rectangle (7.5,2);
\draw [-|]  (7.75,0) -- (7.75,1) node[right](7.75,1) {$q_{\ell}$} -- (7.75,2);
\draw [thick] (7.25,-.1) -- (7.25,.1);
\node at (7.25,-.35) {$a_{\ell}$};

\draw [|-|]  (.25,-.7) -- (2.25,-.7);
\node   at (1.25,-1) {bid-ask spread};
\draw [thick,->]  (-5.2,0) -- (8,0) node[anchor=north west] {Price};
\draw [thick,->]  (-5.2,0) -- (-5.2,3)node[rotate=90,anchor=south east] {Volume};
\end{tikzpicture}
\caption{A limit order book of $\psi_i$}
\label{fig:bid-ask-chart}
\end{figure}
 
We keep track of $Q_m:= \sum_{j=1}^m q_j$, the total number of units that can be bought at or below the price $a_m$, for $m=1,\dots,\ell$. Similarly, $R_m:=\sum_{j=1}^m r_j$ is the total number of units that can be sold at or above the price $b_m$, for all $m=1,\dots,k$. The total cost $c_i(\eta)$ of trading $\eta$ units of $\psi_i$ is then given by
\begin{equation*}
c_i(\eta)=
\begin{cases}
\sum_{m=1}^{u-1} a_m q_m + a_u(\eta-Q_{u-1})\ge0  &\hbox{if}\ \eta\in(Q_{u-1},Q_{u}],\ u=1,\dots,\ell+1.\\
0 &\hbox{if}\ \eta=0,\\
-\sum_{m=1}^{u-1} b_m r_m + b_u(\eta+R_{u-1})\le0  &\hbox{if}\ \eta\in[-R_{u},-R_{u-1}),\ u=0,\dots,k+1.\\
\end{cases} 
\end{equation*}
where we set $Q_0=R_0=0$, $Q_{\ell+1}=R_{k+1}=\infty$, $a_{\ell+1}=\infty$, $b_{k+1}=0$, and use the convention that $0\cdot\infty=0$ and $\sum_{m=1}^0 = 0$. As shown in Figure~\ref{fig:price impact graph}, $\eta\mapsto c_i(\eta)$ satisfies (C). In particular, $c_i$ is linear on $\R$ if and only if $b_1=a_1\ \hbox{and}\ q_1=r_1=\infty$; this means  
that $\psi_i$ can be traded liquidly at the price $a_1=b_1$, which is the slope of $c_i$.

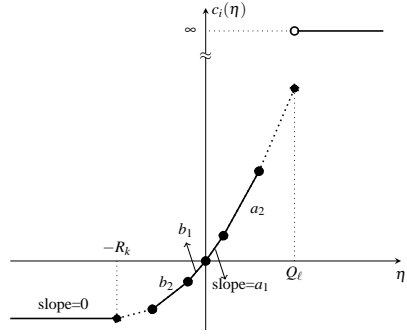
\begin{figure}[h!]
  \centering

\begin{tikzpicture}[scale=.75]

\begin{axis}[axis lines=middle,
	xmin=-11,xmax=11,
         ytick=\empty,xtick=\empty,
         ymin=-3,ymax=11,  
         x label style={at={(axis description cs:1,.2)},anchor=north},
  	 y label style={at={(axis description cs:.56,.95)},anchor=south},
	 ylabel=$c_i(\eta)$,xlabel=$\eta$]
\addplot[samples at={-5,-3,-1,0,1,3,5},thick,dotted,mark=*] {.1*(x+5)^2-2.5};
\addplot[samples at={-3,-1,0,1,3},mark=*,thick] {.1*(x+5)^2-2.5};
\addplot[samples at={5},mark=o,thick] {10};
\addplot[samples at={5.15,10},mark=none,thick] {10};
\addplot[samples at={-5,-11},mark=none,thick] {-2.5};
\addplot[samples at={4.85,0},mark=none,dotted]{10};

\node at (axis cs:-.2,10) [left] {$\infty$};

\node at (axis cs:2,-.5) [below]{slope=$a_1$};
\draw[->] (axis cs:.55,.5) -- (axis cs:1.1,-.8);

\node at (axis cs:2.2,2.2) [right]{$a_2$};

\node at (axis cs:-1.1,.8) [above]{$b_1$};
\draw[->] (axis cs:-.55,-.5) -- (axis cs:-1.1,.8);

\node at (axis cs:-2.2,-1.5) [above]{$b_2$};

\node at (axis cs:-8,-2.5) [above]{slope=0};

\node at (axis cs:5,0) [below]{$Q_{\ell}$};
\draw[dotted] (axis cs:5,0) -- (axis cs:5,7.5);

\node at (axis cs:-5,0) [above]{$-R_{k}$};
\draw[dotted] (axis cs:-5,0) -- (axis cs:-5,-2.5);

\filldraw [color=white] (axis cs:-.2,8.9) rectangle (axis cs:.2,9);
\node at (axis cs:0,8.5) [above]{$\approx$};

\end{axis}

\end{tikzpicture}
\caption{Graph of $\eta\mapsto c_i(\eta)$: $a_1,a_2,\dots$ and $b_1, b_2,\dots$ placed on each segment indicates the slope of each segment and matches the prices in the limit order book for volumes $q_1,q_2,\dots$ and $r_1, r_2,\dots$, respectively.}
\label{fig:price impact graph}
\end{figure}
\end{rem}

\begin{rem}\label{rem:bid-ask spread}
Condition (C) captures two important features of the prices of $\psi_i$: (1) the bid-ask spread, formulated as $[c'_i(0-),c'_i(0+)]$; (2) the non-linearity, i.e. the unit price $\eta\mapsto p_i(\eta)$ is non-constant. This setting in particular allows for zero spread when $c'_i(0-)=c'_i(0+)$, while at the same time the limit order book may induce non-linear pricing. This happens to a highly liquid  asset for which the bid-ask spread is negligible, but transaction cost becomes significant for large trading volumes. 
Also, $c_i$ is linear if and only if $\psi_i$ can be traded liquidly, with whatever units, at one single price $p_i$ (which is the slope of $c_i$). 

Note that \cite{BZZ14} has recently considered bid-ask spreads, but not non-linear pricing, of hedging options under model uncertainty. In a model-independent setting, while a non-linear pricing operator for hedging options has been used in \cite{DS14}, the non-linearity does not reflect the non-constant unit price of an option in its limit order book (see \cite[(2.3)]{DS14}); instead, it captures a market where the price of a portfolio of options may be lower than the sum of the respective prices of the options (see the second line in the proof of \cite[Lemma 2.4]{DS14}).    
\end{rem}

\subsection{Constrained trading strategies}

\begin{defn}[Trading strategies]
We say $\Delta=\{\Delta_t\}_{t=0}^{T-1}$ is a trading strategy if $\Delta_0\in\R^d$ is a constant and $\Delta_t:(\R^d_+)^t\mapsto \R^d$ is Borel measurable for all ${t=1,\dots, T-1}$. Moreover, the stochastic integral  of $\Delta$ with respect to ${x=(x_1,\dots,x_T)\in (\R^d_+)^T}$ will be expressed as
\[
(\Delta\bcdot x)_t := \sum_{i=0}^{t-1}\Delta_i(x_1,\dots,x_i) \cdot (x_{i+1}-x_i),\ \ \hbox{for}\ t=1,\dots, T,
\]
where in the right hand side above, $\Delta_i=(\Delta^1_i,\dots,\Delta^d_i)$, $x_i=(x_i^1,\dots,x^d_i)$, and `` $\cdot$ '' denotes the inner product in $\R^d$. We will denote by $\mathcal{H}$ the collection of all trading strategies. 
Also, for any collection $\mathcal{J}\subseteq\mathcal{H}$, we introduce the sub-collections
\begin{equation}\label{defn cS^infty}
\begin{split}
\mathcal{J}^\infty &:=\{\Delta\in\mathcal{J}:\ \Delta_t:(\R^d_+)^t\mapsto\R^d\ \hbox{is bounded},\ \forall\ t=1,\dots,T-1\},\\
\mathcal{J}^\infty_c &:=\{\Delta\in\mathcal{J}^\infty:\ \Delta_t:(\R^d_+)^t\mapsto\R^d\ \hbox{is continuous},\ \forall\ t=1,\dots,T-1\}.
\end{split}
\end{equation}
\end{defn}

In this paper, we require the trading strategies to lie in a sub-collection $\mathcal{S}$ of $\mathcal{H}$, prescribed as below.  

\begin{defn}[Adaptively convex portfolio constraint]\label{defn:cS}
$\mathcal{S}$ is a set of trading strategies such that
\begin{itemize}
\item [(i)] $0\in\mathcal{S}$.
\item [(ii)] For any $\Delta,\Delta'\in\mathcal{S}$ and adapted process $h$ with $h_t\in[0,1]$ for all $t=0,\dots, T-1$, 
\[
\{h_t\Delta_t + (1-h_t)\Delta'_t\}_{t=0}^{T-1}\in \mathcal{S}.
\]
\item [(iii)] For any $\Delta\in\cS^\infty$, $\Q\in\Pi$, and $\eps>0$, there exist a closed set $D_\eps\subseteq(\R^d_+)^T$ and $\Delta^\eps\in\cS^\infty_c$ such that
\[
\Q(D_\eps)>1-\eps\ \ \hbox{and}\ \ \Delta_t = \Delta^\eps_t\ \hbox{on}\ D_\eps\ \hbox{for}\ t=0,\dots,T-1.
\] 
\end{itemize}
\end{defn}

\begin{rem}
In Definition~\ref{defn:cS}, (i) and (ii) are motivated by \cite[Section 9.1]{FS-book-11}, while (iii) is a technical assumption which allows us to perform continuous approximation in Lemma~\ref{lem:S->S_c}. This approximation in particular enables us to establish the superhedging duality in Proposition~\ref{thm:duality}. 
In fact, if we only have conditions (i) and (ii), the duality in Proposition~\ref{thm:duality} may fail in general, as demonstrated in Appendix~\ref{sec:appendix}.
\end{rem}

An explained below, Definition~\ref{defn:cS} (iii) covers a large class of convex constraints.

\begin{rem} [Adapted convex constraints]\label{rem:adapted convex}
Let $\{K_t\}_{t=0}^{T-1}$ be an adapted set-valued process such that for each $t$, $K_t$ maps $(x_1,\dots,x_t)\in(\R^d_+)^t$ to a closed convex set $K_t(x_1,\dots,x_t)\subseteq\R^d$ which contains $0$.
Consider the collection of trading strategies
\[
\cS:=\{\Delta\in\mathcal{H} : \hbox{for each}\ t\ge 0,\ \Delta_t(x_1,\dots,x_t)\in K_t(x_1,\dots,x_t),\ \forall\ (x_1,\dots,x_t)\in(\R^d_+)^t\},
\]  
which satisfies Definition~\ref{defn:cS} (i) and (ii) trivially. To obtain Definition~\ref{defn:cS} (iii), we assume additionally that for each $t\ge 1$, the set-valued map $K_t:(\R^d_+)^t\mapsto 2^{\R^d}$ is lower semicontinuous, in the sense that
\begin{equation}\label{lower semiconti.}
\hbox{for any}\ V\ \hbox{open in}\ \R^d,\ \hbox{the set}\ \{x\in(\R^d_+)^t: K_t(x)\cap V\neq\emptyset\}\ \hbox{is open in}\ (\R^d_+)^t.   
\end{equation}
This is equivalent to the following condition:
\begin{equation}\label{lower semiconti.'}
\begin{split}
&\forall\ y_0\in K_t(x_0)\ \hbox{and}\ \{x_n\}\subset (\R^d_+)^t\ \hbox{such that}\  x_n\to x_0,\\
 &\exists\ y_n\in K_t(x_n)\ \hbox{such that}\ y_n\to y_0;   
\end{split}
\end{equation}
see e.g. \cite[Definition 1.4.2]{AF09} and the remark below it, and \cite[Section 2.5]{GRTZ-book-03}. 

To check (iii), let us fix $\Delta\in\cS^\infty$, $\Q\in\Pi$, and $\eps>0$. For each $t=1,\dots,T-1$, by Lusin's theorem, there exists a closed set $D_{\eps,t}\subset(\R^d_+)^t$ such that $\Delta_t|_{D_{\eps,t}}:D_{\eps,t}\mapsto K_t$ is continuous and ${\Q(D_{\eps,t}\times(\R^d_+)^{T-t})> 1-\frac{\eps}{T-1}}$. Under \eqref{lower semiconti.}, we can apply the theory of continuous selection (see e.g. \cite[Theorem 3.2$''$]{Michael56}) to find a bounded continuous function $\Delta^\eps_t:(\R^d_+)^t\mapsto K_t$ such that $\Delta^\eps_t=\Delta_t$ on $D_{\eps,t}$. Now, set $\Delta^\eps_0=\Delta_0$, and define $D_\eps:=\bigcap_t D_{\eps,t}\times(\R^d_+)^{T-t}$, which is by definition closed in $\R^T_+$. We see that $\Delta^\eps\in\cS^\infty_c$, $\Q(D_\eps)>1-\eps$, and $\Delta_t=\Delta^\eps_t$ on $D_{\eps}$ for all $t=0,\dots,T-1$. This already verifies Definition~\ref{defn:cS} (iii). 

Note that for the special case where $\{K_t\}_{t=1}^T$ is deterministic, $K_t$ is a fixed subset of $\R^d$ for each $t$ and thus \eqref{lower semiconti.'} is trivially satisfied. See Examples~\ref{eg:shortselling} and \ref{eg:drawdown} below for a concrete illustration of deterministic and adapted convex constraints.
\end{rem}


\section{The superhedging duality}\label{sec:duality}
For a path-dependent exotic option with payoff function $\Phi:(\R^d_+)^T\mapsto\R$, we intend to construct a semi-static superhedging portfolio, which consists of three parts: static positions in vanilla calls, static positions in $\{\psi_i\}_{i\in I}$, and a dynamic trading strategy $\Delta\in\cS$. More precisely, consider
\begin{align*}
\mathcal{C}&:=\bigl\{\varphi:\R\mapsto\R:\varphi(x)  =a+\sum_{i=1}^n b_i (x-K_i)^+\ \hbox{with}\ a\in\R,\ n\in\N,\ b_i\in\R,\ K_i> 0\bigr\},
\\
\mathcal{R}^I&:=\{\eta=(\eta_i)_{i\in I}\in\R: \eta_i\neq 0\ \hbox{for finitely many}\ i\hbox{'s}\}.
\end{align*}
We intend to find $u=\{u^n_t\in\mathcal{C}:n=1,\dots,d;\ t=1,\dots,T\}$, $\eta\in\mathcal{R}^{I}$,  and $\Delta\in\cS$ such that 
\begin{equation}\label{pathwise superhedge}
\Psi_{u,\eta,\Delta}(x):=\sum_{t=1}^T\sum_{n=1}^d u^n_t(x^n_t) + \sum_{i\in I}(\eta_i\psi_i - c_i(\eta_i))+ (\Delta\bcdot x)_T \ge \Phi(x)\ \ \forall x\in(\R^d_+)^T,
\end{equation}
where we assume that $0\cdot\infty=0$. In the definition of $\mathcal{C}$, we specifically require $K_i$ to be strictly positive. This is because $K_i=0$ corresponds to trading the risky assets, which is already incorporated into $\Delta\in\cS$ and should not be treated as part of the static positions. 
By setting $\cU$ as the collection of all ${u=\{u^n_t\in\mathcal{C}: n=1,\dots,d,\ t=1,\dots,T\}}$, we define the superhedging price of $\Phi$ by
\begin{equation}\label{def:D}
\begin{split}
D(\Phi)&:=\inf\Bigl\{\sum_{t=1}^T\sum_{n=1}^d\int_{\R_+}u^n_t d\mu^n_t: u\in\cU,\ \eta\in\mathcal{R}^I\ \hbox{and}\ \Delta\in\cS\ \\
&\hspace*{5cm}\hbox{satisfy}\ \Psi_{u,\eta,\Delta}\ge \Phi,\ \forall x\in(\R^d_+)^T\Bigr\}.
\end{split}
\end{equation}
By introducing $\cU_0:=\{u\in\cU: \sum_{t=1}^T\sum_{n=1}^d\int_{\R_+}u^n_td\mu^n_t=0\}$, we may express \eqref{def:D} as
\[
D(\Phi)=\inf\bigl\{a\in\R:  u\in\cU_0, \eta\in\mathcal{R}^I  \hbox{and } \Delta\in\cS\  \hbox{satisfy}\ a+\Psi_{u,\eta,\Delta}\ge \Phi,
 \forall x\in(\R^d_+)^T\bigr\} 
\]
Our goal in this section is to derive a superhedging duality associated with $D(\Phi)$.

\subsection{Upper variation process}

In order to deal with the portfolio constraint $\cS$, we introduce an auxiliary process for each $\Q\in\Pi$, as suggested in \cite[Section 9.2]{FS-book-11}.

\begin{defn}\label{defn:A^Q_t}
Given $\Q\in\Pi$, the upper variation process $A^\Q$ for $\cS$ is defined by 
\[
A^\Q_0:=0,\ \ \hbox{and}\ \ A^\Q_{t+1}-A^\Q_t:= \sideset{}{^\Q}\esssup_{\Delta\in\cS}\left\{\Delta_t \cdot (\E^\Q[S_{t+1}\mid\cF_t]-S_t) \right\}, \ \ t=0,\dots,T-1.
\]
\end{defn}
First, Note that the conditional expectation in the definition of $A^\Q$ is well-defined, thanks to Remark~\ref{rem:E[S] finite}. Next, since Definition \ref{defn:cS} (i)-(ii) implies  $1_{\{|\Delta|\le n\}} \Delta\in \cS$ whenever $\Delta\in\cS$, we may replace $\cS$ by $\cS^\infty$ in the above definition. It follows that
\begin{equation}\label{increment A^Q_t}
A^\Q_{t+1}-A^\Q_t = \sideset{}{^\Q}\esssup_{\Delta\in\cS^\infty}\E^\Q[\Delta_t \cdot (S_{t+1}-S_t)\mid\cF_t], \ \ t=0,\dots,T-1.
\end{equation}
Therefore,
\begin{equation}\label{A^Q_t}
A^\Q_t = \sum_{i=1}^{t} \sideset{}{^\Q} \esssup_{\Delta\in\cS^\infty}\E^\Q[\Delta_{i-1} \cdot (S_{i}-S_{i-1})\mid\cF_{i-1}],\ \  t=1,\dots,T. 
\end{equation}

\begin{lem}\label{lem:E[A_t]} 
For any $\Q\in\Pi$ and $t=1,\dots,T$, we have
\[
\E^\Q\left[\sideset{}{^\Q} \esssup_{\Delta\in\cS^\infty}\E^\Q[\Delta_{t} \cdot (S_{t+1}-S_{t})\mid\cF_{t}]\right] = \sup_{\Delta\in\cS^\infty}\E^\Q[\Delta_{t} \cdot (S_{t+1}-S_t)].
\]
This in particular implies that 
\begin{equation}\label{E[A_t]}
\E^\Q[A^\Q_t] = \sup_{\Delta\in\cS^\infty} \E^\Q[(\Delta\bcdot S)_t].  
\end{equation}
\end{lem}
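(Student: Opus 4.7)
The plan is first to establish the one-step identity via an essential-supremum argument, and then sum up and splice to get \eqref{E[A_t]}. Throughout, I rely on Definition~\ref{defn:cS}(i)--(ii) and the integrability of $S$ under $\Q$ recorded in Remark~\ref{rem:E[S] finite}, which ensures that $\Delta_t\cdot(S_{t+1}-S_t)$ is $\Q$-integrable for every $\Delta\in\cS^\infty$.

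For the one-step equality, the key observation is that the family $\mathcal{F}_t:=\{\E^\Q[\Delta_t\cdot(S_{t+1}-S_t)\mid\cF_t]:\Delta\in\cS^\infty\}$ is directed upward. Indeed, given $\Delta,\Delta'\in\cS^\infty$, I would form $A:=\{\E^\Q[\Delta_t\cdot(S_{t+1}-S_t)\mid\cF_t]\ge\E^\Q[\Delta'_t\cdot(S_{t+1}-S_t)\mid\cF_t]\}\in\cF_t$ together with the adapted $[0,1]$-valued process $h_s:=1_A\cdot 1_{\{s=t\}}$. Definition~\ref{defn:cS}(ii) then gives $\tilde\Delta:=\{h_s\Delta_s+(1-h_s)\Delta'_s\}_{s=0}^{T-1}\in\cS$; boundedness is preserved so $\tilde\Delta\in\cS^\infty$, and at time $t$, $\E^\Q[\tilde\Delta_t\cdot(S_{t+1}-S_t)\mid\cF_t]$ equals the pointwise maximum of the two conditional expectations. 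Combining directedness with $0\in\cS^\infty$ from Definition~\ref{defn:cS}(i), standard essential-supremum theory produces a sequence $\{\Delta^{(n)}\}\subset\cS^\infty$ such that $Y_n:=\E^\Q[\Delta^{(n)}_t\cdot(S_{t+1}-S_t)\mid\cF_t]\uparrow X_t:=\esssup_{\Delta\in\cS^\infty}\E^\Q[\Delta_t\cdot(S_{t+1}-S_t)\mid\cF_t]$ a.s., with $Y_n\ge 0$. Monotone convergence and the tower property then yield $\E^\Q[X_t]=\lim_n\E^\Q[\Delta^{(n)}_t\cdot(S_{t+1}-S_t)]\le\sup_{\Delta\in\cS^\infty}\E^\Q[\Delta_t\cdot(S_{t+1}-S_t)]$, while the reverse inequality follows immediately from the tower property applied to any fixed $\Delta\in\cS^\infty$.

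For \eqref{E[A_t]}, I would take $\E^\Q$ in \eqref{A^Q_t} and apply the one-step equality termwise to obtain $\E^\Q[A^\Q_t]=\sum_{i=1}^t\sup_{\Delta\in\cS^\infty}\E^\Q[\Delta_{i-1}\cdot(S_i-S_{i-1})]$; linearity of $(\Delta\bcdot S)_t$ gives the trivial direction $\sup_\Delta\E^\Q[(\Delta\bcdot S)_t]\le\E^\Q[A^\Q_t]$. For the reverse, fix $\eps>0$ and, for each $i=1,\dots,t$, pick $\Delta^{(i)}\in\cS^\infty$ with $\E^\Q[\Delta^{(i)}_{i-1}\cdot(S_i-S_{i-1})]$ within $\eps/t$ of the $i$-th supremum. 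Starting from $\hat\Delta^{(1)}:=\Delta^{(1)}$, I splice iteratively: at step $i\in\{2,\dots,t\}$, apply Definition~\ref{defn:cS}(ii) to $\hat\Delta^{(i-1)}$ and $\Delta^{(i)}$ with deterministic weights $h_s:=1_{\{s\le i-2\}}$, producing $\hat\Delta^{(i)}\in\cS^\infty$ whose coordinates at times $0,\dots,i-2$ are inherited from $\hat\Delta^{(i-1)}$ while $\hat\Delta^{(i)}_{i-1}=\Delta^{(i)}_{i-1}$. Induction then shows $\Delta^\ast:=\hat\Delta^{(t)}\in\cS^\infty$ satisfies $\Delta^\ast_{s}=\Delta^{(s+1)}_{s}$ for $s=0,\dots,t-1$, so $\E^\Q[(\Delta^\ast\bcdot S)_t]$ matches the sum of the chosen $(\eps/t)$-optimal values up to $\eps$.

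The main obstacle is the splicing construction: one must verify carefully that the indicator $h$ in the directedness argument implements the desired pointwise maximum at time $t$ without corrupting the strategy at other times, and that the iterated adaptive convex combinations genuinely patch together coordinate-by-coordinate choices while preserving both $\cS$-membership and boundedness; the rest of the argument is routine application of monotone convergence and the tower property.
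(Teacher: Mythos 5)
Your proposal is correct and takes essentially the same route as the paper: you establish upward directedness of the family $\{\E^\Q[\Delta_t\cdot(S_{t+1}-S_t)\mid\cF_t]:\Delta\in\cS^\infty\}$ via exactly the adapted convex combination permitted by Definition~\ref{defn:cS}(ii), then interchange expectation and essential supremum (the paper simply cites the standard result in F\"ollmer--Schied, whereas you reprove it with an increasing sequence and monotone convergence), and finally obtain \eqref{E[A_t]} by splicing near-optimal one-step strategies — the step the paper compresses into ``the last equality follows from Definition~\ref{defn:cS}(ii)''. The only minor omission is the degenerate case where one of the one-step suprema is infinite, which your $\eps/t$-selection handles after the obvious truncation since all suprema are nonnegative ($0\in\cS^\infty$).
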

\begin{proof}
First, note that  $\{\E^\Q[\Delta_{t} \cdot (S_{t+1}-S_{t})\mid\cF_{t}]:\Delta\in\cS^\infty\}$ is directed upwards. Indeed, given $\Delta,\Delta'\in\cS^\infty$, define $\tilde{\Delta}_s:=\Delta_s 1_{\{s\neq t\}}+(\Delta_s 1_A+\Delta'_s 1_{A^c})1_{\{s=t\}}$, where
\[
A:=\{\E^\Q[\Delta_t \cdot (S_{t+1}-S_t)\mid\cF_t]\ge \E^\Q[\Delta'_{t} \cdot (S_{t+1}-S_t)\mid\cF_t]\}\in\cF_t.
\] 
Then, $\tilde{\Delta}\in\cS^\infty$ by Definition~\ref{defn:cS} (ii), and 
\[
E^\Q[\tilde{\Delta}_t(S_{t+1}-S_t)\mid\cF_t]=\max\{\E^\Q[\Delta_t \cdot (S_{t+1}-S_t)\mid\cF_t], \E^\Q[\Delta'_t \cdot (S_{t+1}-S_t)\mid\cF_t]\}.
\]
We can therefore apply \cite[Theorem A.33, pg. 496]{FS-book-11} and get 
\begin{align*}
\E^\Q \left[\sideset{}{^\Q} \esssup_{\Delta\in\cS^\infty}\E^\Q [\Delta_{t} \cdot (S_{t+1}-S_{t})\mid\cF_{t}] \right] = \sup_{\Delta\in \cS^\infty}\E^\Q[\Delta_{t} \cdot (S_{t+1}-S_t)].
\end{align*}
Now, in view of \eqref{A^Q_t}, we have
\[
\E^\Q[A^\Q_t]=\sum_{i=1}^{t}\sup_{\Delta\in \cS^\infty}\E^\Q[\Delta_{i-1} \cdot (S_{i}-S_{i-1})]=\sup_{\Delta\in \cS^\infty}\E^\Q[(\Delta\bcdot S)_t],
\]
where the last equality follows from Definition~\ref{defn:cS} (ii). 
\qed\end{proof}

On strength of Definition~\ref{defn:cS} (iii), we can actually replace $\cS^\infty$ by $\cS^\infty_c$ in \eqref{E[A_t]}.

\begin{lem}\label{lem:S->S_c}
For each $t=1,\dots,T$, 
\begin{equation*}\label{S->S_c}
\E^\Q[A^\Q_t] = \sup_{\Delta\in\cS^\infty_c} \E^\Q[(\Delta\bcdot S)_t].  
\end{equation*}
\end{lem}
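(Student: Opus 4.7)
The plan is to prove the two inequalities separately. The direction $\E^\Q[A^\Q_t] \ge \sup_{\Delta \in \cS^\infty_c} \E^\Q[(\Delta \bcdot S)_t]$ is immediate from Lemma~\ref{lem:E[A_t]} combined with the inclusion $\cS^\infty_c \subseteq \cS^\infty$, so the real task is the reverse inequality.

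For the reverse direction, I would fix an arbitrary $\Delta \in \cS^\infty$ with bound $M := \sup_{0 \le s \le T-1} \sup_x |\Delta_s(x)| < \infty$ and, given $\eps > 0$, invoke Definition~\ref{defn:cS}(iii) to obtain $\Delta^\eps \in \cS^\infty_c$ and a closed set $D_\eps \subseteq (\R^d_+)^T$ with $\Q(D_\eps) > 1 - \eps$ and $\Delta_s = \Delta^\eps_s$ on $D_\eps$ for all $s = 0, \dots, T-1$. The goal is then to show that $\E^\Q[(\Delta^\eps \bcdot S)_t]$ converges to $\E^\Q[(\Delta \bcdot S)_t]$, after a mild modification of $\Delta^\eps$.

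The main obstacle is that the bound of $\Delta^\eps$ supplied by Definition~\ref{defn:cS}(iii) may blow up as $\eps \downarrow 0$, which would spoil any dominated-convergence argument. I would get around this by truncating $\Delta^\eps$ inside $\cS^\infty_c$ itself: set
\[
h^\eps_s(x_1,\dots,x_s) := \frac{M}{|\Delta^\eps_s(x_1,\dots,x_s)| \vee M},
\]
a continuous, $[0,1]$-valued adapted process. Writing $\tilde\Delta^\eps_s := h^\eps_s \Delta^\eps_s = h^\eps_s \Delta^\eps_s + (1-h^\eps_s)\cdot 0$, Definition~\ref{defn:cS}(i)--(ii) ensures $\tilde\Delta^\eps \in \cS^\infty_c$; moreover $\tilde\Delta^\eps$ is bounded by $M$ everywhere, and still coincides with $\Delta$ on $D_\eps$ because $|\Delta^\eps_s|=|\Delta_s|\le M$ there forces $h^\eps_s\equiv 1$.

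With the uniform bound in hand, the remainder is routine. Since $\tilde\Delta^\eps = \Delta$ on $D_\eps$, the difference of the stochastic integrals vanishes on $D_\eps$, so
\[
\bigl| \E^\Q[(\Delta \bcdot S)_t] - \E^\Q[(\tilde\Delta^\eps \bcdot S)_t] \bigr| \le \E^\Q\Bigl[ 1_{D_\eps^c}\cdot 2M \sum_{i=0}^{t-1} |S_{i+1}-S_i| \Bigr].
\]
The dominating function is $\Q$-integrable by Remark~\ref{rem:E[S] finite}, so the right-hand side tends to zero as $\Q(D_\eps^c) < \eps \downarrow 0$ by the absolute continuity of the Lebesgue integral. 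Letting $\eps \downarrow 0$ yields $\sup_{\Delta' \in \cS^\infty_c} \E^\Q[(\Delta' \bcdot S)_t] \ge \E^\Q[(\Delta \bcdot S)_t]$; taking the supremum over $\Delta \in \cS^\infty$ and invoking Lemma~\ref{lem:E[A_t]} finishes the proof.
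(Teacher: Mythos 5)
Your proposal is correct and follows essentially the same route as the paper: reduce to Lemma~\ref{lem:E[A_t]}, approximate a fixed $\Delta\in\cS^\infty$ by $\Delta^\eps\in\cS^\infty_c$ via Definition~\ref{defn:cS}(iii), and conclude by dominating $\bigl|\bigl((\Delta-\Delta^\eps)\bcdot S\bigr)_t\bigr|1_{D_\eps^c}$ by the $\Q$-integrable function $2M\sum_{i}|S_{i+1}-S_i|$ (Remark~\ref{rem:E[S] finite}). The only difference is that the paper simply takes $\Delta^\eps$ with $|\Delta^\eps_t|\le M$, while you justify this uniform bound explicitly by rescaling $\Delta^\eps$ with the continuous adapted $[0,1]$-valued process $h^\eps$ and invoking Definition~\ref{defn:cS}(i)--(ii), a legitimate (and slightly more careful) patching of that step.
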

\begin{proof}
In view of \eqref{E[A_t]}, it suffices to show that, for each fixed $\Delta\in\cS^\infty$, there exists $\{\Delta^\eps\}_\eps\subset \cS^\infty_c$ such that ${\E^\Q[(\Delta\bcdot S)_T]=\lim_{\eps\to 0}\E^\Q[(\Delta^\eps\bcdot S)_T]}$. Take $M>0$ such that $|\Delta_t|\le M$ for all $t=1,\dots,T-1$. By Definition~\ref{defn:cS} (iii), for any $\eps>0$, there exist $D_\eps$ closed in $\R^T_+$ and $\Delta^\eps\in\cS^\infty_c$ such that $\Q(D_\eps)> 1-\eps$, $\Delta^\eps=\Delta$ on $D_\eps$, and $|\Delta^\eps_t|\le M$ for all $t=1,\dots,T-1$. It follows that
\begin{align*}
\left|\E^\Q[(\Delta\bcdot S)_T]-\E^\Q[(\Delta^\eps\bcdot S)_T]\right|&\le \E^\Q\left[\left|\left((\Delta-\Delta^\eps)\bcdot S\right)_T\right| 1_{D^c_\eps}\right]\\
&\le \E^\Q\Bigl[2M\sum_{t=0}^{T-1}|S_{t+1}-S_t| 1_{D^c_\eps}\Bigr].
\end{align*}
Thanks to Remark~\ref{rem:E[S] finite}, the random variable $2M\sum_{t=0}^{T-1}|S_{t+1}-S_t|$ is $\Q$-integrable. We can then conclude from the above inequality that $\E^\Q[(\Delta^\eps\bcdot S)_T]\to\E^\Q[(\Delta\bcdot S)_T]$.
\qed\end{proof}

\begin{defn}\label{def:Q_S}
Let $\cQ_\cS$ be the collection of $\Q\in\Pi$ such that $\E^\Q[A^\Q_T]<\infty$.
\end{defn}

\begin{rem}\label{rem:uniform bdd}
If strategies in $\cS$ are uniformly bounded, i.e. $\exists$ $c>0$ such that $|\Delta|\le c$ for all $\Delta\in\cS$, then we deduce from \eqref{E[A_t]} and Remark~\ref{rem:E[S] finite} that $\cQ_\cS=\Pi$. 
\end{rem}

\begin{lem}\label{lem:local supermart.}
Given any $\Delta\in\cS$, the process $(\Delta\bcdot S)_t-A^\Q_t$ is a local $\Q$-supermartingale, for all $\Q\in\cQ_\cS$.
\end{lem}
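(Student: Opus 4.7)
The plan is to localize via stopping times that truncate $\Delta$. Define
\[\tau_n:=\inf\{t\in\{0,\dots,T-1\}:|\Delta_t|>n\}\wedge T,\qquad n\ge 1,\]
with the convention $\inf\emptyset:=T$. Each $\Delta_t$ is $\cF_t$-measurable and pointwise finite, and $T$ is finite, so $\tau_n$ is an $\F$-stopping time and $\tau_n\nearrow T$ pointwise (indeed, $\tau_n(\omega)=T$ eventually for each $\omega$). Then I would apply Definition~\ref{defn:cS}(ii) to $\Delta$ and $\Delta'\equiv 0\in\cS$ (available via (i)), with the adapted $[0,1]$-valued weight $h_t:=1_{\{\tau_n>t\}}$ (which is $\cF_t$-measurable because $\tau_n$ is a stopping time). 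This yields the truncated strategy $\tilde\Delta^n_t:=h_t\Delta_t\in\cS$, which satisfies $|\tilde\Delta^n_t|\le n$ (hence $\tilde\Delta^n\in\cS^\infty$) and $(\Delta\bcdot S)_{t\wedge\tau_n}=(\tilde\Delta^n\bcdot S)_t$. Setting $M_t:=(\Delta\bcdot S)_t-A^\Q_t$, this gives $M^{\tau_n}_t=(\tilde\Delta^n\bcdot S)_t-A^\Q_{t\wedge\tau_n}$.

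Integrability of $M^{\tau_n}_t$ then follows at once: the first piece lies in $L^1(\Q)$ by the uniform bound $n$ together with Remark~\ref{rem:E[S] finite}, while the second satisfies $A^\Q_{t\wedge\tau_n}\le A^\Q_T\in L^1(\Q)$ since $\Q\in\cQ_\cS$ and $A^\Q$ is non-decreasing. The monotonicity of $A^\Q$ comes from the fact that every increment $A^\Q_{t+1}-A^\Q_t\ge 0$, as seen by plugging $\Delta'\equiv 0\in\cS$ into Definition~\ref{defn:A^Q_t}.

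For the supermartingale inequality, the stopping-time identity gives $M^{\tau_n}_{t+1}-M^{\tau_n}_t=1_{\{\tau_n>t\}}(M_{t+1}-M_t)$, and $1_{\{\tau_n>t\}}\Delta_t=\tilde\Delta^n_t$ is $\cF_t$-measurable and bounded by $n$. A direct computation therefore yields
\begin{align*}
\E^\Q\!\left[M^{\tau_n}_{t+1}-M^{\tau_n}_t\mid\cF_t\right]&=1_{\{\tau_n>t\}}\bigl(\Delta_t\cdot(\E^\Q[S_{t+1}\mid\cF_t]-S_t)-(A^\Q_{t+1}-A^\Q_t)\bigr)\le 0,
\end{align*}
where the final inequality is just the pointwise bound $\Delta_t\cdot(\E^\Q[S_{t+1}\mid\cF_t]-S_t)\le A^\Q_{t+1}-A^\Q_t$ supplied by the essential-supremum definition of $A^\Q_{t+1}-A^\Q_t$, applied to our fixed $\Delta\in\cS$ itself.

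The only delicate point is verifying $\tilde\Delta^n\in\cS$: this is exactly what Definition~\ref{defn:cS}(ii) is designed for, and it is essential since the original $\Delta$ need not be integrable, which is precisely why the statement is only a \emph{local} supermartingale property. Once $\tilde\Delta^n\in\cS^\infty$ is secured, the rest is routine stopping-time manipulation combined with the non-integrable form of Definition~\ref{defn:A^Q_t}.
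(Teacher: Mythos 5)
Your proof is correct and follows essentially the same route as the paper: localize by stopping when $|\Delta_t|$ exceeds $n$ (the paper's $\tau_n$ additionally caps $\E^\Q[|S_{t+1}-S_t|\mid\cF_t]$, which is not needed for this lemma), check integrability of the stopped process via Remark~\ref{rem:E[S] finite} and $0\le A^\Q_t\le A^\Q_T\in L^1(\Q)$, and obtain the supermartingale inequality from the esssup definition of the increments of $A^\Q$ applied to the fixed $\Delta$. Your explicit verification that the truncated strategy stays in $\cS$ via Definition~\ref{defn:cS}(ii) is the same observation the paper records just before \eqref{increment A^Q_t}.
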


\begin{proof}
This result follows from the argument in \cite[Proposition 9.18]{FS-book-11}. We present the proof here for completeness. Consider the stopping time
\[
\tau_n := \inf\{t\ge 0 : |\Delta_t|>n\ \hbox{or}\ \E^\Q[|S_{t+1}-S_t|\mid\cF_t]>n\}\wedge T,
\]
where the conditional expectation is well-defined thanks to Remark~\ref{rem:E[S] finite}. Given a process $V_t$, let us denote by $V^{n}_t$ the stopped process $V_{t\wedge\tau^n}$. Observe that 
\[
|(\Delta\bcdot S)^{n}_{t+1}-(\Delta\bcdot S)^{n}_t|\le 1_{\{\tau_n\ge t+1\}}|\Delta_{t}| \ |S_{t+1}-S_t|.
\]
Thanks again to Remark~\ref{rem:E[S] finite}, this implies that $(\Delta\bcdot S)^{n}_t$ is $\Q$-integrable. Moreover,
\[
\E^\Q[(\Delta\bcdot S)^{n}_{t+1}-(\Delta\bcdot S)^{n}_t\mid\cF_t]=1_{\{\tau_n\ge t+1\}}\Delta_{t}\cdot(\E^\Q[S_{t+1}\mid\cF_t]-S_t)\le (A^\Q)^{n}_{t+1}-(A^\Q)^{n}_t,
\]
where the inequality follows from \eqref{increment A^Q_t}. Since $\E^\Q[A^\Q_T]<\infty$, the above inequality shows that $(\Delta\bcdot S)^{n}_t-(A^\Q)^{n}_t$ is a $\Q$-supermartingale.
\qed\end{proof}

With some integrability at the terminal time $T$, the local supermartingale in the above result becomes a true supermartingale.

\begin{lem}\label{lem:supermart.}
Fix $\Delta\in\cS$ and $\Q\in\cQ_\cS$. If $\exists$ $\Q$-integrable random variable $\varphi$ such that $(\Delta\bcdot S)_T\ge \varphi$ $\Q$-a.s., then $(\Delta\bcdot S)_t-A^\Q_t \ge \E^\Q[\varphi-A^\Q_T\mid\cF_{t}]$ $\Q\hbox{-a.s.}$ for all $t=1,\dots,T$. 
This in particular implies that $(\Delta\bcdot S)_t-A^\Q_t$ is a true $\Q$-supermartingale.
\end{lem}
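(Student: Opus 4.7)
The plan is to prove the stronger pointwise bound $X_t := (\Delta\bcdot S)_t - A^\Q_t \ge M_t$ $\Q$-a.s.\ for every $t = 0, 1, \ldots, T$, where $M_t := \E^\Q[\varphi - A^\Q_T\mid\cF_t]$. Since $\varphi$ is $\Q$-integrable by hypothesis and $\E^\Q[A^\Q_T] < \infty$ because $\Q \in \cQ_\cS$, $M$ is a well-defined integrable $\Q$-martingale. The true $\Q$-supermartingale assertion will then fall out as a direct consequence of this pointwise bound. The argument proceeds by backward induction on $t$, leveraging the integrable $\Q$-supermartingales $X^{\tau_n}$ from Lemma~\ref{lem:local supermart.} together with the fact that $\tau_n \uparrow T$ $\Q$-a.s.

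The base case $t = T$ is the hypothesis $(\Delta\bcdot S)_T \ge \varphi$, rewritten as $X_T \ge \varphi - A^\Q_T = M_T$. For the inductive step from $t+1$ to $t$, assume $X_{t+1} \ge M_{t+1}$ $\Q$-a.s. Then $X_{t+1}^- \le M_{t+1}^-$ is $\Q$-integrable, so $\E^\Q[X_{t+1}\mid\cF_t]$ is well-defined in $(-\infty, +\infty]$. From Lemma~\ref{lem:local supermart.}, $\E^\Q[X^{\tau_n}_{t+1}\mid\cF_t] \le X^{\tau_n}_t$. Since $\{\tau_n \ge t+1\} \in \cF_t$, decomposing the stopped values
\[
X^{\tau_n}_s = X_s\,1_{\{\tau_n\ge t+1\}} + X_{\tau_n}\,1_{\{\tau_n\le t\}} \quad \hbox{for } s\in\{t,t+1\},
\]
and cancelling the common $\cF_t$-measurable term $X_{\tau_n}\,1_{\{\tau_n\le t\}}$, the supermartingale inequality reduces to $1_{\{\tau_n\ge t+1\}}\E^\Q[X_{t+1}\mid\cF_t] \le 1_{\{\tau_n\ge t+1\}}X_t$. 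Since $\bigcup_n\{\tau_n\ge t+1\} = \Omega$ $\Q$-a.s., this yields $\E^\Q[X_{t+1}\mid\cF_t] \le X_t$ $\Q$-a.s. Combining with the martingale identity $\E^\Q[M_{t+1}\mid\cF_t] = M_t$ and the inductive hypothesis produces $X_t \ge M_t$, closing the induction.

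Once $X_t \ge M_t$ $\Q$-a.s.\ holds for every $t$, the bound $X_t^- \le M_t^-$ forces $X_t^-\in L^1(\Q)$; together with $\E^\Q[X_t] \le 0$ (obtained via Fatou applied to $\E^\Q[X^{\tau_n}_t] \le 0$, using the uniformly integrable lower bound coming from the closed martingale $\{M^{\tau_n}_t\}_n$) this gives $X_t \in L^1(\Q)$. The one-step supermartingale inequality already obtained in the inductive step then upgrades $X$ to a true $\Q$-supermartingale. The central technical obstacle navigated throughout is that $\Delta_t$ need not be $\Q$-integrable, which would make direct manipulation of $\E^\Q[\Delta_t\cdot(S_{t+1}-S_t)\mid\cF_t]$ delicate; the localisation by the $\tau_n$ of Lemma~\ref{lem:local supermart.} neatly bypasses this difficulty by letting us work with bounded increments on the $\cF_t$-measurable exhausting sets $\{\tau_n\ge t+1\}$.
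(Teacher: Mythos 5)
Your proof is correct and takes essentially the same route as the paper: backward induction on $t$, localizing with the stopping times $\tau_n$ from Lemma~\ref{lem:local supermart.} and passing to the limit on the exhausting $\cF_t$-measurable events $\{\tau_n\ge t+1\}$ to get the one-step inequality, then combining with the martingale $M_t=\E^\Q[\varphi-A^\Q_T\mid\cF_t]$. The only minor deviation is the final step: the paper concludes the true supermartingale property by invoking the standard fact that a local supermartingale bounded below by a martingale is a true supermartingale, whereas you rederive the needed integrability by hand via $X_t^-\le M_t^-$ and a Fatou/uniform-integrability argument, which is equally valid.
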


\begin{proof}
Using the same notation as in the proof of Lemma~\ref{lem:local supermart.}, we know that there exist a sequence $\{\tau_n\}$ of stopping times such that $\tau^n\uparrow\infty$ $\Q$-a.s. and the stopped process $(\Delta\bcdot S)^n_t-(A^\Q)^n_t$ is a $\Q$-supermartingale, for each $n\in\N$.  
We will prove this lemma by induction. Given any $t=2,\dots,T$ such that ${(\Delta\bcdot S)_t-A^\Q_t\ge \E^\Q[\varphi-A^\Q_T\mid\cF_{t}]}$, we obtain from the supermartingale property that 
\begin{align*}
0&\ge 1_{\{t-1<\tau_n\}}\E^\Q\left[\left((\Delta\bcdot S)^n_t-(A^\Q)^n_t\right)-\left((\Delta\bcdot S)^n_{t-1}-(A^\Q)^n_{t-1}\right) \mid \cF_{t-1}\right]\\
&= \E^\Q\left[1_{\{t-1<\tau_n\}}\left\{ \left((\Delta\bcdot S)_t-A^\Q_t\right)-\left((\Delta\bcdot S)_{t-1}-A^\Q_{t-1}\right) \right\} \mid \cF_{t-1}\right]\\
&\ge \E^\Q\left[1_{\{t-1<\tau_n\}}\left\{ \E^\Q[\varphi-A^\Q_T\mid\cF_{t}]-\left((\Delta\bcdot S)_{t-1}-A^\Q_{t-1}\right) \right\} \mid \cF_{t-1}\right]\\
&= 1_{\{t-1<\tau_n\}} \E^\Q[\varphi-A^\Q_T\mid\cF_{t-1}]- 1_{\{t-1<\tau_n\}}\left((\Delta\bcdot S)_{t-1}-A^\Q_{t-1}\right).
\end{align*}
Sending $n\to\infty$, we conclude that $(\Delta\bcdot S)_{t-1}-A^\Q_{t-1}\ge \E^\Q[\varphi-A^\Q_T\mid\cF_{t-1}]$ $\Q$-a.s. Now, by Lemma~\ref{lem:local supermart.}, $(\Delta\bcdot S)_t-A^\Q_t$ is a local $\Q$-supermartingale bounded from below by a martingale, and thus a true $\Q$-supermartingale (see e.g. \cite[Proposition 9.6]{FS-book-11}).
\qed\end{proof}


\subsection{Derivation of the superhedging duality} In view of the static holdings of $\{\psi_i\}_{i\in I}$ in \eqref{pathwise superhedge}, we introduce 
\begin{equation}\label{cE^Q_I}
\cE^\Q_I := \sup_{\eta\in \mathcal{R}^I}\sum_{i\in I}(\eta_i\E^\Q[\psi_i]-c_i(\eta_i))\ge 0\ \ \ \hbox{for}\ \Q\in\Pi.
\end{equation}
Set $F(I):=\{J\subseteq I: J\ \hbox{is a finite set}\}$. We observe that
\begin{equation}\label{cE^Q_I=...}
\cE^\Q_I = \sup_{J\in F(I)}\sup_{\eta\in\R^{|J|}}\sum_{i\in J}(\eta_i\E^\Q[\psi_i]-c_i(\eta_i))=\sup_{J\in F(I)}\sum_{i\in J}\sup_{\eta\in\R}(\eta\E^\Q[\psi_i]-c_i(\eta)).
\end{equation}
Consider the collection of measures
\begin{equation}\label{Q_S,I}
\cQ_{\cS,I}:=\{\Q\in\cQ_\cS : \cE^\Q_I<\infty\}.
\end{equation}

\begin{rem}\label{rem:cE^Q_I finite}
Fix $\Q\in\Pi$. For any $i\in I$, suppose the following two conditions hold. 
\begin{itemize}
\item [(i)] $c_i(\eta)=\infty$ for some $\eta>0$ or $ \E^\Q[\psi_i] < c'_i(\infty)$,
\item [(ii)] $c_i(\eta)=\infty$ for some $\eta<0$ or $ \E^\Q[\psi_i] > c'_i(-\infty)$.
\end{itemize}
By the convexity of $\eta\mapsto c_i(\eta)$, we have $\sup_{\eta\in\R}(\eta\E^\Q[\psi_i]-c_i(\eta))<\infty$.
Thus, in view of \eqref{cE^Q_I=...}, if $I$ is a finite set, and (i)-(ii) above are satisfied for all $i\in I$, then $\cE^\Q_I<\infty$.
\end{rem}

We will work on deriving a duality between $D(\Phi)$ defined in \eqref{def:D} and
\begin{equation}\label{def:P}
P(\Phi) := \sup_{\Q\in\cQ_{\cS,I}}\{\E^\Q[\Phi-A^\Q_T]-\cE^\Q_I\}.
\end{equation}
The following minimax result, taken from \cite[Corollary 2]{Terkelsen72}, will be useful.

\begin{lem}\label{lem:minimax}
Let $X$ be a compact convex subset of a topological vector space, $Y$ be a convex subset of a vector space, and $f:X\times Y\mapsto\R$ be a function satisfying
\begin{itemize}
\item [(i)] For each $x\in X$, the map $y\mapsto f(x,y)$ is convex on $Y$.
\item [(ii)] For each $y\in Y$, the map $x\mapsto f(x,y)$ is upper semicontinuous and concave on $X$.
\end{itemize} 
Then, 
$\inf_{y\in Y} \sup_{x\in X} f(x,y) = \sup_{x\in X} \inf_{y\in Y} f(x,y)$.
\end{lem}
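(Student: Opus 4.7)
The plan is to prove the nontrivial inequality $\inf_{y\in Y}\sup_{x\in X}f(x,y)\le\sup_{x\in X}\inf_{y\in Y}f(x,y)$; the reverse is automatic. Writing $\beta:=\inf_{y\in Y}\sup_{x\in X}f(x,y)$, I will assume $\beta\in\R$ (the case $\beta=-\infty$ is trivial, and $\beta=+\infty$ reduces to running the argument below for arbitrary $\alpha$). My strategy is to combine a finite-dimensional separation argument, which handles finite subsets of $Y$, with a finite-intersection-property argument that uses the compactness of $X$ and the upper semicontinuity of $x\mapsto f(x,y)$ to pass to the whole index set $Y$.

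For the finite-dimensional step, fix $\alpha<\beta$ and a finite subset $\{y_1,\dots,y_n\}\subseteq Y$. I claim there exists $x_0\in X$ with $f(x_0,y_i)\ge\alpha$ for each $i$. Consider
\[
C:=\bigl\{t\in\R^n:\exists\,x\in X\text{ with }t_i\le f(x,y_i)\text{ for all }i\bigr\}.
\]
The convexity of $X$ together with the concavity of $x\mapsto f(x,y_i)$ makes $C$ convex, and $C$ is downward closed by construction. Suppose, for contradiction, $(\alpha,\dots,\alpha)\notin C$. Separation in $\R^n$ yields $\lambda=(\lambda_1,\dots,\lambda_n)\neq 0$ with $\lambda\cdot t\le\alpha\sum_i\lambda_i$ for every $t\in C$; since $C$ is downward closed I must have $\lambda_i\ge 0$ for all $i$, and after normalization $\sum_i\lambda_i=1$. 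As $(f(x,y_1),\dots,f(x,y_n))\in C$ for every $x\in X$, this gives $\sum_i\lambda_i f(x,y_i)\le\alpha$, and the convexity of $y\mapsto f(x,y)$ then yields
\[
f\Bigl(x,\sum_i\lambda_i y_i\Bigr)\le\sum_i\lambda_i f(x,y_i)\le\alpha\quad\text{for every }x\in X.
\]
Setting $\bar y:=\sum_i\lambda_i y_i\in Y$ (by convexity of $Y$) gives $\sup_{x\in X}f(x,\bar y)\le\alpha<\beta$, contradicting the definition of $\beta$.

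For the compactness step, set $K_y:=\{x\in X:f(x,y)\ge\alpha\}$ for each $y\in Y$, which is closed in $X$ by the upper semicontinuity of $x\mapsto f(x,y)$. The finite-dimensional step shows that every finite subfamily of $\{K_y\}_{y\in Y}$ has nonempty intersection, so the compactness of $X$ and the finite intersection property produce some $x^*\in\bigcap_{y\in Y}K_y$. Then $\inf_{y\in Y}f(x^*,y)\ge\alpha$, whence $\sup_{x\in X}\inf_{y\in Y}f(x,y)\ge\alpha$, and letting $\alpha\uparrow\beta$ completes the proof.

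The subtlest point, and the main obstacle, is extracting nonnegative multipliers $\lambda_i$ from the separation: this relies on the downward closedness of $C$, since sending some coordinate $t_j$ of a point in $C$ to $-\infty$ keeps the point in $C$, so any negative $\lambda_j$ would violate the separation inequality. It is precisely this feature that allows the argument to proceed without any topological hypothesis on $Y$, leaving all the topological work to the compactness of $X$ and the upper semicontinuity of $f(\cdot,y)$.
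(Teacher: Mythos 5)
Your proof is correct: the separation step is valid (a point disjoint from a convex set in $\R^n$ can always be weakly separated, no closedness needed), the downward closedness of $C$ does force $\lambda\ge 0$, convexity of $X$ plus concavity of $f(\cdot,y_i)$ gives convexity of $C$, convexity of $Y$ and of $f(x,\cdot)$ turns the multipliers into the contradicting point $\bar y$, and upper semicontinuity plus compactness of $X$ correctly upgrade the finite case to the full intersection via the finite intersection property. The only (harmless) loose ends are the degenerate cases $X=\emptyset$ or $Y=\emptyset$ and the implicit nonemptiness of $C$, which nobody would count against you. The comparison with the paper is a bit unusual: the paper does not prove this lemma at all, but imports it as Corollary 2 of Terkelsen's minimax paper, so your contribution is a self-contained proof where the paper has only a citation. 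Your argument is the classical Kneser--Fan route (finite subfamilies of $Y$ handled by finite-dimensional separation, then compactness and upper semicontinuity of $f(\cdot,y)$ to glue), which buys transparency and uses exactly the hypotheses as stated; Terkelsen's result, by contrast, is proved under weaker ``convexlike/concavelike'' conditions, so the cited statement is somewhat more general than what your separation argument needs, though the lemma as formulated here assumes genuine convexity and concavity, so nothing is lost for the paper's purposes.
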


Let us first derive a superhedging duality for the case where $I=\emptyset$, i.e. no option is tradable at time $0$ except vanilla calls. The pathwise relation in \eqref{pathwise superhedge}  reduces to
\[
\Psi_{u,\Delta}(x):= \sum_{t=1}^T\sum_{n=1}^d u^n_t(x^n_t) + (\Delta\bcdot x)_T\ge \Phi(x),\ \forall x\in(\R^d_+)^T.
\]
By the convention that the sum over an empty set is $0$, $D(\Phi)$ in \eqref{def:D} becomes
\begin{equation*}
\begin{split}
D_\emptyset(\Phi) & := \inf\Bigl\{\sum_{t=1}^T\sum_{n=1}^d\int_{\R_+}u^n_t d\mu^n_t: u\in\cU\ \hbox{and}\  \Delta\in\cS\ \hbox{such that}\ \Psi_{u,\Delta}(x)\ge \Phi(x),\\
&\hspace*{8.5cm}  \forall x\in(\R^d_+)^T\Bigr\}.\label{D I=empty}
\end{split}
\end{equation*}
Also, since $I=\emptyset$ implies that $F(I)=\{\emptyset\}$, we deduce from \eqref{cE^Q_I=...} that $\cE^\Q_I=0$, as it is a summation over an empty set. It follows that $P(\Phi)$ in \eqref{def:P} reduces to
\begin{equation}
P_\emptyset(\Phi):=\sup_{\Q\in\cQ_\cS}\E^\Q[\Phi-A^\Q_T]\label{P I=empty}.
\end{equation}

\begin{prop}\label{thm:duality}
Let $I=\emptyset$. Suppose $\Phi:(\R^d_+)^T\mapsto\R$ is measurable and $\exists$ $K>0$ such that
\begin{equation}\label{Phi linear grow}
\Phi(x_1,\dots,x_T)\le K \left(1+\sum_{t=1}^T\sum_{n=1}^d x^n_t\right),\ \ \hbox{for all}\ x\in(\R^d_+)^T.
\end{equation}
\begin{itemize}
\item [(i)] We have $P_\emptyset(\Phi)\le D_\emptyset(\Phi)$.
\item [(ii)] If $\Phi$ is upper semicontinuous, then $P_\emptyset(\Phi)=D_\emptyset(\Phi)$.
\item [(iii)] If $\Phi$ is upper semicontinuous and $\cQ_\cS\neq\emptyset$, there exists $\Q^*\in\cQ_\cS$ such that 
$P_\emptyset(\Phi)=\E^{\Q^*}[\Phi-A^{\Q^*}_T]$.
\end{itemize}
\end{prop}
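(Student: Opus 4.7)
For (i), take any $(u,\Delta)$ satisfying $\Psi_{u,\Delta}\ge\Phi$ pointwise and any $\Q\in\cQ_\cS$. The growth bound \eqref{Phi linear grow} together with $u^n_t\in\mathcal{C}$ having linear growth and $\mu^n_t$ having finite first moment (Remark~\ref{rem:E[S] finite}) ensures $\E^\Q[\Phi^+]<\infty$ and $\sum_{t,n}\int|u^n_t|\,d\mu^n_t<\infty$. Either $\E^\Q[\Phi]=-\infty$ (the weak bound is then trivial) or $\Phi$ is $\Q$-integrable, in which case $(\Delta\bcdot S)_T\ge\Phi-\sum u^n_t(S^n_t)$ is bounded below by a $\Q$-integrable random variable, so Lemma~\ref{lem:supermart.} yields $\E^\Q[(\Delta\bcdot S)_T]\le\E^\Q[A^\Q_T]$. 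Taking expectation in $\Psi_{u,\Delta}\ge\Phi$ and rearranging gives $\sum_{t,n}\int u^n_t\,d\mu^n_t\ge\E^\Q[\Phi-A^\Q_T]$; supremum over $\Q$ and infimum over $(u,\Delta)$ then produce (i).

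For (ii), my plan is Kantorovich duality inside, minimax outside. First I would reduce to $\Phi$ bounded u.s.c.\ by truncating $\Phi_N:=\Phi\vee(-N)$, which remains u.s.c.\ with the same linear growth from above; using the growth bound plus dominated convergence under the uniform first-moment control on $\Pi$ yields $D_\emptyset(\Phi_N)\downarrow D_\emptyset(\Phi)$ and $P_\emptyset(\Phi_N)\downarrow P_\emptyset(\Phi)$. For bounded u.s.c.\ $\Phi$, restrict the inf in $D_\emptyset(\Phi)$ to $\cS^\infty_c$ (this can only enlarge the value) and denote the result $D^c_\emptyset(\Phi)$. For each fixed $\Delta\in\cS^\infty_c$ the inner inf over $u\in\cU$ is a multi-marginal Kantorovich problem against the marginals $\{\mu^n_t\}$ with bounded u.s.c.\ integrand $f_\Delta(x):=\Phi(x)-(\Delta\bcdot x)_T$; standard Kantorovich duality, combined with a density argument showing that infima over the call-spread class $\mathcal{C}$ match those over continuous bounded test functions in the presence of finite first moments, gives
\begin{equation*}
\inf_{u\in\cU}\Bigl\{\sum_{t,n}\int u^n_t\,d\mu^n_t:\sum_{t,n}u^n_t(x^n_t)\ge f_\Delta(x)\Bigr\}=\sup_{\Q\in\Pi}\E^\Q[f_\Delta],
\end{equation*}
so $D^c_\emptyset(\Phi)=\inf_{\Delta\in\cS^\infty_c}\sup_{\Q\in\Pi}\E^\Q[\Phi-(\Delta\bcdot S)_T]$.

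Next, apply Lemma~\ref{lem:minimax} with $X:=\Pi$ (weakly compact convex by Remark~\ref{rem:Pi compact}) and $Y:=\cS^\infty_c$ (convex by Definition~\ref{defn:cS}(ii) with constant $h$, and nonempty since $0\in\cS$): the integrand $\E^\Q[\Phi-(\Delta\bcdot S)_T]$ is linear in $\Delta$ (convex), linear in $\Q$ (concave), and weakly u.s.c.\ in $\Q$ because $\Phi-(\Delta\bcdot S)_T$ is u.s.c.\ and dominated above by a linear function that is uniformly $\Q$-integrable on $\Pi$. Swapping and identifying $\sup_{\Delta\in\cS^\infty_c}\E^\Q[(\Delta\bcdot S)_T]$ with $\E^\Q[A^\Q_T]$ via Lemma~\ref{lem:S->S_c} yields $D^c_\emptyset(\Phi)=\sup_{\Q\in\Pi}\E^\Q[\Phi-A^\Q_T]=P_\emptyset(\Phi)$ (measures with $\E^\Q[A^\Q_T]=\infty$ drop out). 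Combined with (i) and $D^c_\emptyset(\Phi)\ge D_\emptyset(\Phi)$, this proves (ii). For (iii), Lemma~\ref{lem:S->S_c} exhibits $\Q\mapsto\E^\Q[A^\Q_T]$ as a supremum of weakly continuous linear functionals, hence l.s.c., while $\Q\mapsto\E^\Q[\Phi]$ is u.s.c.\ under the growth bound; the objective is thus u.s.c.\ on the compact $\Pi$, attains its sup, and the maximizer lies in $\cQ_\cS$ because $\cQ_\cS\neq\emptyset$ forces the sup to be finite. The main obstacle I anticipate is verifying Kantorovich duality specifically against the restricted class $\mathcal{C}$ rather than all continuous bounded tests, and carrying the linear-growth integrability book-keeping carefully through both the semicontinuity check for minimax and the de-truncation $N\to\infty$.
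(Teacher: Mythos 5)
Your overall route is the same as the paper's: weak duality via the supermartingale Lemma~\ref{lem:supermart.}; for the reverse inequality, restriction to $\cS^\infty_c$, the Monge--Kantorovich duality with the call-spread class for each fixed $\Delta$, the minimax interchange (Lemma~\ref{lem:minimax}) on the compact convex $\Pi$, the identification $\sup_{\Delta\in\cS^\infty_c}\E^\Q[(\Delta\bcdot S)_T]=\E^\Q[A^\Q_T]$ from Lemma~\ref{lem:S->S_c}; and attainment in (iii) from compactness of $\Pi$ together with upper semicontinuity of $\Q\mapsto\E^\Q[\Phi]$ and lower semicontinuity of $\Q\mapsto\E^\Q[A^\Q_T]$. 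The ``obstacle'' you flag, namely duality against the class $\mathcal{C}$ rather than all bounded continuous test functions, is exactly what the paper imports from \cite[Proposition 2.1]{BHP13}, which already covers upper semicontinuous integrands bounded above by a linear function of the coordinates when the marginals have finite first moments; no separate density argument is required.

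The one genuine defect is the truncation step. First, $\Phi_N:=\Phi\vee(-N)$ is only bounded below, not bounded (and $f_\Delta=\Phi_N-(\Delta\bcdot x)_T$ certainly is not bounded), so the reduction does not deliver the ``bounded u.s.c.''\ setting you subsequently invoke. Second, the claim $D_\emptyset(\Phi_N)\downarrow D_\emptyset(\Phi)$ does not follow from dominated convergence: $D_\emptyset$ is an infimum over superhedges, a portfolio dominating $\Phi$ need not dominate $\Phi\vee(-N)$, and a priori one only gets $\lim_N D_\emptyset(\Phi_N)\ge D_\emptyset(\Phi)$; establishing the reverse inequality is essentially as hard as the duality itself. (It can be rescued on the dual side, using $D_\emptyset(\Phi)\le D_\emptyset(\Phi_N)=P_\emptyset(\Phi_N)$ and a Dini-type sup--inf exchange for the decreasing family of upper semicontinuous maps $\Q\mapsto\E^\Q[\Phi_N-A^\Q_T]$ on the compact $\Pi$, but not by the argument you state.) Since the transport duality applies directly to upper semicontinuous $\Phi$ satisfying \eqref{Phi linear grow}, the clean fix is simply to delete the truncation and run your inner-duality-plus-minimax argument on $\Phi$ itself; everything else — the minimax hypotheses, the use of Lemma~\ref{lem:S->S_c}, the semicontinuity and attainment in (iii), and the observation that a maximizer must lie in $\cQ_\cS$ once $\cQ_\cS\neq\emptyset$ — matches the paper's proof.
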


\begin{proof}
 First, by Remark~\ref{rem:E[S] finite}, \eqref{Phi linear grow}, and Definition~\ref{def:Q_S}, $P_\emptyset$ is indeed well defined.\\
(i) Take $u\in\cU$ and $\Delta\in\cS$ such that $\Psi_{u,\Delta}\ge \Phi$. For any $\Q\in\cQ_\cS$, note that 
\[
(\Delta\bcdot S)_T\ge \Phi(x)-\sum_{t=1}^T\sum_{n=1}^d u^n_t(x^n_t).
\]
If $\E^\Q[\Phi^-]<\infty$, then $\Phi(x)-\sum_{t=1}^T\sum_{i=1}^du^i_t(x_t)$ is $\Q$-integrable thanks to \eqref{Phi linear grow}. We then conclude from Lemma~\ref{lem:supermart.} that $(\Delta\bcdot S)_t-A^\Q_t$ is a true $\Q$-supermartingale. Hence,
\begin{equation}\label{E<D}
\E^\Q[\Phi-A^\Q_T]\le\E^\Q\left[\sum_{t=1}^T\sum_{n=1}^d u^n_t(S^n_t) + (\Delta\bcdot S)_T -A^\Q_T\right]\le \sum_{t=1}^T \sum_{n=1}^d \int_{\R_+}u^n_t d\mu^n_t.
\end{equation}
If $\E^\Q[\Phi^-]=\infty$, then \eqref{E<D} trivially holds. By taking supremum over $\Q\in\cQ_\cS$ and using the arbitrariness of $u$, we obtain from \eqref{E<D} the desired inequality. \\
(ii) We will use an argument similar to \cite[equations (3.1)-(3.4)]{BHP13}. First, observe that
\begin{align}
D_\emptyset(\Phi) \le & \inf\left\{\sum_{t=1}^T\sum_{n=1}^d\int_{\R_+}u^n_td\mu^i_t:u\in\cU \hbox{and } \Delta\in\cS^\infty_c \hbox{such that}\ \Psi_{u,\Delta}(x)\ge \Phi(x)\right\}\nonumber\\
=\inf_{\Delta\in\cS^\infty_c}&\inf\biggl\{\sum_{t=1}^T\sum_{n=1}^d\int_{\R_+}u^n_td\mu^n_t:u\in\cU \hbox{such that }\! \sum_{t=1}^T\sum_{n=1}^du^n_t(x^n_t)\ge\Phi(x)-(\Delta\bcdot x)_T\biggr\}\nonumber\\
=\inf_{\Delta\in\cS^\infty_c}&\sup_{\Q\in\Pi}\E^\Q[\Phi(x)-(\Delta\bcdot x)_T].   \label{1}   
\end{align}
Here, \eqref{1} follows from the theory of optimal transport (see e.g. \cite[Proposition 2.1]{BHP13}), which requires the upper semicontinuity of $\Phi$. Now, we intend to apply Lemma \ref{lem:minimax} to \eqref{1}, with $X=\Pi$, $Y=\cS^\infty_c$, and $f(\Q,\Delta)=\E^\Q[\Phi(x)-(\Delta\bcdot x)_T]$. The only condition in Lemma~\ref{lem:minimax} which is not obvious is the upper semicontinuity of $\Q\mapsto f(\Q,\Delta)$. For each $\Delta\in\cS^\infty_c$, thanks to \eqref{Phi linear grow}, the upper semicontinuous function $\Phi(x)-(\Delta\bcdot x)_T$ is bounded from above by the continuous function 
\begin{equation}\label{upper bound 1}
\ell(x):= K\left(1+\sum_{t=1}^T\sum_{n=1}^d x^n_t\right)+ |\Delta|_\infty\sum_{n=1}^d(x^n_0+2(x^n_1+\dots+x^n_{T-1})+x^n_T),
\end{equation}
where $|\Delta|_\infty:=|\Delta_0|\vee\max\{\sup_{z\in(\R^d_+)^t}|\Delta_t(z)|:t=1,\dots,T-1\}<\infty$. Take any sequence $\{\Q_n\}_{n\in\N}$ in $\Pi$ which converge weakly to some $\Q^*\in\Pi$. Observing that $\Q\mapsto\E^\Q[\ell]$ is a constant function on $\Pi$,
we conclude from \cite[Lemma 4.3]{Villani-book-09} that 
\[
\limsup_{n\to\infty}\E^{\Q_n}[\Phi(x)-(\Delta\bcdot x)_T]\le\E^{\Q^*}[\Phi(x)-(\Delta\bcdot x)_T],
\]
which shows the upper semicontinuity of $\Q\mapsto f(\Q,\Delta)$. Now, applying Lemma \ref{lem:minimax} to \eqref{1} yields
\begin{align}
D_\emptyset(\Phi) &\le\sup_{\Q\in\Pi}\inf_{\Delta\in\cS^\infty_c}\E^\Q[\Phi(x)-(\Delta\bcdot x)_T]=\sup_{\Q\in\Pi}\bigg\{\E^\Q[\Phi] - \sup_{\Delta\in\cS^\infty_c}\E^\Q[(\Delta\bcdot S)_T] \bigg\}\nonumber\\
&=\sup_{\Q\in\Pi}\left\{\E^\Q[\Phi] - \E^\Q[A^\Q_T]\right\} =\sup_{\Q\in\cQ_\cS}\left\{\E^\Q[\Phi] - \E^\Q[A^\Q_T]\right\} = P_\emptyset(\Phi)\nonumber,   
\end{align}
where the second line follows from Lemma~\ref{lem:S->S_c}.\\
(iii) In view of Definition \ref{def:Q_S}, we can write 
${P_\emptyset(\Phi)=\sup_{\Q\in\Pi}\E^\Q[\Phi-A^\Q_T]}$ by replacing $\cQ_\cS$ by $\Pi$ in \eqref{P I=empty}. 
Since $\Pi$ is compact under the topology of weak convergence (Remark~\ref{rem:Pi compact}), it suffices to show that $\Q\mapsto f(\Q):=\E^\Q[\Phi-A^\Q_T]$ is upper semicontinuous. Since the argument in part (ii) already implies that $\Q\mapsto \E^\Q[\Phi]$ is upper semicontinuous, it remains to show that $\Q\mapsto g(\Q):=\E^\Q[A^\Q_T]$ is lower semicontinuous. Similar to \eqref{upper bound 1}, for each $\Delta\in\cS^\infty_c$, we have 
$|(\Delta\bcdot x)_T|\le h(x)$ with $h$ defined by $h(x):=|\Delta|_\infty\sum_{n=1}^d(x^n_0+2(x^n_1+\dots+x^n_{T-1})+x^n_T].
$
For any sequence $\{\Q_n\}_{n\in\N}$ in $\Pi$ which converge weakly to some $\Q^*\in\Pi$, applying \cite[Lemma 4.3]{Villani-book-09} to the functions $(\Delta\bcdot x)_T$ and $-(\Delta\bcdot x)_T$ gives
\[
\liminf_{n\to\infty} \E^{\Q_n}[(\Delta\bcdot x)_T]\ge \E^{\Q^*}[(\Delta\bcdot x)_T]\ge \limsup_{n\to\infty}\E^{\Q_n}[(\Delta\bcdot x)_T].
\] 
It follows that $\Q\mapsto g_\Delta(\Q):=\E^\Q[(\Delta\bcdot S)_T]$ is continuous. Thanks to Lemma~\ref{lem:S->S_c}, we have $g(\Q)= \sup_{\Delta\in\cS_c^\infty}g_\Delta(\Q)$ is lower semicontinuous, as a supremum of continuous functions. \qed
\end{proof}

\begin{rem}
The condition $\cQ_\cS\neq\emptyset$ is not needed for Proposition~\ref{thm:duality} (i) and (ii). Indeed, if $\cQ_\cS=\emptyset$, then $P(\Phi)=-\infty$ and thus part (i) trivially holds; also, the arguments in part (ii) hold true as long as $\Pi\neq\emptyset$, which is guaranteed by Remark~\ref{rem:Pi compact}.
\end{rem}

\begin{rem}\label{rem:S=H}
Proposition~\ref{thm:duality} extends \cite[Theorem 1.1]{BHP13} to the case with portfolio constraints. To see this, consider the no-constraint case, i.e. $\cS=\cH$. Observe that $\cS=\cH$ implies $\cQ_\cS=\cM$, with $\cM$ defined as in \eqref{def:cM}.
While $\cM\subseteq \cQ_\cS$ is obvious, the other inclusion follows from Definition~\ref{defn:A^Q_t}. Indeed, given $\Q\in\cQ_\cS\setminus\cM$, there must exist $t\in\{0,\dots,T-1\}$ such that $\E^\Q[S_{t+1}\mid\cF_t]\neq S_t$. Since $\cS=\cH$, we have $A^\Q_{t+1}-A^\Q_t=\infty$, contradicting $\Q\in\cQ_\cS$. The duality in Proposition~\ref{thm:duality} reduces to
\[
D_\emptyset(\Phi)=P_\emptyset(\Phi) = \sup_{\Q\in\cM}\E^\Q[\Phi],
\]
which recovers \cite[Theorem 1.1]{BHP13}.
\end{rem}

\begin{rem}\label{rem:multi-d}
Proposition~\ref{thm:duality} also extends \cite[Theorem 1.1]{BHP13} to the case with multi-dimensional $S$. Since \cite[Theorem 1.1]{BHP13} relies on one-dimensional Monge-Kantorovich duality which works on the product of $T$ copies of $\R_+$ (i.e. \cite[Proposition 2.1]{BHP13}), one may expect to prove Proposition~\ref{thm:duality} via multi-dimensional Monge-Kantorovich duality which works on the product of $T$ copies of $\R^d_+$. While such a duality does exist (e.g. \cite[Theorem 2.14]{Kellerer84}), applying it requires the knowledge of the joint distribution of $(S^1_t,\dots,S^d_t)$ for each $t=1,\dots,T$. This is not practically feasible, as vanilla calls only specify the distribution of $S^n_t$, for each $n$ and $t$.

As a result, in Proposition~\ref{thm:duality}, we still rely on the one-dimensional result \cite[Proposition 2.1]{BHP13}. By treating $\Omega=(\R^d_+)^T$ as the product of $(d\times T)$ copies of $\R_+$ (as in Remark~\ref{rem:Pi compact}), \cite[Proposition 2.1]{BHP13} is indeed applicable as the distribution $\mu^n_t$ of $S^n_t$, for each $n=1,\dots,d$ and $t=1,\dots,T$, is known. Note that it was first mentioned in \cite[Theorem 2.1]{HL13} that \cite[Theorem 1.1]{BHP13} could be generalized to higher dimensions.
\end{rem}

By the convexity of $c_i$, Proposition~\ref{thm:duality} extends to the general case where $I\neq\emptyset$.

\begin{thm}\label{thm:duality main}
Suppose $\psi_i$ is continuous and $|\psi_i|$ satisfies \eqref{Phi linear grow} for all $i\in I$. Then, for any upper semicontinuous function $\Phi:(\R^d_+)^T\mapsto\R$ satisfying \eqref{Phi linear grow}, we have $D(\Phi)=P(\Phi)$, with $D$ and $P$ defined as in \eqref{def:D} and \eqref{def:P}. Moreover, if $\cQ_{\cS,I}\neq\emptyset$, the supremum in \eqref{def:P} is attained at some $\Q^*\in\cQ_{\cS,I}$. 
\end{thm}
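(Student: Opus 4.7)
The plan is to reduce Theorem~\ref{thm:duality main} to Proposition~\ref{thm:duality} by absorbing the static positions in $\{\psi_i\}_{i\in I}$ into a modified payoff, and then swap an infimum and a supremum via the minimax lemma (Lemma~\ref{lem:minimax}). For each $\eta\in\mathcal{R}^I$, define
\[
\Phi_\eta(x):=\Phi(x)-\sum_{i\in I}\bigl(\eta_i\psi_i(x)-c_i(\eta_i)\bigr).
\]
Since $\psi_i$ is continuous with $|\psi_i|$ satisfying \eqref{Phi linear grow} and $\eta$ has finite support, $\Phi_\eta$ is upper semicontinuous and also satisfies \eqref{Phi linear grow}. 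The superhedging constraint $\Psi_{u,\eta,\Delta}\ge\Phi$ rewrites as $\sum_{t,n}u^n_t(x^n_t)+(\Delta\bcdot x)_T\ge\Phi_\eta(x)$, so
\[
D(\Phi)=\inf_{\eta\in\mathcal{R}^I}D_\emptyset(\Phi_\eta).
\]

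First I would dispatch the easy direction $P(\Phi)\le D(\Phi)$ directly: given any admissible $(u,\eta,\Delta)$ and any $\Q\in\cQ_{\cS,I}$, Lemma~\ref{lem:supermart.} makes $(\Delta\bcdot S)_t-A^\Q_t$ a true $\Q$-supermartingale (after checking $\Q$-integrability using \eqref{Phi linear grow} and the linear growth of $|\psi_i|$), so taking $\E^\Q$ of $\Psi_{u,\eta,\Delta}-A^\Q_T\ge\Phi-A^\Q_T$ and using $\sum_i(\eta_i\E^\Q[\psi_i]-c_i(\eta_i))\le\cE^\Q_I$ yields $\E^\Q[\Phi-A^\Q_T]-\cE^\Q_I\le\sum_{t,n}\int u^n_td\mu^n_t$.

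For the hard direction, I would apply Proposition~\ref{thm:duality}(ii) to each $\Phi_\eta$ to obtain
\[
D(\Phi)=\inf_{\eta\in\mathcal{R}^I}\sup_{\Q\in\cQ_\cS}f(\Q,\eta),\qquad f(\Q,\eta):=\E^\Q[\Phi-A^\Q_T]-\sum_{i\in I}\bigl(\eta_i\E^\Q[\psi_i]-c_i(\eta_i)\bigr).
\]
Since $\Q\in\Pi\setminus\cQ_\cS$ contribute $-\infty$ (as $\E^\Q[A^\Q_T]=\infty$), the supremum can be taken over the compact convex set $\Pi$. Then I would verify the hypotheses of Lemma~\ref{lem:minimax} with $X=\Pi$ and $Y=\mathcal{R}^I$: convexity in $\eta$ follows from condition~(C) on each $c_i$; concavity in $\Q$ holds because $\Q\mapsto\E^\Q[\Phi]$ and $\Q\mapsto\E^\Q[\psi_i]$ are linear while $\Q\mapsto\E^\Q[A^\Q_T]=\sup_{\Delta\in\cS^\infty}\E^\Q[(\Delta\bcdot S)_T]$ is convex (a supremum of affine functionals) by Lemma~\ref{lem:E[A_t]}; upper semicontinuity in $\Q$ uses exactly the Villani Lemma~4.3 argument from Proposition~\ref{thm:duality}(ii)-(iii), applied to $\Phi$, to $-A^\Q_T$ (via Lemma~\ref{lem:S->S_c}, so that $\Q\mapsto\E^\Q[A^\Q_T]$ is a supremum of continuous functions and hence lower semicontinuous), and to each $\psi_i$ (for which the linear-growth bound gives continuity of $\Q\mapsto\E^\Q[\psi_i]$ on $\Pi$). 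The minimax interchange then gives
\[
D(\Phi)=\sup_{\Q\in\Pi}\inf_{\eta\in\mathcal{R}^I}f(\Q,\eta)=\sup_{\Q\in\Pi}\Bigl\{\E^\Q[\Phi-A^\Q_T]-\cE^\Q_I\Bigr\}=P(\Phi),
\]
where the second equality uses the definition \eqref{cE^Q_I} of $\cE^\Q_I$ and the last equality restricts to $\cQ_{\cS,I}$ (outside which the bracket is $-\infty$).

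For the attainment statement, I would argue that on the compact set $\Pi$ the map $\Q\mapsto\E^\Q[\Phi-A^\Q_T]-\cE^\Q_I$ is upper semicontinuous: the first term is handled as in Proposition~\ref{thm:duality}(iii), while $\Q\mapsto\cE^\Q_I$ is lower semicontinuous as a supremum over $\eta\in\mathcal{R}^I$ of the continuous maps $\Q\mapsto\eta_i\E^\Q[\psi_i]-c_i(\eta_i)$. Provided $\cQ_{\cS,I}\neq\emptyset$ (so that the supremum is not $-\infty$), a maximizer $\Q^*\in\Pi$ must lie in $\cQ_{\cS,I}$, completing the proof. The main obstacle I anticipate is verifying the upper semicontinuity of $\Q\mapsto f(\Q,\eta)$ cleanly for the minimax step, because $\E^\Q[A^\Q_T]$ depends on $\Q$ in a nonlinear way; the key trick is the representation from Lemma~\ref{lem:S->S_c} that rewrites it as a supremum over $\Delta\in\cS^\infty_c$ of weakly-continuous linear functionals, for which the continuous approximation property in Definition~\ref{defn:cS}(iii) is essential.
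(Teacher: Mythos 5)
Your proposal is correct and follows essentially the same route as the paper: write $D(\Phi)=\inf_{\eta}D_\emptyset\bigl(\Phi-\sum_i(\eta_i\psi_i-c_i(\eta_i))\bigr)$, apply Proposition~\ref{thm:duality} to each modified payoff, and then interchange $\inf_\eta$ and $\sup_{\Q\in\Pi}$ via Lemma~\ref{lem:minimax}, using convexity of $c_i$ in $\eta$ and the semicontinuity arguments (Villani's Lemma~4.3 plus Lemma~\ref{lem:S->S_c}) in $\Q$, with the same compactness-of-$\Pi$ argument for attainment. Your separate verification of the easy inequality and the explicit check of concavity of $\Q\mapsto -\E^\Q[A^\Q_T]$ are harmless additions (the latter is in fact a point the paper leaves implicit), but they do not change the approach.
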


\begin{proof}
Observe from \eqref{def:D} and Proposition~\ref{thm:duality} that 
\begin{equation}\label{D to D_empty}
\begin{split}
D(\Phi) &=\inf_{\eta\in\mathcal{R}^I} D_\emptyset\left(\Phi-\sum_{i\in I}(\eta_i\psi_i-c_i(\eta_i))\right)\\
&= \inf_{\eta\in\mathcal{R}^I}\sup_{\Q\in\Pi}\E^\Q\left[\Phi-\sum_{i\in I}(\eta_i\psi_i-c_i(\eta_i))-A_T^\Q\right].
\end{split}
\end{equation}
Consider the function $f(\Q,\eta):=\E^\Q[\Phi-\sum_{i\in I}(\eta_i\psi_i-c_i(\eta_i))-A_T^\Q]$ for $\Q\in\Pi$ and $\eta\in\mathcal{R}^I$. By the upper semicontinuity of $\Phi$, the continuity of $\psi_i$, and \eqref{Phi linear grow}, we may argue as in Proposition~\ref{thm:duality} (i) and (ii) that $f$ is upper semicontinuous in $\Q\in\Pi$. Moreover, by the convexity of $\eta\mapsto c_i(\eta)$ for all $i\in I$, $f$ is convex in $\eta\in\mathcal{R}^I$. Thus, we may apply Lemma \ref{lem:minimax} to \eqref{D to D_empty} and get
\[
\begin{split}
D(\Phi)&=\sup_{\Q\in\Pi}\inf_{\eta\in\mathcal{R}^I}\E^\Q\left[\Phi-\sum_{i\in I}(\eta_i\psi_i-c_i(\eta_i))-A_T^\Q\right]\\
&=\sup_{\Q\in\cQ_{\cS,I}}\Bigl\{\E^\Q[\Phi-A_T^\Q]-\cE^\Q_I\Bigr\}=P(\Phi).
\end{split}
\]
In view of the argument in Proposition~\ref{thm:duality} (iii) and the continuity of $\psi_i$, we obtain that $\Q\mapsto \E^\Q[\Phi-A_T^\Q]-\cE^\Q_I$ is upper semicontinuous on the compact set $\Pi$. Thus, the supremum in \eqref{def:P} is attained if $\cQ_{\cS,I}\neq\emptyset$.
\qed\end{proof}


\subsection{Connection to the model-free duality in \cite{ABPS13}}\label{sec:psi_i liquid}

Consider the case where every option $\psi_i$ can actually be liquidly traded at time $0$, just as vanilla calls. That is, for each $i\in I$, $c_i$ is linear (see Remark~\ref{rem:bid-ask spread}), and we let $p_i\in\R$ be the slope of $c_i$. By \eqref{cE^Q_I}, $\cE^\Q_I$ equals $0$ iff $\E^\Q[\psi_i]=p_i$ for all $i\in I$, and $\infty$ if otherwise. It follows from \eqref{Q_S,I} that
\begin{equation}\label{Q_S=Q_S,p_i}
\cQ_{\cS,I} =  \cQ_{\cS,(p_i)_{i\in I}} := \{\Q\in\cQ_\cS:\E^\Q[\psi_i]=p_i,\ \forall i\in I\}.
\end{equation}
Recall $\cM$ defined in \eqref{def:cM}. Let us also consider 
\begin{equation}\label{cM_I}
\cM_I:=\{\Q\in \cM: c'_i(0-)\le\E^\Q[\psi_i]\le c'_i(0+),\ \forall i\in I\}.
\end{equation}
Under current setting, it becomes
\begin{align}
\cM_I=\cM_{(p_i)_{i\in I}}  &:= \{\Q\in \cM: \E^\Q[\psi_i]=p_i,\ \forall i\in I\}.
\end{align}

\begin{coro}\label{coro:p_i constant}
For each $i\in I$, suppose $\psi_i$ is continuous, $|\psi_i|$ satisfies \eqref{Phi linear grow}, and $\psi_i$ can be traded liquidly at the price $p_i\in\R$. Let $\Phi:(\R^d_+)^T\mapsto\R$ be upper semicontinuous and satisfy \eqref{Phi linear grow}. 
\begin{itemize}
\item [(i)] We have 
\[
D(\Phi)=P(\Phi) = \sup_{\Q\in\cQ_{\cS,(P_i)_{i\in I}} }\E^\Q[\Phi-A^\Q_T].
\]
\item [(ii)] Furthermore, if there is no portfolio constraint, i.e. $\cS=\cH$, then
\[
D(\Phi)=P(\Phi) = \sup_{\Q\in\cM_{(p_i)_{i\in I}} }\E^\Q[\Phi].
\]
\end{itemize}
\end{coro}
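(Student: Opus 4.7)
The plan is to deduce this as a direct specialization of Theorem \ref{thm:duality main}, with the linearity of each $c_i$ collapsing the convex-conjugate term $\cE^\Q_I$ into a hard pricing constraint on $\Q$.

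First I would verify hypotheses and record the key computation. For each $i\in I$, $c_i$ being linear with slope $p_i$ means $c_i(\eta)=p_i\eta$ for all $\eta\in\R$. Then for each $\Q\in\Pi$,
\begin{equation*}
\sup_{\eta\in\R}\bigl(\eta\E^\Q[\psi_i]-c_i(\eta)\bigr)=\sup_{\eta\in\R}\eta\bigl(\E^\Q[\psi_i]-p_i\bigr)=
\begin{cases}0,& \E^\Q[\psi_i]=p_i,\\ +\infty,& \hbox{otherwise.}\end{cases}
\end{equation*}
Plugging this into \eqref{cE^Q_I=...} gives $\cE^\Q_I=0$ when $\E^\Q[\psi_i]=p_i$ for every $i\in I$, and $\cE^\Q_I=+\infty$ otherwise (taking any finite $J\in F(I)$ that contains an $i$ with $\E^\Q[\psi_i]\neq p_i$). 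Hence, by \eqref{Q_S,I}, $\cQ_{\cS,I}=\cQ_{\cS,(p_i)_{i\in I}}$ exactly as defined in \eqref{Q_S=Q_S,p_i}, and on this set the subtracted term $\cE^\Q_I$ vanishes.

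For part (i), I would now invoke Theorem \ref{thm:duality main}: the hypotheses on $\Phi$ and on each $\psi_i$ are directly assumed, so
\[
D(\Phi)=P(\Phi)=\sup_{\Q\in\cQ_{\cS,I}}\bigl\{\E^\Q[\Phi-A^\Q_T]-\cE^\Q_I\bigr\}=\sup_{\Q\in\cQ_{\cS,(p_i)_{i\in I}}}\E^\Q[\Phi-A^\Q_T],
\]
using the equivalence of index sets and the vanishing of $\cE^\Q_I$ established above.

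For part (ii), with $\cS=\cH$ I would apply Remark \ref{rem:S=H}, which identifies $\cQ_\cS=\cM$. For any $\Q\in\cM$, the martingale property $\E^\Q[S_{t+1}\mid\cF_t]=S_t$ forces the increments in Definition~\ref{defn:A^Q_t} to be $0$, so $A^\Q_T\equiv 0$. Consequently $\cQ_{\cS,(p_i)_{i\in I}}=\cM_{(p_i)_{i\in I}}$ and the expression in part~(i) collapses to $\sup_{\Q\in\cM_{(p_i)_{i\in I}}}\E^\Q[\Phi]$, as claimed. There is no real obstacle here — the content of the corollary lies in recognizing that linear pricing of each $\psi_i$ encodes a Lagrangian for the constraint $\E^\Q[\psi_i]=p_i$, so the dual optimization over $\eta\in\mathcal{R}^I$ in \eqref{D to D_empty} reduces to restricting the measures; the only care required is to confirm that Theorem \ref{thm:duality main} is being applied with the correct identification of $\cQ_{\cS,I}$.
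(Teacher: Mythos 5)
Your proposal is correct and follows essentially the same route as the paper: specialize Theorem~\ref{thm:duality main} after observing that linearity of each $c_i$ forces $\cE^\Q_I\in\{0,\infty\}$, which identifies $\cQ_{\cS,I}$ with $\cQ_{\cS,(p_i)_{i\in I}}$, and then for (ii) use Remark~\ref{rem:S=H} together with the fact that $A^\Q_T=0$ for martingale measures. Your explicit verification of the conjugate computation and of $A^\Q_T\equiv 0$ on $\cM$ merely spells out steps the paper leaves implicit in the surrounding text.
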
 

\begin{proof}
(i) simply follows from Theorem~\ref{thm:duality main} and \eqref{Q_S=Q_S,p_i}. For (ii), recalling from Remark~\ref{rem:S=H} that $\cS=\cH$ implies $\cQ_{\cS} = \cM$, we have $\cQ_{\cS,(p_i)_{i\in I}}=\cM_{(p_i)_{i\in I}}$. Then, part (i) just becomes the desired result.
\qed\end{proof}

\begin{rem}\label{rem:relate to [1]}
Corollary~\ref{coro:p_i constant} (ii) states that to find the superhedging price of $\Phi$, one needs to consider expectations of $\Phi$ under martingale measures which are consistent with market prices of both vanilla calls and other options $\{\psi_i\}_{i\in I}$. This in particular recovers \cite[Theorem 1.4]{ABPS13}, for the case where tradable options at time $0$ include at least vanilla calls with all maturities and strikes. 
\end{rem}


\subsection{Connection to convex risk measures}
Let $\mathcal{X}$ be the collection of measurable functions $\Phi:(\R^d_+)^T\mapsto\R$ satisfying the linear growth condition \eqref{Phi linear grow}. We say $\rho:\mathcal{X}\mapsto\R$ is a {\it convex risk measure} if for all $\Phi$, ${\Phi'\in\mathcal{X}}$, the following conditions hold:
\begin{itemize}
\item {\it Monotonicity:} If $\Phi\le\Phi'$, then $\rho(\Phi)\ge\rho(\Phi')$. 
\item {\it Translation Invariance:} If $m\in\R$, then $\rho(\Phi+m)=\rho(\Phi)-m$.
\item {\it Convexity:} If $0\le\lambda\le 1$, then $\rho(\lambda\Phi + (1-\lambda)\Phi')\le \lambda\rho(\Phi)+(1-\lambda)\rho(\Phi')$.
\end{itemize}
Consider  the {\it acceptance set}
\[
\begin{split}
\A_\cS&:=\{\Phi\in\mathcal{X}: u\in\cU_0,\ \eta\in\mathcal{R}^I,\ \hbox{and } \Delta\in\cS\ \hbox{such that } \\
&\hspace*{6cm}\Phi(x)+\Psi_{u,\eta,\Delta}(x)\ge 0, \forall x\in(\R^d)^T \}.
\end{split}
\]
Then, 
define the function $\rho_\cS:\mathcal{X}\mapsto\R$ by 
\[
\rho_\cS(\Phi):=\inf\{m\in\R:m+\Phi\in\A_\cS\}=D(-\Phi).
\]
\begin{prop}\label{prop:risk measure}
If $\cQ_{\cS,I}\neq\emptyset$, then $\rho_\cS$ is a convex risk measure, and admits the dual formulation
\begin{equation}\label{dual for rho}
\rho_\cS(\Phi) = \sup_{\Q\in\Pi}\left(\E^\Q[-\Phi]-\alpha^*(\Q)\right),
\end{equation}
where the {\it penalty function} $\alpha^*$ is given by
\[
\alpha^*(\Q):=
\begin{cases}
\E^\Q[A^\Q_T] +\cE^\Q_I\ \ &\hbox{if}\ \Q\in\cQ_{\cS,I},\\
\infty,\ \ &\hbox{otherwise}.
\end{cases}
\]
Moreover, for any $\alpha:\Pi\mapsto\R\cup\{\infty\}$ such that \eqref{dual for rho} holds (with $\alpha^*$ replaced by $\alpha$), we have $\alpha^*(\Q)\le \alpha(\Q)$ for all $\Q\in\Pi$.
\end{prop}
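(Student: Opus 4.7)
The proof naturally splits into three parts, corresponding to the axioms of a convex risk measure, the dual formula \eqref{dual for rho}, and minimality of the penalty function $\alpha^*$. My plan reuses Theorem~\ref{thm:duality main} in an essential way for the dual formula and identifies $\alpha^*$ as the minimal penalty function via the acceptance set $\A_\cS$.

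\textbf{Step 1 (Axioms).} Starting from $\rho_\cS(\Phi) = D(-\Phi)$, monotonicity is transparent: if $\Phi \le \Phi'$, then any superhedging triple $(u,\eta,\Delta)$ for $-\Phi$ also superhedges $-\Phi' \le -\Phi$. Translation invariance uses that $\mathcal{C}$ contains the constants, so shifting $\Phi$ by $m$ is compensated by subtracting $m$ from the constant part of a single $u^n_t$, which reduces the cost by exactly $m$. Convexity follows from forming $\lambda$-convex combinations of two superhedging triples: $\cU$ and $\mathcal{R}^I$ are stable under such combinations, while $\cS$ is stable by Definition~\ref{defn:cS}(ii) applied with the constant adapted process $h_t \equiv \lambda$; convexity of each $c_i$ guaranteed by (C) yields
\[
(\lambda\eta_i + (1-\lambda)\eta'_i)\psi_i - c_i(\lambda\eta_i + (1-\lambda)\eta'_i) \ge \lambda(\eta_i\psi_i - c_i(\eta_i)) + (1-\lambda)(\eta'_i\psi_i - c_i(\eta'_i)),
\]
so the combined triple superhedges $-(\lambda\Phi + (1-\lambda)\Phi')$ at cost $\lambda D(-\Phi) + (1-\lambda) D(-\Phi')$.

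\textbf{Step 2 (Dual formula).} Apply Theorem~\ref{thm:duality main} to the payoff $-\Phi$ to obtain
\[
\rho_\cS(\Phi) = D(-\Phi) = P(-\Phi) = \sup_{\Q \in \cQ_{\cS,I}}\bigl\{\E^\Q[-\Phi] - \E^\Q[A^\Q_T] - \cE^\Q_I\bigr\}.
\]
Since $\alpha^*(\Q) = \infty$ for $\Q \in \Pi \setminus \cQ_{\cS,I}$, the supremum extends costlessly to all of $\Pi$, yielding \eqref{dual for rho}.

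\textbf{Step 3 (Minimality).} Given any $\alpha$ giving a valid representation, the inequality $\alpha(\Q) \ge \E^\Q[-\Phi] - \rho_\cS(\Phi)$ holds for every $\Phi \in \mathcal{X}$. For $\Phi \in \A_\cS$, translation invariance forces $\rho_\cS(\Phi) \le 0$, so $\alpha(\Q) \ge \sup_{\Phi \in \A_\cS}\E^\Q[-\Phi] =: \alpha_{\min}(\Q)$. It remains to show $\alpha_{\min} = \alpha^*$. The bound $\alpha_{\min}(\Q) \le \alpha^*(\Q)$ follows by repeating the weak-duality calculation of Proposition~\ref{thm:duality}(i) combined with \eqref{cE^Q_I}: for any $\Phi \in \A_\cS$ with witness $(u,\eta,\Delta)$ and $u\in\cU_0$, Lemma~\ref{lem:supermart.} and the definition of $\cE^\Q_I$ give $\E^\Q[-\Phi] \le 0 + \cE^\Q_I + \E^\Q[A^\Q_T] = \alpha^*(\Q)$. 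For the reverse direction at $\Q \in \cQ_{\cS,I}$, fix $\epsilon > 0$; by Lemma~\ref{lem:S->S_c} pick $\Delta \in \cS^\infty$ with $\E^\Q[(\Delta \bcdot S)_T] > \E^\Q[A^\Q_T] - \epsilon/2$, and by the definition of $\cE^\Q_I$ pick a finite $J \subset I$ and $\eta \in \R^{|J|}$ with $\sum_{i\in J}(\eta_i \E^\Q[\psi_i] - c_i(\eta_i)) > \cE^\Q_I - \epsilon/2$. The function $\Phi := -(\Delta \bcdot S)_T - \sum_{i \in J}(\eta_i \psi_i - c_i(\eta_i))$ then belongs to $\A_\cS$ (witnessed by $(0,\eta,\Delta)$, since $\Phi + \Psi_{0,\eta,\Delta} \equiv 0$), and $\E^\Q[-\Phi] > \alpha^*(\Q) - \epsilon$.

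The main obstacle is in Step~3: one must verify that the constructed test $\Phi$ belongs to $\mathcal{X}$, i.e.\ has linear growth. Boundedness of the chosen $\Delta \in \cS^\infty$ controls $|(\Delta \bcdot S)_T|$ linearly in the $x^n_t$, and the linear-growth hypothesis on each $|\psi_i|$ inherited from Theorem~\ref{thm:duality main} controls the remaining summands. A secondary minor point is that applying Theorem~\ref{thm:duality main} in Step~2 presumes its hypotheses on $-\Phi$, in particular upper semicontinuity; for merely measurable $\Phi \in \mathcal{X}$ one would either restrict to a sufficiently regular subclass or pass to a limit via standard approximation.
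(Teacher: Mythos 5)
Your proof is correct and essentially the paper's own argument: the risk-measure axioms come from convexity of $\cU_0$, $\mathcal{R}^I$ and $\cS$ (Definition~\ref{defn:cS}(ii) with constant $h\equiv\lambda$ plus condition (C)); the dual formula is Theorem~\ref{thm:duality main} applied to $-\Phi$; and minimality is obtained exactly as in the paper by testing $\alpha$ against the acceptable positions $-(\Delta\bcdot S)_T-\sum_{i}(\eta_i\psi_i-c_i(\eta_i))$, which via Lemma~\ref{lem:E[A_t]}/\ref{lem:S->S_c} and \eqref{cE^Q_I} identify $\alpha^*(\Q)$ with $\sup_{\Phi\in\A_\cS}\E^\Q[-\Phi]$ (your extra inequality $\alpha_{\min}\le\alpha^*$ is not needed for the claim, and your semicontinuity caveat in Step~2 matches a looseness the paper itself glosses over). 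The only touch-up: run your construction also at $\Q\notin\cQ_{\cS,I}$, where $\E^\Q[A^\Q_T]+\cE^\Q_I=\infty$ lets you choose bounded $\Delta$ and finitely supported $\eta$ making $\E^\Q[-\Phi]$ arbitrarily large, hence $\alpha(\Q)=\infty$, so that $\alpha^*\le\alpha$ holds on all of $\Pi$ and not just on $\cQ_{\cS,I}$.
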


\begin{proof}
Monotonicity and translation invariance can be easily verified, while the convexity of $\rho_\cS$ follows from the convexity of $\cU_0$, $\mathcal{R}^I$ and $\cS$.  
Now, the duality \eqref{dual for rho} is a direct consequence of Theorem~\ref{thm:duality main}. Since $\cQ_{\cS,I}\neq\emptyset$, \eqref{dual for rho} shows that $\rho_\cS$ is real-valued, and thus a convex risk measure. To show that $\alpha^*$ is the minimal penalty function, observe that for any $\alpha:\Pi\mapsto\R\cup\{\infty\}$ satisfying \eqref{dual for rho}, we have 
\begin{equation}\label{alpha}
\begin{split}
\alpha(\Q)\ge \sup_{\Phi\in\mathcal{X}}\left(\E^\Q[-\Phi]-\rho_{\cS}(\Phi)\right)&\ge \sup_{\Phi\in\A_\cS}\left(\E^\Q[-\Phi]-\rho_{\cS}(\Phi)\right)\\
&\ge \sup_{\Phi\in\A_\cS}\E^\Q[-\Phi],\ \forall \Q\in\Pi.
\end{split}
\end{equation}
By Lemma~\ref{lem:S->S_c} and \eqref{cE^Q_I}, 
\[
\alpha^*(\Q) = \sup_{}\left\{\E^\Q\left[(\Delta\bcdot S)_T + \sum_{i\in I}(\eta_i\psi_i-c_i(\eta_i))\right] : \Delta\in\cS^\infty,\eta\in\mathcal{R}^I\right\}\]
 for all $\Q\in\Pi$.
Since $-(\Delta\bcdot S)_T - \sum_{i\in I}(\eta_i\psi_i-c_i(\eta_i))\in \A_\cS$ for all $\Delta\in\cS^\infty$ and $\eta\in\mathcal{R}^I$, we conclude from \eqref{alpha} that $\alpha^*(\Q)\le\alpha(\Q)$.
\qed\end{proof}

\begin{rem}
Proposition~\ref{prop:risk measure} generalizes Proposition 16 and Theorem 17 in \cite{FS02} to a model-independent setting. Note that a no-arbitrage condition (under a given physical measure $\P$) is imposed in \cite{FS02}. Here, we require only $\cQ_{\cS,I}\neq\emptyset$, which is weaker than the model-independent no-arbitrage condition; see Section~\ref{sec:NA} for details.   
\end{rem}


\section{Fundamental theorem of asset pricing via duality}\label{sec:NA}
Following the formulation in \cite{ABPS13}, we introduce the notion of arbitrage in the strong  pathwise sense:
\begin{defn}[model-independent arbitrage]\label{defn:NA}
We say there is model-independent arbitrage under the constraint $\cS$, if there exist $u\in\cU_0$, $\eta\in\mathcal{R}^I$, and $\Delta\in\cS$ such that
\begin{equation*}\label{def:arbitrage}
\sum_{t=1}^T\sum_{n=1}^du^n_t(x^n_t)+\sum_{i\in I}(\eta_i\psi_i(x)-c_i(\eta_i))+(\Delta\bcdot x)_T>0,\ \ \ \hbox{for all}\ x\in(\R^d_+)^T.
\end{equation*}
\end{defn}

\begin{rem}
It is immediate from the above definition that if model-independent arbitrage exists, then it is arbitrage under any probability measure $\P$ defined on $\Omega$. 

Note that instead of using the pathwise formulation, the authors in \cite{BN15} introduce a weaker notion of arbitrage under model uncertainty via quasi-sure analysis. They include more strategies in the definition of arbitrage, and provide different characterization of no-arbitrage condition and superhedging duality. We, however, will not pursue this direction in this paper.
\end{rem}

Consider the following collection of measure
\begin{align*}
\cP_\cS&:=\{\Q\in\Pi: \{(\Delta\bcdot S)_t\}_{t=0}^T\ \hbox{is a local}\ \Q\hbox{-supermartingale, for all}\ \Delta\in\cS\}\label{def:P_S}.
\end{align*}

\begin{rem}[$\cM$ and $\cP_\cS$]\label{rem:cM=cP}
By definition, we see that $\cM\subset\cP_\cS$. Given $\alpha>0$, if 
$\bar{\alpha}:=\{\Delta^n_t\equiv \alpha\}_{t,n}$ and ${-\bar\alpha}:=\{\Delta^n_t\equiv -\alpha \}_{t,n}$  
both belong to $\cS$, then $\cP_\cS=\cM$. Indeed, given $\Q\in\cP_\cS$, since $\bar{\alpha}$, $-\bar{\alpha}\in\cS^\infty$, $\alpha S_t =(\bar{\alpha}\bcdot S)_t$ and $-\alpha S_t=(-\bar{\alpha}\bcdot S)_t$ are both supermartingales under $\Q$ . We thus conclude that $\Q\in\cM$. 
\end{rem}

The following lemma provides with a characterization of $\cP_\cS$.
\begin{lem}\label{lem:P_S characterization}
Fix $\Q\in\Pi$. Then, $\Q\in\cP_\cS \iff A^\Q_T=0$ $\Q$-a.s.
\end{lem}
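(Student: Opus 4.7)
My plan is to prove both directions using the monotonicity of $A^\Q$ together with the supermartingale characterization, exploiting the fact that $0 \in \cS$ (so increments of $A^\Q$ are non-negative) and that bounded strategies give rise to integrable stochastic integrals.

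For the easy direction ($\Leftarrow$), suppose $A^\Q_T = 0$ $\Q$-a.s. Since $0 \in \cS$ by Definition~\ref{defn:cS}(i), each essential supremum in \eqref{increment A^Q_t} is non-negative, so $A^\Q$ is a non-negative, non-decreasing process starting from $0$. Hence $A^\Q_t = 0$ $\Q$-a.s. for all $t$, and in particular $\E^\Q[A^\Q_T] = 0 < \infty$, so $\Q \in \cQ_\cS$. Then Lemma~\ref{lem:local supermart.} directly yields that $(\Delta \bcdot S)_t = (\Delta \bcdot S)_t - A^\Q_t$ is a local $\Q$-supermartingale for every $\Delta \in \cS$, giving $\Q \in \cP_\cS$.

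For the harder direction ($\Rightarrow$), suppose $\Q \in \cP_\cS$. Fix any $\Delta \in \cS^\infty$ with $|\Delta_t| \le M$ for all $t$. By assumption, $(\Delta \bcdot S)$ is a local $\Q$-supermartingale. Using Remark~\ref{rem:E[S] finite}, the random variable $M \sum_{i=0}^{T-1} |S_{i+1}-S_i|$ is $\Q$-integrable, and it dominates $|(\Delta \bcdot S)_t|$ for every $t$. Thus $(\Delta \bcdot S)$ is a local $\Q$-supermartingale bounded below by an integrable random variable, hence a true $\Q$-supermartingale (see e.g. \cite[Proposition 9.6]{FS-book-11}, already invoked in the proof of Lemma~\ref{lem:supermart.}). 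Consequently,
\[
\E^\Q[\Delta_t \cdot (S_{t+1}-S_t)\mid \cF_t] \le 0 \quad \Q\text{-a.s.}, \ \ t=0,\dots,T-1.
\]

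Taking essential supremum over $\Delta \in \cS^\infty$ and invoking \eqref{increment A^Q_t} gives $A^\Q_{t+1}-A^\Q_t \le 0$ $\Q$-a.s. Combined with the non-negativity of these increments noted above, we conclude $A^\Q_{t+1} - A^\Q_t = 0$ $\Q$-a.s. for all $t$, so $A^\Q_T = 0$ $\Q$-a.s.

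The only delicate step is upgrading the local supermartingale to a true supermartingale, but this is immediate from the uniform integrable lower bound provided by boundedness of $\Delta$ and Remark~\ref{rem:E[S] finite}; no real obstacle arises.
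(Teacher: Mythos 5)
Your proof is correct and follows essentially the same route as the paper, which simply cites Lemma~\ref{lem:local supermart.} (for the direction $A^\Q_T=0\Rightarrow\Q\in\cP_\cS$, after noting $\Q\in\cQ_\cS$) and \cite[Proposition 9.6]{FS-book-11} (to upgrade the local supermartingale $(\Delta\bcdot S)$ to a true one for bounded $\Delta$, yielding $\E^\Q[\Delta_t\cdot(S_{t+1}-S_t)\mid\cF_t]\le 0$ and hence, via \eqref{increment A^Q_t} and $0\in\cS$, vanishing increments of $A^\Q$). You have merely written out the details the paper leaves implicit; no gap.
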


\begin{proof}
This is a consequence of \cite[Proposition 9.6]{FS-book-11} and Lemma \ref{lem:local supermart.}. 
\qed\end{proof}

\begin{rem} [$\cP_\cS$ and $\cQ_\cS$]\label{rem:cP=cQ}
Lemma~\ref{lem:P_S characterization} in particular implies that $\cP_\cS\subseteq\cQ_\cS$.
Observe that if $\cS^\infty$ is composed of all nonnegative bounded trading strategies in $\mathcal{H}$, then $\cP_\cS=\cQ_\cS$. Indeed, for any $\Q\in\cQ_\cS$, we see from \eqref{E[A_t]} that $A^\Q_T=0$ $\Q$-a.s. Then $\Q\in\cP_\cS$ by Lemma~\ref{lem:P_S characterization}.
\end{rem}

To state an equivalent condition for no-arbitrage, we consider 
\begin{equation*}\label{def:P_S,I}
\cP_{\cS,I} := \{\Q\in\cP_\cS : c'_i(0-)\le \E^\Q[\psi_i]\le c'_i(0+),\ \hbox{for all}\ i\in I\}.
\end{equation*}
Recall $\cE^\Q_I$ in \eqref{cE^Q_I}. It is easy to verify the following characterization for $\cE^\Q_I=0$. 

\begin{lem}\label{lem:cE^Q_I=0}
Given $\Q\in\Pi$, we have $c'_i(0-)\le \E^\Q[\psi_i]\le c'_i(0+)\ \hbox{for all $i\in I$}$ if and ony if $\cE^\Q_I= 0.$
\end{lem}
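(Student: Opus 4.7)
The plan is to reduce the statement to a Fenchel-type characterization, applied separately for each index $i\in I$. By \eqref{cE^Q_I=...}, we have
\[
\cE^\Q_I = \sup_{J\in F(I)}\sum_{i\in J}\sup_{\eta\in\R}\bigl(\eta\E^\Q[\psi_i]-c_i(\eta)\bigr).
\]
Since $c_i(0)=0$ by condition (C), each inner supremum $c_i^*(\E^\Q[\psi_i]):=\sup_{\eta\in\R}(\eta\E^\Q[\psi_i]-c_i(\eta))$ is nonnegative (attained at least by $\eta=0$). Consequently, $\cE^\Q_I=0$ if and only if $c_i^*(\E^\Q[\psi_i])=0$ for every $i\in I$.

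It therefore suffices to prove, for fixed $i\in I$ and $p:=\E^\Q[\psi_i]$, that $c_i^*(p)=0$ iff $c'_i(0-)\le p\le c'_i(0+)$. For the ``if'' direction, recall that convexity of $c_i$ together with $c_i(0)=0$ yields the subgradient inequality $c_i(\eta)\ge c_i(0)+q\eta = q\eta$ for every $q\in[c'_i(0-),c'_i(0+)]$ and every $\eta\in\R$. Taking $q=p$, this gives $p\eta-c_i(\eta)\le 0$ for all $\eta$, while the value $0$ is attained at $\eta=0$; hence $c_i^*(p)=0$. For the ``only if'' direction, suppose $p>c'_i(0+)$. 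Since $c'_i(0+)=\lim_{\eta\downarrow 0}c_i(\eta)/\eta$, we can choose $\eta_0>0$ small enough that $c_i(\eta_0)/\eta_0<p$, which yields $p\eta_0-c_i(\eta_0)>0$ and hence $c_i^*(p)>0$. An entirely symmetric argument handles the case $p<c'_i(0-)$ by taking $\eta<0$. This proves the contrapositive.

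Combining the two halves with the decomposition of $\cE^\Q_I$ above delivers the lemma. The argument is completely elementary; there is no real obstacle beyond invoking the standard fact that the subdifferential of a convex function $c_i$ at $0$ coincides with $[c'_i(0-),c'_i(0+)]$, which characterizes when the affine function $\eta\mapsto p\eta$ is a supporting line for $c_i$ at $0$. \qed
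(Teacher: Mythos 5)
Your proposal is correct. The paper itself offers no proof of this lemma (it is stated as "easy to verify"), and your argument --- reducing via \eqref{cE^Q_I=...} to the termwise condition $\sup_{\eta\in\R}(\eta\E^\Q[\psi_i]-c_i(\eta))=0$ and characterizing this through the subgradient inequality at $0$, i.e. $c_i(\eta)\ge p\eta$ for all $\eta$ precisely when $p\in[c'_i(0-),c'_i(0+)]$ --- is exactly the intended elementary verification, carried out cleanly in both directions.
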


To derive a model-independent FTAP, we need the following lemma.

\begin{lem}\label{lem:infE[A_T]+cE}
Suppose $\psi_i$ is continuous and $|\psi_i|$ satisfies \eqref{Phi linear grow} for all $i\in I$. Then, 
\[
\cP_{\cS,I}=\emptyset\quad \implies\quad \inf_{\Q\in\Pi}\{\E^\Q[A^\Q_T]+\cE^\Q_I\}>0. 
\]
\end{lem}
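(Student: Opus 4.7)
The plan is a compactness plus lower semicontinuity argument on the weakly compact set $\Pi$ (cf.\ Remark~\ref{rem:Pi compact}). Specifically, I would show that $F(\Q):=\E^\Q[A^\Q_T]+\cE^\Q_I$ is lower semicontinuous on $\Pi$, so that the infimum is attained at some $\Q^*\in\Pi$. If the infimum equaled $0$, nonnegativity of each summand would force $\E^{\Q^*}[A^{\Q^*}_T]=0$ and $\cE^{\Q^*}_I=0$. Then Lemma~\ref{lem:P_S characterization} gives $\Q^*\in\cP_\cS$, and Lemma~\ref{lem:cE^Q_I=0} gives $c'_i(0-)\le\E^{\Q^*}[\psi_i]\le c'_i(0+)$ for all $i\in I$, so $\Q^*\in\cP_{\cS,I}$, contradicting the hypothesis. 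Hence the infimum is strictly positive.

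For the lower semicontinuity of $\Q\mapsto\E^\Q[A^\Q_T]$, I would invoke the representation
\[
\E^\Q[A^\Q_T]=\sup_{\Delta\in\cS^\infty_c}\E^\Q[(\Delta\bcdot S)_T]
\]
from Lemma~\ref{lem:S->S_c}, together with the fact, already established inside the proof of Proposition~\ref{thm:duality}(iii), that each $g_\Delta(\Q):=\E^\Q[(\Delta\bcdot S)_T]$ is weakly continuous on $\Pi$; a supremum of continuous functions is lower semicontinuous. For $\Q\mapsto\cE^\Q_I$, I would use \eqref{cE^Q_I=...} and first verify that $\Q\mapsto\E^\Q[\psi_i]$ is continuous on $\Pi$ for each $i\in I$. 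This follows from the continuity of $\psi_i$ and the linear growth bound on $|\psi_i|$: applying \cite[Lemma 4.3]{Villani-book-09} to $\psi_i$ and to $-\psi_i$, both dominated by the continuous function $\ell(x):=K(1+\sum_{t,n}x^n_t)$ whose $\Q$-expectation depends only on the prescribed marginals $\{\mu^n_t\}$ and is therefore constant on $\Pi$, yields matching limsup and liminf inequalities. Consequently, for each fixed $i\in I$ and $\eta\in\R$ the map $\Q\mapsto \eta\E^\Q[\psi_i]-c_i(\eta)$ is continuous; the sup over $\eta\in\R$ is then lower semicontinuous, finite sums preserve lower semicontinuity, and the final supremum over $J\in F(I)$ does as well.

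The only real subtlety lies in the unboundedness of $\psi_i$ and of the integrand $(\Delta\bcdot S)_T$, which rules out the standard Portmanteau characterisation of weak convergence. This is bypassed in both pieces by the same device: domination by a continuous function of linear growth whose $\Q$-expectation is pinned down by the fixed marginals on $\Pi$, so that the two-sided form of \cite[Lemma 4.3]{Villani-book-09} promotes semicontinuity to genuine continuity of the integral against $\Q$. Once continuity/lower semicontinuity of the two building blocks is in hand, the contradiction with $\cP_{\cS,I}=\emptyset$ outlined above closes the argument.
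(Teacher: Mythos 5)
Your proposal is correct and follows essentially the same route as the paper: the paper's proof also combines weak compactness of $\Pi$ with the continuity of $\Q\mapsto\E^\Q[(\Delta\bcdot S)_T]$ (for $\Delta\in\cS^\infty_c$, via Lemma~\ref{lem:S->S_c}) and of $\Q\mapsto\E^\Q[\psi_i]$, merely phrased as extracting a weak limit of a minimizing sequence and exchanging limit and supremum rather than stating lower semicontinuity and attainment outright, before invoking Lemmas~\ref{lem:P_S characterization} and \ref{lem:cE^Q_I=0} to produce $\Q^*\in\cP_{\cS,I}$ and the contradiction. No gaps.
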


\begin{proof}
Assume to the contrary that $\inf_{\Q\in\Pi}\{\E^\Q[A^\Q_T]+\cE^\Q_I\}=0$. For any $\eps>0$, there exists $\Q_\eps\in\Pi$ such that $0\le \E^{\Q_\eps}[A^{\Q_\eps}_T]+\cE^{\Q_\eps}_I<\eps$. Since $\Pi$ is weakly compact (Remark~\ref{rem:Pi compact}), $\Q_\eps$ must converge weakly to some $\Q^*\in\Pi$. For each $\Delta\in\cS^\infty_c$, we can argue as in Proposition~\ref{thm:duality} (iii) to show that $\Q\mapsto\E^\Q[(\Delta\bcdot x)_T]$ is continuous on $\Pi$ under the topology of weak convergence. Also, for each $i\in I$, since $\psi_i$ is continuous and $|\psi_i|$ satisfies \eqref{Phi linear grow}, we may argue as in Proposition~\ref{thm:duality} (ii) to show that $\Q\mapsto\E^\Q[\psi_i]$ is continuous on $\Pi$. Now, by using Lemma~\ref{lem:S->S_c},
\begin{align*}
0&=\lim_{\eps\to 0}\E^{\Q_\eps}[A^{\Q_\eps}_T]+\cE^{\Q_\eps}_I
 = \lim_{\eps\to 0}\sup_{\Delta\in \cS^\infty_c} \E^{\Q_\eps} [(\Delta\bcdot S)_T]+\sup_{\eta\in\mathcal{R}^I} \sum_{i\in I}(\eta_i \E^{\Q_\eps}[\psi_i]-c_i(\eta_i))\nonumber\\
\nonumber&\ge   \sup_{\Delta\in \cS^\infty_c} \lim_{\eps\to 0}\E^{\Q_\eps} [(\Delta\bcdot S)_T]+\sup_{\eta\in\mathcal{R}^I}\lim_{\eps\to 0} \sum_{i\in I}(\eta_i \E^{\Q_\eps}[\psi_i]-c_i(\eta_i))\\
&=\sup_{\Delta\in \cS^\infty_c} \E^{\Q^*} [(\Delta\bcdot S)_T] +\sup_{\eta\in\mathcal{R}^I} \sum_{i\in I}(\eta_i \E^{\Q^*}[\psi_i]-c_i(\eta_i))= \E^{\Q^*}[A^{\Q^*}_T]+\cE^{\Q^*}_I,
\end{align*}
Thus, $\E^{\Q^*}[A^{\Q^*}_T]=\cE^{\Q^*}_I=0$. By Lemmas~\ref{lem:P_S characterization} and \ref{lem:cE^Q_I=0}, we have $\Q^*\in\cP_{\cS,I}$, contradicting $\cP_{\cS,I}=\emptyset$.
\qed\end{proof}

Now, we are ready to present the main result of this section.  

\begin{thm}\label{thm:NA}
Suppose $\psi_i$ is continuous and $|\psi_i|$ satisfies \eqref{Phi linear grow} for all $i\in I$. Then, there is no model-independent arbitrage under constraint $\cS$ if and only if $\cP_{\cS,I}\neq\emptyset$.
\end{thm}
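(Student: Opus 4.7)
My plan is to prove the two implications separately: the ``if'' direction follows by a direct $\Q$-expectation argument using the supermartingale machinery of Section~\ref{sec:duality}, while the ``only if'' direction is handled by a contrapositive that feeds the superhedging duality (Theorem~\ref{thm:duality main}) with a small positive constant payoff.

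For the first direction, I would fix any $\Q\in\cP_{\cS,I}$ and, toward a contradiction, assume $u\in\cU_0$, $\eta\in\mathcal{R}^I$, $\Delta\in\cS$ realize $\Psi_{u,\eta,\Delta}>0$ on $(\R^d_+)^T$. Taking $\Q$-expectations term by term: the static vanilla-call piece contributes $\sum_{t,n}\int u^n_t\,d\mu^n_t=0$ because $u\in\cU_0$; the static tradable-option piece contributes $\sum_{i\in I}(\eta_i\E^\Q[\psi_i]-c_i(\eta_i))\le \cE^\Q_I = 0$, where the last equality comes from Lemma~\ref{lem:cE^Q_I=0} applied to $\Q\in\cP_{\cS,I}$; for the dynamic piece I note that, by \eqref{Phi linear grow} and Remark~\ref{rem:E[S] finite}, the other two terms are $\Q$-integrable, so $(\Delta\bcdot S)_T$ is bounded below by a $\Q$-integrable random variable, and since $\cP_\cS\subseteq\cQ_\cS$ (Remark~\ref{rem:cP=cQ}), Lemma~\ref{lem:supermart.} upgrades $(\Delta\bcdot S)_{\cdot}-A^\Q_{\cdot}$ to a true $\Q$-supermartingale null at $0$, giving $\E^\Q[(\Delta\bcdot S)_T]\le \E^\Q[A^\Q_T]=0$ thanks to Lemma~\ref{lem:P_S characterization}. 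Summing the three bounds yields $\E^\Q[\Psi_{u,\eta,\Delta}]\le 0$, contradicting $\Psi_{u,\eta,\Delta}>0$ $\Q$-a.s.

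For the reverse direction I go by contrapositive: assuming $\cP_{\cS,I}=\emptyset$, Lemma~\ref{lem:infE[A_T]+cE} supplies $\delta>0$ with $\E^\Q[A^\Q_T]+\cE^\Q_I\ge \delta$ for all $\Q\in\Pi$. Applying Theorem~\ref{thm:duality main} to the constant payoff $\Phi\equiv \delta/2$,
\[
D(\delta/2) \;=\; P(\delta/2) \;=\; \sup_{\Q\in\cQ_{\cS,I}}\bigl\{\delta/2 - \E^\Q[A^\Q_T]-\cE^\Q_I\bigr\}\;\le\; -\delta/2\;<\;0,
\]
with the convention $\sup_{\emptyset}=-\infty$ covering the case $\cQ_{\cS,I}=\emptyset$. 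I then extract from \eqref{def:D} a nearly-optimal triple $u\in\cU$, $\eta\in\mathcal{R}^I$, $\Delta\in\cS$ with $\Psi_{u,\eta,\Delta}\ge \delta/2$ and $\alpha:=\sum_{t,n}\int u^n_t\,d\mu^n_t<0$. Subtracting the constant $\alpha$ from, say, $u^1_T$ keeps the modified strategy $\tilde u$ in $\cU$ and drives it into $\cU_0$, while
\[
\Psi_{\tilde u,\eta,\Delta}\;=\;\Psi_{u,\eta,\Delta}-\alpha\;\ge\;\delta/2-\alpha\;>\;0 \quad\text{on }(\R^d_+)^T,
\]
which is precisely a model-independent arbitrage in the sense of Definition~\ref{defn:NA}, contradicting the standing assumption.

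The hard part will be the converse: it hinges on the superhedging duality forcing the superhedging price of a strictly positive constant to be strictly negative, which in turn rests on the quantitative gap delivered by Lemma~\ref{lem:infE[A_T]+cE} (whose proof uses the weak compactness of $\Pi$ in Remark~\ref{rem:Pi compact} and continuity of the $\psi_i$'s). The small but essential bookkeeping step is absorbing the residual cost $\alpha$ into the constant part of a single $u^n_t$, which converts a ``cheap super-replication of $\delta/2$'' into an element of the class $\cU_0\times\mathcal{R}^I\times\cS$ required by Definition~\ref{defn:NA}; without it one would only have super-replication of a positive amount at strictly negative vanilla cost, which is not yet arbitrage in the stated sense.
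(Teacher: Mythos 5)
Your proposal is correct and follows essentially the same route as the paper: the ``if'' direction is the same expectation argument via Lemma~\ref{lem:supermart.}, Lemma~\ref{lem:P_S characterization} and Lemma~\ref{lem:cE^Q_I=0}, and the ``only if'' direction is the same duality argument combining Theorem~\ref{thm:duality main} with Lemma~\ref{lem:infE[A_T]+cE} (the paper applies it to $\Phi\equiv 0$ rather than $\delta/2$, an immaterial difference). Your explicit bookkeeping step, absorbing the negative cost $\alpha$ into the constant part of some $u^n_t$ to land in $\cU_0$, is exactly what the paper leaves implicit in the phrase ``which already induces model-independent arbitrage.''
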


\begin{proof}
To prove ``$\Longleftarrow$'', suppose there is model-independent arbitrage. That is, there exist $u\in\cU_0$, $\eta\in\mathcal{R}^I$, and $\Delta\in\cS$ such that 
\[
\sum_{t=1}^T\sum_{n=1}^d u^n_t(x^n_t) + \sum_{i\in I}(\eta_i\psi_i(x)-c_i(\eta_i)) +(\Delta\bcdot x)_T >0\ \ \ \hbox{for all}\ x\in\R^T_+.
\]
It follows that for any $\Q\in\cQ_{\cS,I}$,
\[
\sum_{t=1}^T\sum_{n=1}^d u^n_t(S^n_t) + \sum_{i\in I}(\eta_i\psi_i(S)-c_i(\eta_i))+(\Delta\bcdot S)_T -A^\Q_T > -A^\Q_T\ \ \ \Q\hbox{-a.s.}
\]
By taking expectation on both sides, we obtain from Lemmas~\ref{lem:supermart.} that
\[
\cE^\Q_I\ge \sum_{i\in I}(\eta_i\E^\Q[\psi_i]-c_i(\eta_i))  > -\E^\Q[A^\Q_T],\ \ \ \hbox{for all}\ \Q\in\cQ_{\cS,I}.
\]
If $\Q\in\cP_{\cS,I}$, then the above inequality becomes $\cE^\Q_I>0$, thanks to Lemma~\ref{lem:P_S characterization}. However, in view of Lemma \ref{lem:cE^Q_I=0}, this implies $\E^\Q[\psi_i]\notin [c'_i(0-),c'_i(0+)]$ for some $i\in I$ and thus $\Q\notin\cP_{\cS,I}$, a contradiction. Hence, we conclude that $\cP_{\cS,I}=\emptyset$.\\
To show ``$\Longrightarrow$'', we assume to the contrary that $\cP_{\cS,I}=\emptyset$, and intend to find model-independent arbitrage. By Theorem~\ref{thm:duality main} and Lemma~\ref{lem:infE[A_T]+cE}, we have
\[
D(0)=\sup_{\Q\in\cQ_{\cS,I}}\{\E^\Q[-A_T^\Q]-\cE^\Q_I\}=-\inf_{\Q\in\cQ_{\cS,I}}\{\E^\Q[A_T^\Q]+\cE^\Q_I\}<0,
\]
which already induces model-independent arbitrage.
\qed\end{proof}

Let us recall the set-up in Section~\ref{sec:psi_i liquid}: every option $\psi_i$ can actually be liquidly traded at time $0$; that is, for each $i\in I$, we have $c'_i(\eta)$ being a constant $p_i\in\R$. Hence,
\begin{equation*}\label{P_S=P_S,p_i}
\cP_{\cS,I} = \cP_{\cS,(p_i)_{i\in I}}:= \{\Q\in\cP_\cS:\E^\Q[\psi_i]=p_i,\ \forall i\in I\}.
\end{equation*}

\begin{coro}\label{coro:NA p_i constant}
Suppose $\psi_i$ is continuous, $|\psi_i|$ satisfies \eqref{Phi linear grow}, and $\psi_i$ can be liquidly traded at the price $p_i\in\R$, for all $i\in I$.
\begin{enumerate}
\item [(i)] No model-independent arbitrage under constraint $\cS$ if and only if $\cP_{\cS,(p_i)_{i\in I}} \neq\emptyset$. 
\item [(ii)] Furthermore, suppose there is no portfolio constraint, i.e. $\cS=\cH$. Then, there is no model-independent arbitrage if and only if $\cM_{(p_i)_{i\in I}}\neq\emptyset$.
\end{enumerate}
\end{coro}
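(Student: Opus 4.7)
The plan is to deduce both parts directly from Theorem~\ref{thm:NA} by identifying the sets $\cP_{\cS,I}$ with $\cP_{\cS,(p_i)_{i\in I}}$ under the liquidity assumption, and then specializing to the unconstrained case.

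For part (i), I would first observe that the liquid tradability of each $\psi_i$ at price $p_i$ means $c_i$ is linear with slope $p_i$, hence $c'_i(0-)=c'_i(0+)=p_i$ for every $i\in I$. Consequently the condition $c'_i(0-)\le \E^\Q[\psi_i]\le c'_i(0+)$ collapses to $\E^\Q[\psi_i]=p_i$, so that by definition $\cP_{\cS,I}=\cP_{\cS,(p_i)_{i\in I}}$. The assumptions of Theorem~\ref{thm:NA} are satisfied (continuity of $\psi_i$ and the linear-growth bound on $|\psi_i|$), so (i) is immediate.

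For part (ii), the goal is to upgrade $\cP_\cS$ to $\cM$ under $\cS=\cH$. I would combine two earlier observations. First, from Remark~\ref{rem:cM=cP}, one always has $\cM\subseteq\cP_\cS$ as soon as positive and negative constant strategies lie in $\cS$, which is trivially true for $\cS=\cH$. Second, from Lemma~\ref{lem:P_S characterization} (applied through Remark~\ref{rem:cP=cQ}) one has $\cP_\cS\subseteq\cQ_\cS$, and Remark~\ref{rem:S=H} gives $\cQ_\cS=\cM$ when $\cS=\cH$. Chaining these inclusions yields $\cP_\cS=\cM$, so that $\cP_{\cS,(p_i)_{i\in I}}=\cM_{(p_i)_{i\in I}}$, and part (ii) follows at once from part (i).

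Neither step is a real obstacle; the work is entirely bookkeeping of the definitions introduced in Sections~\ref{sec:duality} and \ref{sec:NA}. The only mildly delicate point is making sure the two-sided bound $c'_i(0-)\le\E^\Q[\psi_i]\le c'_i(0+)$ in the definition of $\cP_{\cS,I}$ really does reduce to an equality in the linear case, which is why the plan begins by spelling out the differentiability of $c_i$ at $0$. Once this identification is in place, the corollary is a direct specialization of Theorem~\ref{thm:NA}.
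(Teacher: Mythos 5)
Your proposal is correct and follows essentially the same route as the paper: part (i) is exactly the identification $\cP_{\cS,I}=\cP_{\cS,(p_i)_{i\in I}}$ (linear $c_i$ forces $c_i'(0\pm)=p_i$) plugged into Theorem~\ref{thm:NA}, and part (ii) is the same specialization the paper makes for Corollary~\ref{coro:p_i constant}. Your chain $\cM\subseteq\cP_\cS\subseteq\cQ_\cS=\cM$ via Remarks~\ref{rem:cP=cQ} and \ref{rem:S=H} is a harmless variant of invoking Remark~\ref{rem:cM=cP} directly (which gives $\cP_\cS=\cM$ at once since $\pm$ constant strategies lie in $\cH$), so nothing is missing.
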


\begin{rem}
Corollary~\ref{coro:NA p_i constant} (ii) recovers \cite[Theorem 1.3]{ABPS13}, for the case where tradable options at time $0$ include at least vanilla calls of all maturities and strikes. 
\end{rem}

\begin{rem}\label{rem:stone-cech}
Among different model-independent versions of the fundamental theorem of asset pricing (FTAP), Theorem~\ref{thm:NA} and \cite[Theorem 1.3]{ABPS13} are unique in their ability to accommodate general collections of tradable options. While our framework deals with a wide range of tradable options beyond liquidly traded vanilla calls, \cite{ABPS13} does not even assume that vanilla calls have to be tradable. On the other hand, while \cite{ABPS13} implicitly assume that any tradable option is traded liquidly, we allow for less liquid options by taking into account their limit order books.

Also note that our method differs largely from that in \cite{ABPS13}. Techniques in functional analysis, which involves the use of Hahn-Banach theorem, are used to establish \cite[Theorem 1.3]{ABPS13}; see \cite[Proposition 2.3]{ABPS13}. In our case, we first derive a superhedging duality in Theorem~\ref{thm:duality main} via optimal transport. Leveraging on this duality, we obtain the desired FTAP in Theorem~\ref{thm:NA}.
\end{rem}

\subsection{Comparison with the classical theory}\label{sec:compare classical}
 In the classical theory, a physical measure $\P$ on $(\Omega, \cF_T)$ is a priori given. We say there is no-arbitrage under $\P$ with the constraint $\cS$ if, for any $\Delta\in\cS$, $(\Delta\bcdot S)_T\ge 0$ $\P$-a.s. implies $(\Delta\bcdot S)_T=0$ $\P$-a.s.

Consider the positive cone $\mathcal{K}:=\{\lambda\Delta \mid \Delta\in\cS,\ \lambda\ge 0\}$ generated by $\cS$. For all $t=1,\dots,T$, we define $\mathcal{S}_t := \{\Delta_t \mid \Delta\in\mathcal{S}\}$, $\mathcal{K}_t:=\{\Delta_t \mid \Delta\in\mathcal{K}\}$, and introduce
\begin{align*}
N_t &:=\{\eta\in  L^0(\Omega,\cF_{t-1},\P;\R^d) : \eta\cdot (S_t-S_{t-1}) =0\ \P\hbox{-a.s.}\},\\
N^\perp_t &:=\{\xi\in  L^0(\Omega,\cF_{t-1},\P;\R^d) : \xi\cdot \eta =0\ \P\hbox{-a.s. for all }\eta\in N_t\}.
\end{align*}
By \cite[Lemma 1.66]{FS-book-11}, every $\xi\in L^0(\Omega,\cF_{t-1},\P;\R^d)$ has a unique decomposition $\xi = \eta + \xi^\perp$, with $\eta\in N_t$ and $\xi^\perp\in N_t^\perp$.
We denote by $\hat{\mathcal{S}}_t$ and $\hat{\mathcal{K}}_t$ the closures of $\mathcal{S}_t$ and $\mathcal{K}_t$, respectively, in $L^0(\Omega,\cF_{t-1},\P;\R^d)$. The following characterization of no-arbitrage under $\P$ is taken from \cite[Theorem 9.9]{FS-book-11}.

\begin{prop}\label{thm:NA no option}
Suppose that for all $t=1,\dots,T$,
\begin{equation}\label{closedness}
\cS_t  = \hat{\mathcal{S}}_t,\quad  \hat{\mathcal{K}}_t\cap L^\infty(\Omega,\cF_{t-1},\P;\R^d)\subseteq \mathcal{K}_t,\quad\hbox{and}\quad \xi^\perp\in \cS_t\ \hbox{for any}\  \xi\in \cS_t. 
\end{equation}
Then, there is no arbitrage under $\P$ with the constraint $\cS$ if and only if
\[
\begin{split}
\cP_\cS(\P)& := \bigl\{\Q\approx\P : S_t\in L^1(\Q)\ \hbox{and}\ \{(\Delta\bcdot S)_t\}_{t=0}^T\ \hbox{is a local}\ \Q\hbox{-supermartingale},\\
&\hspace*{8.5cm} \forall\Delta\in\cS\bigr\}\neq\emptyset.
\end{split}
\]
\end{prop}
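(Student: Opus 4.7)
The plan is to follow the classical FTAP-under-constraints route: the easy direction ($\Leftarrow$) is a direct supermartingale argument, while the hard direction ($\Rightarrow$) proceeds by a backward induction over time steps combined with a Hahn-Banach-type separation in $L^0$.

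For the easy direction, assume $\Q\in\cP_\cS(\P)$ and let $\Delta\in\cS$ satisfy $(\Delta\bcdot S)_T\ge 0$ $\P$-a.s., hence $\Q$-a.s.\ since $\Q\approx\P$. The local $\Q$-supermartingale $(\Delta\bcdot S)$ starts at $0$ and is bounded below at time $T$ by the $\Q$-integrable constant $0$, so the same argument as in Lemma~\ref{lem:supermart.} (with $A^\Q\equiv 0$ and $\varphi=0$) promotes it to a true $\Q$-supermartingale, giving $\E^\Q[(\Delta\bcdot S)_T]\le 0$. Combined with nonnegativity, this forces $(\Delta\bcdot S)_T=0$ $\Q$-a.s., hence $\P$-a.s.

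For the hard direction, I would first reduce multi-period no-arbitrage to a sequence of one-period conditions: no arbitrage holds globally iff, for every $t$ and every $\xi\in\cS_t$ (equivalently $\xi\in\cK_t$), the inequality $\xi\cdot(S_t-S_{t-1})\ge 0$ $\P$-a.s.\ forces equality $\P$-a.s. The third condition in \eqref{closedness}, $\xi^\perp\in\cS_t$ for every $\xi\in\cS_t$, lets me replace $\xi$ by its projection onto $N_t^\perp$ and thereby work modulo the degeneracy kernel $N_t$. Then, for each $t$, I would construct a one-step density $Z_t>0$ satisfying $\E^\P[Z_t\,\xi\cdot(S_t-S_{t-1})\mid\cF_{t-1}]\le 0$ for all bounded $\xi\in\cK_t$; this is a Kreps-Yan style separation applied to the cone $\cK_t\cdot(S_t-S_{t-1})$ inside $L^0(\P)$, and the closedness conditions $\cS_t=\hat{\mathcal{S}}_t$ and $\hat{\mathcal{K}}_t\cap L^\infty\subseteq\cK_t$ are exactly what force the separating functional to be strictly positive (so $\Q\approx\P$) and representable as a probability density. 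The target measure is then $d\Q/d\P\propto\prod_{t=1}^T Z_t$, normalized and truncated by bounded $\cF_{t-1}$-measurable weights so that each $S_t\in L^1(\Q)$ while the one-step supermartingale inequality is preserved.

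The main obstacle is combining the separation step with the localization needed to handle unbounded $\Delta\in\cS$. The natural Hahn-Banach argument yields $\E^\Q[\xi\cdot(S_t-S_{t-1})\mid\cF_{t-1}]\le 0$ only for bounded $\xi$; upgrading this to the claimed local $\Q$-supermartingale property for every $\Delta\in\cS$ requires a stopping argument in the spirit of Lemma~\ref{lem:local supermart.}, again exploiting $\cS_t=\hat{\mathcal{S}}_t$. The three closedness hypotheses in \eqref{closedness} are precisely what make both the separation and this extension function; as observed in Section~\ref{sec:compare classical}, the presence of liquidly traded vanilla calls in the model-independent framework renders them unnecessary.
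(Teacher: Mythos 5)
First, a point of orientation: the paper does not prove this proposition at all. It is quoted directly from \cite[Theorem 9.9]{FS-book-11} (``The following characterization of no-arbitrage under $\P$ is taken from\dots''), so there is no in-paper argument to compare with; what you have written is an attempted reconstruction of the textbook proof. Your easy direction is fine: for $\Q\in\cP_\cS(\P)$ and $\Delta\in\cS$ with $(\Delta\bcdot S)_T\ge 0$ $\P$-a.s., the backward-induction/Fatou argument of Lemma~\ref{lem:supermart.} (with $A^\Q\equiv 0$ and $\varphi=0$) upgrades the local $\Q$-supermartingale to a true one, whence $\E^\Q[(\Delta\bcdot S)_T]\le 0$ and so $(\Delta\bcdot S)_T=0$ a.s.

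The hard direction, however, is a program rather than a proof, and the two places you treat as routine are exactly where the work lies. You cannot run a Kreps--Yan separation ``inside $L^0(\P)$'': $L^0$ has trivial topological dual, so the separation must be carried out in $L^1(\tilde\P)$ for a suitably chosen equivalent measure under which the increments are integrable, and it presupposes that the set of one-period outcomes $\bigl\{\xi\cdot(S_t-S_{t-1}):\xi\in\cK_t\bigr\}-L^0_+(\Omega,\cF_t,\P)$ is closed in probability. Establishing that closedness (via a randomized Bolzano--Weierstrass/measurable-selection argument, working modulo $N_t$) is precisely where the hypotheses $\cS_t=\hat{\cS}_t$, $\hat{\cK}_t\cap L^\infty\subseteq\cK_t$, and $\xi^\perp\in\cS_t$ are consumed; your sketch asserts rather than proves it, as it also does for the reduction of multi-period no-arbitrage to the one-period conditions and for the gluing of the one-step densities into a single $\Q\approx\P$ with $S_t\in L^1(\Q)$ while preserving the conditional inequalities. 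Conversely, the step you flag as the ``main obstacle'' is the easy one: once $\E^\Q[\xi\cdot(S_t-S_{t-1})\mid\cF_{t-1}]\le 0$ holds for all bounded $\xi\in\cK_t$, the local supermartingale property for arbitrary $\Delta\in\cS$ follows from the stopping argument of Lemma~\ref{lem:local supermart.} alone and does not require $\cS_t=\hat{\cS}_t$. So the proposal correctly identifies the architecture of the proof of \cite[Theorem 9.9]{FS-book-11}, but its essential analytic content is left unproved; for the purposes of this paper a citation suffices, and if you do want a self-contained argument you must supply the closedness and separation steps in full.
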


Theorem~\ref{thm:NA} can be viewed as a generalization of Proposition~\ref{thm:NA no option} to a model-independent setting. There is, however, a notable discrepancy: the closedness assumption \eqref{closedness} is no longer needed in Theorem~\ref{thm:NA}. In the following, we provide a detailed illustration of this discrepancy in a simple example.

A typical example showing that condition \eqref{closedness} is indispensable for Proposition~\ref{thm:NA no option} is a one-period model containing two risky assets $(S^1,S^2)$, with the collection of constrained strategies
\begin{equation*}\label{cS in eg}
\cS:=\{(\Delta^1,\Delta^2)\in\R^2\mid (\Delta^1)^2 + (\Delta^2-1)^2\le 1\}.
\end{equation*}
One easily sees that \eqref{closedness} is not satisfied, as $\hat{\mathcal{K}}_1\cap L^\infty(\Omega,\cF_0,\P)=\bar{\mathcal{K}}=\{\Delta^2\ge 0\}$ is not contained in $\mathcal{K}_1=\mathcal{K}=\{(0,0)\}\cup\{\Delta^2>0\}$. At time $0$, suppose $(S^1_0,S^2_0)=(1,1)$, and we obtain, by analyzing market data, that a reasonable pricing measure $\Q$ should be such that
\begin{equation}\label{S_1, S_2 distribution}
S^1_1\ \hbox{is uniformly distributed on}\ [1,2],\ \hbox{and}\ S^2_1\ \hbox{is concentrated solely on}\ \{0\}.
\end{equation}

Under the classical framework, the physical measure $\P$ should satisfy \eqref{S_1, S_2 distribution}, so that any pricing measure $\Q\approx\P$ admits the same property. Given $\Delta\in\cS$, it can be checked that if $(\Delta\cdot S)_T = \Delta^1(S^1_1-1)-\Delta^2 \ge 0$ $\P$-a.s., then $\Delta^1=\Delta^2=0$. There is therefore no arbitrage under $\P$ with the constraint $\cS$. However, as observed in \cite[Example 2.1]{BZ14}, $\cP_\cS(\P)$ is empty. Indeed, given $\Q\approx\P$, since $\E^\Q[S^1_1-1]>0$, by taking $\Delta\in\cS$ with $\Delta^2/\Delta^1<\E^\Q[S^1_1-1]$, one gets $\E^\Q[(\Delta\cdot S)_T]=\Delta^1\E^\Q[S^1_1-1]-\Delta^2>0$. 

Under our model-independent framework, \eqref{S_1, S_2 distribution} is reflected through market prices of vanilla calls $(S^1_1-K)^+$ and $(S^2_1-K)^+$, for all $K\ge 0$. That is, $\Pi$ in \eqref{Pi 2} is the collection $\{\Q\in\cP(\Omega):\eqref{S_1, S_2 distribution}\ \hbox{is satisfied}\}$. 
Given $\Q\in\Pi$, since $\E^\Q[S^1_1]=3/2$, by taking $\Delta\in\cS$ with $\Delta^2/\Delta^1<1/2$, one gets $\E^\Q[(\Delta\cdot S)_T]=\Delta^1\E^\Q[S^1_1-1]-\Delta^2>0$. This shows that $\cP_\cS=\emptyset$. Note that this does not violate Theorem~\ref{thm:NA}, as there {\it is} model-independent arbitrage. To see this, consider trading dynamically with $\Delta\in\cS$ satisfying $\Delta^2<\eps\Delta^1$ for some $\eps\in (0,1)$, and holding a static position $u^1_1$ given by ${-\Delta^1 (S^1_1-\eps)^+ +(1+\eps)\Delta^1\in\mathcal{C}}$. Observe that the 
initial wealth required is 
\[
\int_{\R_+} u^1_1(x) d\mu^{1}_1(x) = -\Delta^1(3/2-\eps)+(1+\eps)\Delta^1=(2\eps-1/2)\Delta^1,
\]
while the terminal wealth is always strictly positive: for any $(S^1_1,S^2_1)\in\R^2_+$,
\begin{equation}\label{arbitrage eg}
u^1_1 + (\Delta\cdot S)_1 = 
\begin{cases}
2\eps\Delta^1 + \Delta^2(S^2_1-1)>\eps\Delta^1-\Delta^2 >0,\ \ &\hbox{if}\ S^1_1\ge \eps,\\
(\eps\Delta^1-\Delta^2) + (\Delta^1S^{1}_1 +\Delta^2S^2_1)> 0,\ \ &\hbox{if}\ S^2_1<\eps.
\end{cases}
\end{equation}
By taking $\eps\in(0,1/4]$, we have the required initial wealth no greater than $0$, and thus obtain model-independent arbitrage.

\begin{rem}
In the model-independent setting, we may hold static positions in
 vanilla calls $(S^i_1 - K)^+$ for all $i\in\{1,2\}$ and $K>0$, in addition to trading $S=(S^1,S^2)$ dynamically. This additional flexibility, unavailable under the classical framework, allows us to construct the arbitrage in \eqref{arbitrage eg}.  

It is of interest to see if \eqref{closedness} can be relaxed in the classical case, when enough tradable options are available at time $0$. Recently, with additional tradable options, \cite{BZ14} obtained a result similar to Proposition~\ref{thm:NA no option} with a collection $\mathcal{P}$ of possible physical measures a priori given. However, since their method allows for only {\it finitely many} tradable options, a closedness assumption similar to \eqref{closedness} is still assumed.
\end{rem}

\subsection{Optimal arbitrage under a model-independent framework} In view of Theorem~\ref{thm:duality main} and Proposition \ref{prop:risk measure}, the problems of superhedging and risk-measuring are well-defined as long as $\cQ_{\cS,I}\neq\emptyset$, which is weaker than the no-arbitrage condition $\cP_{\cS,I}\neq\emptyset$. It is therefore of interest to provide characterizations for the condition $\cQ_{\cS,I}\neq\emptyset$. 

\begin{defn}
Consider
\begin{equation}\label{OA}
G_{\cS,I}:=\sup\{ a\in\R : u\in\cU_0, \eta\in\mathcal{R}^I, \hbox{and } \Delta\in\cS \hbox{such that } \Psi_{u,\eta,\Delta}(x)>a, \forall x\in(\R^d_+)^T\}.
\end{equation}
By definition, $G_{\cS,I}\ge 0$. If $G_{\cS,I}>0$, we say it is the (model-independent) {\it optimal arbitrage profit}.
\end{defn}
The notion of optimal arbitrage goes back to \cite{FK10OA}, where the authors studied the highest return one can achieve relative to the market capitalization in a diffusion setting. Generalization to semimartingale models and model uncertainty settings have been done in \cite{CT13} and \cite{FK11}, respectively. Our definition above is similar to the formulation in \cite[Section 3]{CT13}. It is straightforward from the definitions of $D(0)$ and $G_{\cS,I}$ that
\[
G_{\cS,I} = -D(0) =\inf_{\Q\in\cQ_{\cS,I}}\{\E^\Q[A^\Q_T] + \cE^\Q_I\}.
\]
This immediately yields the following result.

\begin{prop}\label{thm:NUP} 
$\ \ (i)$\ \ $G_{\cS,I}=0$ $\iff$ $\cP_{\cS,I}\neq\emptyset$.\quad $(ii)$\ \ $G_{\cS,I}<\infty$ $\iff$ $\cQ_{\cS,I}\neq\emptyset$.
\end{prop}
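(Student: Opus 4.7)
The plan is to read off both parts of the proposition from the identity
\[
G_{\cS,I}\;=\;-D(0)\;=\;-P(0)\;=\;\inf_{\Q\in\cQ_{\cS,I}}\bigl\{\E^\Q[A^\Q_T]+\cE^\Q_I\bigr\},
\]
where the first equality is immediate from comparing \eqref{OA} with \eqref{def:D} (in \eqref{def:D} one takes $\Phi=0$ and rewrites the infimum over superhedging costs $a$), the second equality is Theorem~\ref{thm:duality main} applied to the upper semicontinuous function $\Phi\equiv 0$, and the third is the definition \eqref{def:P} of $P$. Throughout, the infimum over the empty set is interpreted as $+\infty$, which is consistent with $D(0)=-\infty$ if $\cQ_{\cS,I}=\emptyset$.

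Once this identity is in hand, part (ii) is essentially bookkeeping. If $\cQ_{\cS,I}\neq\emptyset$, any $\Q\in\cQ_{\cS,I}$ satisfies $\E^\Q[A^\Q_T]<\infty$ (Definition~\ref{def:Q_S}) and $\cE^\Q_I<\infty$ (see \eqref{Q_S,I}), so $G_{\cS,I}<\infty$. Conversely, if $\cQ_{\cS,I}=\emptyset$, the infimum in the identity is $+\infty$, hence $G_{\cS,I}=\infty$.

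For part (i), the inequality $G_{\cS,I}\ge 0$ is built into the definition. For the ``$\Leftarrow$'' direction, take any $\Q\in\cP_{\cS,I}$: by Lemma~\ref{lem:P_S characterization} we have $A^\Q_T=0$ $\Q$-a.s., and by Lemma~\ref{lem:cE^Q_I=0} we have $\cE^\Q_I=0$; in particular $\Q\in\cQ_{\cS,I}$, so the infimum in the identity is attained at $0$, giving $G_{\cS,I}=0$. For ``$\Rightarrow$'', suppose $G_{\cS,I}=0$ but $\cP_{\cS,I}=\emptyset$. Since values outside $\cQ_{\cS,I}$ are $+\infty$, the infimum in the identity coincides with $\inf_{\Q\in\Pi}\{\E^\Q[A^\Q_T]+\cE^\Q_I\}$; Lemma~\ref{lem:infE[A_T]+cE} then gives that this quantity is strictly positive, contradicting $G_{\cS,I}=0$.

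The only genuine subtlety is matching conventions at the boundary cases (empty index sets, $\pm\infty$ values), which is why I flag the equality $\inf_{\Q\in\cQ_{\cS,I}}=\inf_{\Q\in\Pi}$ explicitly in the ``$\Rightarrow$'' direction of (i); otherwise all substantive work has already been done in Theorem~\ref{thm:duality main} and Lemmas~\ref{lem:P_S characterization}, \ref{lem:cE^Q_I=0}, and \ref{lem:infE[A_T]+cE}. The main obstacle — and it is really the only one — is to notice that part (i) requires the lower-semicontinuity argument packaged inside Lemma~\ref{lem:infE[A_T]+cE} to upgrade ``infimum equals zero'' into ``infimum is attained by some $\Q^*\in\cP_{\cS,I}$'', since without that one could imagine $G_{\cS,I}=0$ realized only as a limit.
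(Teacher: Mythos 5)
Your proposal is correct and follows essentially the same route as the paper: the paper derives Proposition~\ref{thm:NUP} directly from the identity $G_{\cS,I}=-D(0)=\inf_{\Q\in\cQ_{\cS,I}}\{\E^\Q[A^\Q_T]+\cE^\Q_I\}$ (via Theorem~\ref{thm:duality main} with $\Phi\equiv 0$), with part (i) resting on Lemmas~\ref{lem:P_S characterization}, \ref{lem:cE^Q_I=0} and \ref{lem:infE[A_T]+cE} exactly as you use them. Your explicit handling of the boundary conventions and of the equality $\inf_{\Q\in\cQ_{\cS,I}}=\inf_{\Q\in\Pi}$ just spells out details the paper leaves implicit.
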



\section{Examples}\label{sec:examples}
In this section, we will provide several concrete examples of the collection $\cS$ of constrained trading strategies. An example which illustrates the effect of an additional tradable, yet less lquid, option will also be given. It will be convenient to keep in mind the relation
$\cM\subseteq\cP_\cS\subseteq\cQ_\cS\subseteq \Pi$,
obtained from Remarks~\ref{rem:cM=cP} and \ref{rem:cP=cQ}. Let us start with analyzing $\cQ_\cS$ further.

\begin{prop}\label{prop:cP=cQ}
Let $\cS^\infty$ contain all nonnegative bounded trading strategies in $\mathcal{H}$. 
\begin{itemize}
\item [(i)] For any $\Q\in\cQ_\cS$, $\{S_t\}_{t=0}^T$ is a $\Q$-supermartingale.
\item [(ii)] Furthermore, if trading strategies in $\cS^\infty$ are uniformly bounded from below, i.e.
 \[
\sup_{\Delta\in\cS}\sup_{x\in(\R^T_+)^d}|\Delta^-(x)|\le C\quad  \hbox{for some}\ C>0,
\] 
then $\cQ_\cS = \{\Q\in\Pi:\{S_t\}_{t=0}^T\ \hbox{is a}\ \Q\hbox{-supermartingale}\}$.
\end{itemize}
\end{prop}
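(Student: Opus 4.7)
The plan is to work directly from the variational formula for $\E^\Q[A^\Q_T]$ in Lemma~\ref{lem:E[A_t]}, and to exploit the hypothesis on $\cS^\infty$ by testing against concrete nonnegative bounded strategies. Since $0\in\cS$ implies $A^\Q_{t+1}-A^\Q_t\ge 0$, the process $A^\Q$ is nondecreasing, so in particular $\E^\Q[A^\Q_{t+1}-A^\Q_t]\le\E^\Q[A^\Q_T]$ for every $t$.

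For part (i), I would argue by contradiction. Suppose $\Q\in\cQ_\cS$ but for some coordinate $j$ and some $t\in\{0,\dots,T-1\}$ the set $\{\E^\Q[S^j_{t+1}-S^j_t\mid\cF_t]>0\}$ has positive $\Q$-measure; then the $\cF_t$-measurable sets $A_k:=\{\E^\Q[S^j_{t+1}-S^j_t\mid\cF_t]>1/k\}$ satisfy $\Q(A_k)>0$ for some $k\in\N$. For each $n\in\N$, define $\Delta^{(n)}$ by $\Delta^{(n)}_s=n\,e_j\,1_{A_k}\,1_{\{s=t\}}$, where $e_j$ is the $j$-th standard basis vector in $\R^d$. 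This strategy is adapted, nonnegative and bounded, so it lies in $\cS^\infty$ by hypothesis. Lemma~\ref{lem:E[A_t]} then gives
\[
\E^\Q[A^\Q_T]\ \ge\ \E^\Q[A^\Q_{t+1}-A^\Q_t]\ \ge\ \E^\Q\!\left[\Delta^{(n)}_t\cdot(S_{t+1}-S_t)\right]\ \ge\ \frac{n}{k}\,\Q(A_k)\ \to\ \infty,
\]
contradicting $\Q\in\cQ_\cS$. Hence $\E^\Q[S^j_{t+1}\mid\cF_t]\le S^j_t$ for every $j$ and $t$, i.e.\ $S$ is a $\Q$-supermartingale.

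For part (ii), the inclusion $\cQ_\cS\subseteq\{\Q\in\Pi:S\text{ is a }\Q\text{-supermartingale}\}$ is immediate from (i). For the reverse inclusion, fix $\Q\in\Pi$ with $S$ a $\Q$-supermartingale and any $\Delta\in\cS^\infty$. Decomposing componentwise $\Delta_t=\Delta_t^+-\Delta_t^-$ with $\Delta_t^+,\Delta_t^-\ge 0$, the supermartingale property together with $\Delta_t^+\ge 0$ coordinatewise yields $\E^\Q[\Delta_t^+\cdot(S_{t+1}-S_t)\mid\cF_t]\le 0$, while the uniform bound $|\Delta_t^-|\le C$ and Cauchy--Schwarz give
\[
\left|\E^\Q[\Delta_t^-\cdot(S_{t+1}-S_t)\mid\cF_t]\right|\ \le\ C\,\E^\Q[|S_{t+1}-S_t|\mid\cF_t].
\]
Summing and applying Lemma~\ref{lem:E[A_t]} together with Remark~\ref{rem:E[S] finite},
\[
\E^\Q[A^\Q_T]\ =\ \sum_{t=0}^{T-1}\sup_{\Delta\in\cS^\infty}\E^\Q[\Delta_t\cdot(S_{t+1}-S_t)]\ \le\ C\sum_{t=0}^{T-1}\bigl(\E^\Q[|S_{t+1}|]+\E^\Q[|S_t|]\bigr)\ <\ \infty,
\]
so $\Q\in\cQ_\cS$, completing the proof.

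I do not anticipate any serious obstacle. The only point requiring a small amount of care is the construction in (i): one needs a concrete adapted strategy that magnifies the violation of the supermartingale inequality, and the indicator on the $\cF_t$-measurable set $A_k$ multiplied by the scalar $n$ and the basis vector $e_j$ does the job precisely because $\cS^\infty$ is assumed to contain \emph{all} nonnegative bounded adapted strategies, with no further measurability or integrability requirement beyond boundedness.
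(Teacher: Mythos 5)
Your proof is correct and follows essentially the same route as the paper: in (i) you exploit that all nonnegative bounded strategies lie in $\cS^\infty$ by scaling indicator strategies on the bad set (the paper lets the esssup increment of $A^\Q$ blow up pointwise, you equivalently let the expectations blow up via Lemma~\ref{lem:E[A_t]}), and in (ii) you decompose $\Delta=\Delta^+-\Delta^-$, use the supermartingale property for $\Delta^+$ and the uniform bound $C$ with Remark~\ref{rem:E[S] finite} for $\Delta^-$, exactly as in the paper's estimate of $\E^\Q[A^\Q_T]$.
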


\begin{proof}
(i) Given $\Q\in\cQ_\cS$, if $\{S_t\}_{t=0}^T$ is not a $\Q$-supermartingale, there must exist  $n^*\in\{1,\dots,d\}$ and $t^*\in\{0,\dots,T-1\}$ such that $\Q(\E^\Q[S^{n^*}_{t^*+1}\mid\cF_{t^*}]-S^{n^*}_{t^*}>0)>0$. We then deduce from \eqref{increment A^Q_t} that $\Q(A_{t^*+1}^\Q-A_{t^*}^\Q=\infty)>0$. This implies $\E^\Q[A_T^\Q]=\infty$, a contradiction to $\Q\in\cQ_\cS$.  \\
(ii) Let $\Q\in\Pi$ be such that $\{S_t\}_{t=0}^T$ is a $\Q$-supermartingale. It can be easily checked that $\{(\Delta\bcdot S)_t\}_{t=0}^T$ is a $\Q$-supermartingale, for any nonnegative bounded trading strategies $\Delta\in\mathcal{H}$.  
By \eqref{E[A_t]}, 
\begin{align*}
\E^\Q[A^\Q_T]&=\sup_{\Delta\in\cS^\infty} \left(\E^\Q[(\Delta^+\bcdot S)_T]-\E^\Q[(\Delta^-\bcdot S)_T]\right)\\
&\le \sup_{\Delta\in\cS^\infty}\E^\Q[|(\Delta^-\bcdot S)_T|]\le 2C\sum_{t=1}^T\sum_{n=1}^d(\E^\Q[S^n_{t+1}]+\E^\Q[S^n_t])<+\infty,
\end{align*}
which implies that $\Q\in\cQ_\cS$.
\qed\end{proof}

\begin{eg}[Shortselling constraint] \label{eg:shortselling}
Given $c^{n}_t\ge 0$ for each $t=0,\dots,T-1$  and $n=1,\dots, d$, we see from Remark~\ref{rem:adapted convex} (with $K_t=\prod_n[-c^{n}_t,\infty)$ for all $t$) that
\begin{equation*}
\cS:=\{\Delta\in\mathcal{H}: \Delta^n_t\ge -c^{n}_t,\ \forall t=0,\dots,T-1\text{ and } n=1,\dots,d\}.
\end{equation*}
satisfies Definition \ref{defn:cS}. By Proposition~\ref{prop:cP=cQ}, we have 
\[
\cQ_\cS=\{\Q\in\Pi : \{S_t\}_{t=0}^T\ \hbox{is a}\ \Q\hbox{-supermartingale}\}.
\]
Furthermore, if  $c^{n}_t>0$ for all $t$ and $n$, then $\cM=\cP_\cS$ by virtue of Remark \ref{rem:cM=cP}. If there exists $n\in\{1,\dots,d\}$ such that $c^{n}_t=0$ for all $t$, then $\{S^n_t\}_{t=0}^T$ is  a $\Q$-supermartingale for all $\Q\in\cP_\cS$. Thus, if $c^n_t=0$ for all $t$ and $n$, then $\cP_\cS=\cQ_\cS$.
\end{eg}

\begin{eg}[Relative-drawdown constraint]\label{eg:drawdown}
 Let $x^n_0>0$ for $n=1,\dots,d$. For each $x\in(\R^d_+)^T$, $t=1,\dots,T$ and $n\in\{1,\dots,d\}$, consider the running maximum $x^{*,n}_t$ given by $\max\{ x^n_0,x^n_1,\dots,x^n_t \}$. Then, define the relative drawdown process ${\{\tilde x_t : t=0,\dots,T\}}$ by $\tilde x_t := ( x^1_t/x^{*,1}_t , \dots, x^d_t/x^{*,d}_t ) $. For each $n=1,\dots,d$, take two continuous functions $a^n:[0,1]^d\mapsto(-\infty,0]$ and $b^n:[0,1]^d\mapsto [0,\infty)$, and introduce 
\begin{align*}
\cS &:=\{\Delta\in\mathcal{H}: a^n(\tilde x_t)\le \Delta^n_t(x_t) \le b^n(\tilde x_t),\ \forall t=0,\dots,T-1,\ n=1,\dots,d\}.\\
&=\{\Delta\in\mathcal{H}: \Delta_t\in K_t,\ \forall t=0,\dots,T-1\},\ \ \hbox{with}\ K_t:=\prod_{i=1}^d[a^n(\tilde x_t), b^n(\tilde x_t)].
\end{align*}
Since $K_t$ satisfies \eqref{lower semiconti.'}, Remark~\ref{rem:adapted convex} shows that $\cS$ satisfies Definition~\ref{defn:cS}. Thanks to Remark~\ref{rem:uniform bdd}, we have $\cQ_\cS=\Pi$. 
\end{eg}
 
\begin{eg}[Non-tradable assets]\label{eg:non-tradable}
Suppose certain risky assets are not tradable. In markets of electricity and foreign exchange rates, for example, people trade options written on the non-tradable underlying. More precisely, let $d'\in\{1,\dots,d\}$ and set
\[
\cS:=\{\Delta\in\cH :  \Delta_t^n\equiv0,\quad \hbox{for all}\ t=1,\dots,T\ \hbox{and}\ n=1,\dots, d'\}.
\]  
By a similar argument in Proposition~\ref{prop:cP=cQ}, one can show that
\[
\cP_{\cS}=\cQ_{\cS}=\{\Q\in\Pi: \{S_t^n\}_{t=1}^T \text{ is a $\Q$-martingale},\ \ \text{for all } n=d'+1,\dots,d \}.
\]
By Theorem \ref{thm:NA}, there is no model-independent arbitrage if and only if there exists $\Q\in\Pi$ under which all tradable assets are martingales. We can also modify this example by imposing additional constraint on the tradable assets satisfying Definition~\ref{defn:cS}. 
In this case, Theorem \ref{thm:NA} suggests that there is no arbitrage if and only if there is no arbitrage in the market consisting of tradable assets only.
\end{eg}

\begin{eg}[Less Liquid Option]\label{eg:illiquid option}
Consider a two-period model with one risky asset starting from $S_0=2$. We assume as in \cite[Section 4.2]{BHP13} that the marginal distributions for $S_1$ and $S_2$ are given by 
\[
\begin{split}
d\mu_1(x)=\frac121_{[1,3]}(x)dx,\; \quad \;d\mu_2(x)=\frac{x}{3}1_{[0,1]}(x)dx+\frac{1}{3}1_{[1,3]}(x)dx+\frac{4-x}{3}1_{[3,4]}(x)dx.
\end{split}
\] 
In addition to vanilla calls, we assume that a forward-start straddle with payoff $\psi(S)=|S_2-S_1|$ is also tradable at time $0$, whose unit price for trading $\eta$ units is given by ${p(\eta):=\infty1_{\{\eta> 1\}}+a1_{\{0\le \eta\le 1\}}+b1_{\{-1\le \eta<0\}}}$,
where $0\le b\le a$, and we take $0\cdot\infty=0$. We assume that the portfolio constraint $\cS$ satisfies $\cQ_{\cS}=\cM$. This readily covers the no-constraint case, as explained in Remark~\ref{rem:S=H}. Moreover, it also includes the shortselling constraint as in Example~\ref{eg:shortselling}. To see this, note from Example~\ref{eg:shortselling} that $\cQ_{\cS}=\{\Q\in\Pi: \{S_t\}_{t=0}^T\ \hbox{is a $\Q$-supermartingale}\}$. But since $\Q\in\Pi$ implies that $\E^\Q[S_1]=\E^\Q[S_2]=2$ (computed from $\mu_1$ and $\mu_2$), every $\Q\in\cQ_\cS$ is actually a martingale. We thus obtain $\cQ_\cS=\cM$.

We intend to price an exotic option with payoff $\Phi(x_1,x_2)=(x_2-x_1)^2$. Our goal is to see how using the additional option $\psi$ in static hedging affects the superhedging price of $\Phi$. 
First, for any $\Q\in\cM$, the martingale property of $S$ implies ${\E^\Q[(S_2-S_1)^2]}=\E^\Q[S_2^2]-\E^\Q[S_1^2]=\frac{1}{2}$, which is obtained solely from $\mu_1$ and $\mu_2$. Since $\cQ_\cS=\cM$, Proposition~\ref{thm:duality} gives $D_\emptyset(\Phi)=\frac{1}{2}$. On the other hand, $\cQ_\cS=\cM$ and $\cE^\Q_{\{\psi\}}<\infty$ for all $\Q\in\cM$ (see Remark~\ref{rem:cE^Q_I finite}) imply that $\cQ_{\cS,I}=\cM$. Theorem~\ref{thm:duality main} thus yields 
\[
\begin{split}
D(\Phi)&=\sup_{\Q\in\cM}(\E^\Q[(S_2-S_1)^2]-\cE_I^\Q)=\frac{1}{2}-\inf_{\Q\in\cM}\sup_{\eta\in[-1,1]}\eta(\E^\Q[|S_2-S_1|]-p(\eta))\\
&=\frac{1}{2}-\sup_{\eta\in[-1,1]}\eta(\inf_{\Q\in\cM}\E^\Q[|S_2-S_1|]-p(\eta)),
\end{split}
\]
where in the second line we used Lemma \ref{lem:minimax}. Recalling from \cite[Section 4.2]{BHP13} that $\inf_{\Q\in\cM}\E^\Q[|S_2-S_1|]=\frac{1}{3}$, 
we get $D(\Phi)=\frac{1}{2}-\max\left\{\left(\frac{1}{3}-a\right)^+,\left(b-\frac{1}{3}\right)^+\right\}$.
\end{eg}


\section{Bounded constraints without adapted convexity}\label{sec:bounded}

In this section, we extend the main results of this paper, Theorems \ref{thm:duality main} and \ref{thm:NA}, to a class of constraints which does not satisfy adapted convexity (Definition \ref{defn:cS} (ii)), but instead admits additional boundedness property. Motivations behind this include Gamma constraint, which will be discussed in Sections \ref{sec:Gamma}.

\begin{defn}\label{defn:cS_b}
$\mathcal{S}$ is a collection of trading strategies satisfying  Definition~\ref{defn:cS} (i) and (iii), while condition (ii) is replaced by the following:
\begin{itemize}
\item [(ii)$'$] (Boundedness) For any $\Delta\in\cS$,  $\exists\ c>0$ such that $|\Delta_t(x)|\le c$ for all $x\in(\R^d_+)^t$ and $t\in\{0,\dots,T-1\}$ (i.e. $\cS=\cS^\infty$).
\end{itemize}
\end{defn}
Under current setting, Lemma~\ref{lem:E[A_t]} does not hold anymore, and thus the upper variation process $A^\Q_t$ is no longer useful. We adjust the definitions of $\cQ_\cS$ and $\cP_\cS$ accordingly. 

\begin{defn}
For any $\Q\in\Pi$, we define
\begin{equation}\label{C^Q}
C^\Q:=\sup_{\Delta\in\cS}\E^\Q[(\Delta\bcdot S)_T].
\end{equation}
In analogy to $\cQ_\cS$ in Definition~\ref{def:Q_S} and the characterization of $\cP_\cS$ in Lemma~\ref{lem:P_S characterization}, we define 
 \begin{equation}\label{defn Q_S}
 \cQ'_\cS:=\{\Q\in\Pi\;:\; C^\Q<+\infty\}
 \quad\text{ and }\quad
 \cP'_{\cS}:=\{\Q\in\Pi\;:\; C^\Q=0\}.
 \end{equation}
\end{defn}
Recall from \eqref{defn cS^infty} that $\cS_{c}$ denotes the collection of $\Delta\in\cS$ with $\Delta_t:(\R^d_+)^t\mapsto\R^d$ continuous for all $t=1,\dots,T$. Using the arguments in Lemma~\ref{lem:S->S_c} gives
\begin{equation}\label{S->S_c C^Q}
C^\Q= \sup_{\Delta\in\cS_{c}}\E^\Q[(\Delta\bcdot S)_T],\ \ \ \forall \Q\in\Pi.
\end{equation}
By \eqref{C^Q}, \eqref{S->S_c C^Q}, and similar arguments in Proposition~\ref{thm:duality} (with $\E^\Q[A^{\Q}_T]$ replaced by $C^\Q$), we obtain:

\begin{prop}\label{prop:duality bdd}
Suppose $\cS$ satisfies Definition \ref{defn:cS_b}. Let $\Phi:(\R^d_+)^T\mapsto\R$ be a measurable function for which there exists $K>0$ such that \eqref{Phi linear grow} holds.
\begin{itemize}
\item [(i)] We have
\[
P'_\emptyset(\Phi):=\sup_{\Q\in\cQ'_\cS}\E^\Q[\Phi]-C^\Q \le D_\emptyset(\Phi).
\]
\item [(ii)] Furthermore, if $\Phi$ is upper semicontinuous, then $P'_\emptyset(\Phi)=D_\emptyset(\Phi)$.
\item [(iii)]  If $\Phi$ is upper semicontinuous and $\cQ'_\cS\neq\emptyset$, there exists $\Q^*\in\cQ'_\cS$ such that 
$P'_\emptyset(\Phi)=\E^{\Q^*}[\Phi]-C^{\Q^*}$.
\end{itemize}
\end{prop}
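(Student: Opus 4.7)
The plan is to mirror the proof of Proposition~\ref{thm:duality}, substituting $C^\Q$ from \eqref{C^Q} for $\E^\Q[A^\Q_T]$ throughout. Uniform boundedness in Definition~\ref{defn:cS_b}~(ii)$'$ guarantees $\Q$-integrability of $(\Delta\bcdot S)_T$ for every $\Delta\in\cS$ and $\Q\in\Pi$, which removes the need for the local/true supermartingale machinery of Lemma~\ref{lem:supermart.}. For part~(i), fix $u\in\cU$ and $\Delta\in\cS$ with $\Psi_{u,\Delta}\ge\Phi$, and any $\Q\in\cQ'_\cS$. Since $\E^\Q[\Phi^+]<\infty$ by \eqref{Phi linear grow} together with Remark~\ref{rem:E[S] finite}, taking $\Q$-expectations yields
\[
\E^\Q[\Phi]\le \sum_{t=1}^T\sum_{n=1}^d\int u^n_t\,d\mu^n_t+\E^\Q[(\Delta\bcdot S)_T]\le \sum_{t=1}^T\sum_{n=1}^d\int u^n_t\,d\mu^n_t+C^\Q,
\]
and taking $\sup_{\Q\in\cQ'_\cS}$ and $\inf_{(u,\Delta)}$ gives $P'_\emptyset(\Phi)\le D_\emptyset(\Phi)$.

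For part~(ii), I reproduce the chain culminating in \eqref{1}, which relies only on Definition~\ref{defn:cS}~(iii) (still in force under Definition~\ref{defn:cS_b}) and one-dimensional Monge--Kantorovich applied to the upper semicontinuous payoff $\Phi(x)-(\Delta\bcdot x)_T$ for $\Delta\in\cS_c$. This produces
$D_\emptyset(\Phi)\le \inf_{\Delta\in\cS_c}\sup_{\Q\in\Pi}\E^\Q[\Phi-(\Delta\bcdot S)_T]$,
where upper semicontinuity of $\Q\mapsto\E^\Q[\Phi-(\Delta\bcdot S)_T]$ on $\Pi$ is obtained from \cite[Lemma 4.3]{Villani-book-09} and a continuous majorant of the type in \eqref{upper bound 1}, the required $|\Delta|_\infty$ bound being supplied by Definition~\ref{defn:cS_b}~(ii)$'$. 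I then invoke Lemma~\ref{lem:minimax} on $X=\Pi$ and $Y=\cS_c$ to swap the $\inf$ and $\sup$, noting that $f(\Q,\Delta)=\E^\Q[\Phi-(\Delta\bcdot S)_T]$ is affine in both arguments; \eqref{S->S_c C^Q} then identifies $\sup_{\Delta\in\cS_c}\E^\Q[(\Delta\bcdot S)_T]$ with $C^\Q$, and contributions from $\Q\notin\cQ'_\cS$ vanish since $C^\Q=\infty$ there, delivering $D_\emptyset(\Phi)\le P'_\emptyset(\Phi)$.

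For part~(iii), the compactness of $\Pi$ (Remark~\ref{rem:Pi compact}), combined with upper semicontinuity of $\Q\mapsto\E^\Q[\Phi]$ obtained exactly as in Proposition~\ref{thm:duality}~(iii), reduces matters to proving lower semicontinuity of $\Q\mapsto C^\Q$. For each $\Delta\in\cS_c$, the map $\Q\mapsto\E^\Q[(\Delta\bcdot S)_T]$ is continuous on $\Pi$ by applying \cite[Lemma 4.3]{Villani-book-09} to $\pm(\Delta\bcdot x)_T$ (both dominated by a continuous function, again using Definition~\ref{defn:cS_b}~(ii)$'$); hence by \eqref{S->S_c C^Q}, $\Q\mapsto C^\Q$ is lower semicontinuous as a supremum of continuous functions. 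Thus $\Q\mapsto\E^\Q[\Phi]-C^\Q$ is upper semicontinuous on the compact set $\Pi$ and attains its supremum, and when $\cQ'_\cS\neq\emptyset$ the maximizer necessarily has $C^{\Q^*}<\infty$ and so lies in $\cQ'_\cS$.

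The principal obstacle is the minimax step in part~(ii): Definition~\ref{defn:cS_b} deliberately omits the adapted convexity of Definition~\ref{defn:cS}~(ii), so $\cS_c$ need not be a convex subset of a vector space as required by Lemma~\ref{lem:minimax}. I expect to resolve this either by assuming (or verifying, as in the motivating Gamma-constraint setting) plain convexity of $\cS_c$, or by running the minimax on the convex hull $\mathrm{conv}(\cS_c)$ and transferring the identity back to $\cS_c$ using the affinity of $f$ in $\Delta$ together with \eqref{S->S_c C^Q}; the remaining verifications (upper semicontinuity in $\Q$, compactness of $\Pi$, integrability) are insensitive to this enlargement.
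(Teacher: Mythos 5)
Your overall strategy is exactly the paper's: the published proof of this proposition is literally a one-line instruction to repeat Proposition~\ref{thm:duality} with $\E^\Q[A^\Q_T]$ replaced by $C^\Q$, using \eqref{C^Q} and \eqref{S->S_c C^Q}, and your treatment of parts (i) and (iii) matches this (your observation that boundedness makes $(\Delta\bcdot S)_T$ integrable and removes the supermartingale machinery of Lemma~\ref{lem:supermart.} is a correct simplification of part (i)). The convexity issue you raise for part (ii) is genuine: Definition~\ref{defn:cS_b} drops adapted convexity without imposing plain convexity, whereas Lemma~\ref{lem:minimax} requires $Y$ to be a convex subset of a vector space; the paper is silent on this point, and in its motivating application the set $\cS_\Gamma$ of Section~\ref{sec:Gamma} \emph{is} convex under deterministic weights (the increment bound $|\Delta_t-\Delta_{t-1}|\le\Gamma$ is preserved by convex combinations), so Lemma~\ref{lem:minimax} applies there with $Y=\cS_{\Gamma,c}$.

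However, of the two fixes you propose, only the first (verifying or assuming plain convexity of $\cS$, hence of $\cS_c$) actually works. The convex-hull route does not: affinity of $\Delta\mapsto\E^\Q[(\Delta\bcdot x)_T]$ gives $\inf_{\Delta\in\mathrm{conv}(\cS_c)}\sup_{\Q\in\Pi}f(\Q,\Delta)=\sup_{\Q\in\Pi}\inf_{\Delta\in\cS_c}f(\Q,\Delta)$, but since $\inf_{\mathrm{conv}(\cS_c)}\sup_\Q f\le\inf_{\cS_c}\sup_\Q f$, this only reproduces the trivial weak-duality inequality $\sup_\Q\inf_{\cS_c}f\le\inf_{\cS_c}\sup_\Q f$ and cannot be ``transferred back'' to yield $\inf_{\cS_c}\sup_\Q f\le\sup_\Q\inf_{\cS_c}f$. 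To exploit the hull you would instead need the first link of the chain, $D_\emptyset(\Phi)\le\inf_{\Delta\in\mathrm{conv}(\cS_c)}\sup_\Q\E^\Q[\Phi-(\Delta\bcdot x)_T]$, but that step rests on each $\Delta$ being an admissible strategy (an element of $\cS$) to which the Monge--Kantorovich argument can be attached, and a convex combination of elements of $\cS_c$ need not lie in $\cS$ -- i.e., you would need exactly the convexity that is missing. So either add (or verify, as for $\cS_\Gamma$) plain convexity of $\cS$, or note that the duality in (ii) is being asserted only for constraint sets for which $\cS_c$ is convex; the enlargement argument as stated does not close the gap.
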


\begin{rem}
Under adapted convexity (Definition \ref{defn:cS} (ii)), Lemma \ref{lem:E[A_t]} asserts that $\E^\Q[A^\Q_T] = C^\Q$. This need not be true in general. For any $\Q\in\Pi$, we observe from Lemma \ref{lem:E[A_t]} that in general $\E^\Q[A^\Q_T] \ge \sup_{\Delta\in\cS^\infty} \E^\Q[(\Delta\bcdot S)_T] = C^\Q$.
This in particular implies that $P'_\emptyset(\Phi) \ge P_\emptyset(\Phi)$.
\end{rem}

We now include the collection of options $\{\psi_i\}_{i\in I}$ in the superhedging strategy. Recalling the definition of $\cE^\Q_I$ from \eqref{cE^Q_I}, we consider the collection of measures $\cQ'_{\cS,I}:=\{\Q\in\cQ'_\cS\;:\;\cE^\Q_I<\infty\}$,  
and define
\begin{equation}\label{def:P'}
P'(\Phi):=\sup_{\Q\in\cQ'_\cS}\E^\Q[\Phi]-C^\Q -\cE_I^\Q.
\end{equation}
The following result follows from a straightforward adjustment of Theorem \ref{thm:duality main}.

\begin{prop}\label{prop:duality bdd price impact}
Suppose $\cS$ satisfies Definition \ref{defn:cS_b}, $\psi_i$ is continuous and $|\psi_i|$ satisfies \eqref{Phi linear grow} for all $i\in I$. Then, for any upper semicontinuous function $\Phi:(\R^d_+)^T\mapsto\R$ satisfying \eqref{Phi linear grow}, we have $D(\Phi)=P'(\Phi)$, with $D$ and $P'$ defined as in \eqref{def:D} and \eqref{def:P'}. Moreover, if $\cQ'_{\cS,I}\neq\emptyset$, the supremum in \eqref{def:P'} is attained at some $\Q^*\in\cQ'_{\cS,I}$. 
\end{prop}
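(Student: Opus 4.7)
The plan is to mimic the proof of Theorem~\ref{thm:duality main} line-by-line, with two substitutions: use Proposition~\ref{prop:duality bdd} in place of Proposition~\ref{thm:duality}, and replace the quantity $\E^\Q[A^\Q_T]$ by $C^\Q$ throughout. The structural reason this goes through is that both $\E^\Q[A^\Q_T]$ and $C^\Q$ were obtained as suprema over (a continuous subclass of) admissible strategies of the affine functional $\Q\mapsto \E^\Q[(\Delta\bcdot S)_T]$; the continuous-approximation identity \eqref{S->S_c C^Q} plays here the same role that Lemma~\ref{lem:S->S_c} played before.

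The first step is the reduction identity
\[
D(\Phi)=\inf_{\eta\in\mathcal{R}^I} D_\emptyset\!\left(\Phi-\sum_{i\in I}(\eta_i\psi_i-c_i(\eta_i))\right),
\]
which is immediate from \eqref{def:D}. Since each $\psi_i$ is continuous and $|\psi_i|$ obeys \eqref{Phi linear grow}, the modified payoff inside $D_\emptyset$ is upper semicontinuous and satisfies \eqref{Phi linear grow} (for a $\Q$-independent constant $K_\eta$), so Proposition~\ref{prop:duality bdd}(ii) applies and gives
\[
D(\Phi)=\inf_{\eta\in\mathcal{R}^I}\sup_{\Q\in\Pi}\,f(\Q,\eta),\qquad f(\Q,\eta):=\E^\Q\!\left[\Phi-\sum_{i\in I}(\eta_i\psi_i-c_i(\eta_i))\right]-C^\Q.
\]
I then invoke Lemma~\ref{lem:minimax} with $X=\Pi$ (convex and weakly compact by Remark~\ref{rem:Pi compact}) and $Y=\mathcal{R}^I$. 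Convexity of $\eta\mapsto f(\Q,\eta)$ is immediate from condition (C), since $\E^\Q[\sum_i\eta_i\psi_i]$ is linear in $\eta$ while $\sum_i c_i(\eta_i)$ is convex. For the $\Q$-side, $\Q\mapsto\E^\Q[\Phi-\sum_i\eta_i\psi_i+\sum_i c_i(\eta_i)]$ is affine; and by \eqref{S->S_c C^Q}, $C^\Q$ is a supremum of the continuous affine maps $\Q\mapsto\E^\Q[(\Delta\bcdot S)_T]$ (continuity of these maps on $\Pi$ for $\Delta\in\cS_c$ follows exactly as in Proposition~\ref{thm:duality}(iii), using the bound $|(\Delta\bcdot x)_T|\le h(x)$ and \cite[Lemma~4.3]{Villani-book-09}). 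Hence $C^\Q$ is convex and lower semicontinuous in $\Q$, so $\Q\mapsto f(\Q,\eta)$ is concave and upper semicontinuous. Lemma~\ref{lem:minimax} therefore yields
\[
D(\Phi)=\sup_{\Q\in\Pi}\inf_{\eta\in\mathcal{R}^I} f(\Q,\eta)=\sup_{\Q\in\Pi}\Bigl(\E^\Q[\Phi]-C^\Q-\cE^\Q_I\Bigr),
\]
where the inner computation uses \eqref{cE^Q_I}. Restricting the supremum to $\cQ'_{\cS,I}=\{\Q\in\cQ'_\cS:\cE^\Q_I<\infty\}$ costs nothing (the dropped measures contribute $-\infty$), giving $D(\Phi)=P'(\Phi)$.

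For the attainment claim, I would assemble the three upper-semicontinuity facts already used: $\Q\mapsto\E^\Q[\Phi]$ is upper semicontinuous on $\Pi$, $\Q\mapsto\cE^\Q_I$ is lower semicontinuous on $\Pi$ (as a supremum in $\eta$ of continuous maps $\Q\mapsto \eta_i\E^\Q[\psi_i]-c_i(\eta_i)$), and $\Q\mapsto C^\Q$ is lower semicontinuous by the same argument as above. Thus $\Q\mapsto \E^\Q[\Phi]-C^\Q-\cE^\Q_I$ is upper semicontinuous on the weakly compact set $\Pi$, and provided $\cQ'_{\cS,I}\neq\emptyset$ the supremum is attained. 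The main obstacle I anticipate is cosmetic rather than mathematical: verifying that \emph{all} of the semicontinuity statements above survive the replacement of $A^\Q$ by $C^\Q$. The key point making this trouble-free is precisely the continuous-approximation identity \eqref{S->S_c C^Q}, which is the only nontrivial input and has already been recorded; without it one could not realize $C^\Q$ as a supremum of weakly continuous functionals on $\Pi$, and the minimax step would fail.
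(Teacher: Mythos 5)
Your proposal is correct and is essentially the paper's own argument: the paper proves this proposition precisely as "a straightforward adjustment of Theorem~\ref{thm:duality main}," substituting Proposition~\ref{prop:duality bdd} for Proposition~\ref{thm:duality} and $C^\Q$ for $\E^\Q[A^\Q_T]$, with the identity \eqref{S->S_c C^Q} supplying the semicontinuity of $\Q\mapsto C^\Q$ needed for the minimax step, exactly as you describe.
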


To derive the FTAP, we consider the collection of measures
\[
\cP'_{\cS,I}:=\{\Q\in\cP'_\cS\;:\; c'_i(0-)\le\E^\Q[\psi_i]\le c'_i(0+),\;\text{ for all } i\in I\}.
\]
By \eqref{S->S_c C^Q}, the same arguments in Lemma \ref{lem:infE[A_T]+cE} (with $\E^\Q[A^\Q_T]$ replaced by $C^\Q$) yields:
\begin{equation}\label{infC^Q}
\cP'_{\cS,I}=\emptyset\ \implies \inf_{\Q\in\cQ'_{\cS}} \{C^\Q+\cE^\Q_I\}>0.
\end{equation}
On strength of \eqref{infC^Q} and Proposition~\ref{prop:duality bdd price impact}, we may argue as in Theorem~\ref{thm:NA} and Proposition~\ref{thm:NUP}  (with $\E^\Q[A^\Q_T]$ replaced by $C^\Q$) to establish the following.
 
\begin{prop}\label{prop:NA bdd}
Suppose $\cS$ satisfies Definition~\ref{defn:cS_b}. Then, 
\begin{enumerate}
\item[(i)]  No model-independent arbitrage under the constraint $\cS$ if and only if $\cP'_{\cS,I}\neq\emptyset$.
\item[(ii)] Optimal arbitrage profit is finite under the constraint $\cS$ (i.e. $G_{\cS,I}<\infty$) if and only if $\cQ'_{\cS,I}\neq\emptyset$.
\end{enumerate}
\end{prop}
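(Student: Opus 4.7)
The plan is to mirror the proofs of Theorem~\ref{thm:NA} and Proposition~\ref{thm:NUP}, systematically replacing $\E^\Q[A^\Q_T]$ by $C^\Q$ and exploiting the uniform boundedness built into Definition~\ref{defn:cS_b} (ii)$'$ to bypass the supermartingale machinery of Lemma~\ref{lem:supermart.}.

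For the ``$\Leftarrow$'' direction of (i), I would pick $\Q\in\cP'_{\cS,I}$ and assume for contradiction that there exist $u\in\cU_0$, $\eta\in\mathcal{R}^I$, and $\Delta\in\cS$ witnessing model-independent arbitrage in the sense of Definition~\ref{defn:NA}. Because $\Delta\in\cS=\cS^\infty$ is uniformly bounded and $\E^\Q[S^n_t]<\infty$ by Remark~\ref{rem:E[S] finite}, the terminal gain $(\Delta\bcdot S)_T$ is $\Q$-integrable and may be integrated directly, no supermartingale argument is required. Using $\E^\Q[u^n_t(S^n_t)]=\int u^n_t\,d\mu^n_t=0$ for $u\in\cU_0$, the inequality $\sum_i(\eta_i\E^\Q[\psi_i]-c_i(\eta_i))\le\cE^\Q_I=0$ from Lemma~\ref{lem:cE^Q_I=0}, and the inequality $\E^\Q[(\Delta\bcdot S)_T]\le C^\Q=0$ from the definition \eqref{defn Q_S} of $\cP'_\cS$, taking $\Q$-expectations on both sides of the strict pathwise arbitrage inequality yields $0>0$, a contradiction.

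For the ``$\Rightarrow$'' direction of (i), I would argue via duality. Assume $\cP'_{\cS,I}=\emptyset$. By the implication \eqref{infC^Q}, $\inf_{\Q\in\cQ'_\cS}\{C^\Q+\cE^\Q_I\}>0$, so in particular $\inf_{\Q\in\cQ'_{\cS,I}}\{C^\Q+\cE^\Q_I\}>0$. Applying Proposition~\ref{prop:duality bdd price impact} with $\Phi\equiv 0$ gives
\[
D(0)=P'(0)=-\inf_{\Q\in\cQ'_{\cS,I}}\{C^\Q+\cE^\Q_I\}<0.
\]
Thus there exist $u\in\cU$, $\eta\in\mathcal{R}^I$, and $\Delta\in\cS$ with $\Psi_{u,\eta,\Delta}\ge 0$ pathwise and $a:=\sum_{t,n}\int u^n_t\,d\mu^n_t<0$. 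Shifting any one component, say $u^1_1$, by the constant $-a$ (which preserves membership in $\cC$) produces $\tilde u\in\cU_0$ with $\Psi_{\tilde u,\eta,\Delta}(x)\ge -a>0$ for all $x\in(\R^d_+)^T$, i.e. a model-independent arbitrage.

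For (ii), note that the identity $G_{\cS,I}=-D(0)$ follows immediately from comparing the definitions \eqref{def:D} and \eqref{OA} (via the same constant-shift trick to convert between $\cU_0$ and $\cU$). Combining with the duality $D(0)=-\inf_{\Q\in\cQ'_{\cS,I}}\{C^\Q+\cE^\Q_I\}$ from Proposition~\ref{prop:duality bdd price impact}, and using the convention that the infimum over the empty set is $+\infty$, we obtain
\[
G_{\cS,I}=\inf_{\Q\in\cQ'_{\cS,I}}\{C^\Q+\cE^\Q_I\}.
\]
Since $C^\Q+\cE^\Q_I<\infty$ for every $\Q\in\cQ'_{\cS,I}$ by the very definition of that set, the equivalence $G_{\cS,I}<\infty\iff\cQ'_{\cS,I}\neq\emptyset$ follows at once.

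The only place that could trip one up is verifying \eqref{infC^Q}, which the excerpt already states follows from a line-by-line adaptation of Lemma~\ref{lem:infE[A_T]+cE} using \eqref{S->S_c C^Q} in place of Lemma~\ref{lem:S->S_c}; apart from that, the bounded setting here is actually \emph{cleaner} than the adaptively convex case because integrability of $(\Delta\bcdot S)_T$ is immediate and the local-supermartingale detour in Lemmas~\ref{lem:local supermart.}--\ref{lem:supermart.} is not needed.
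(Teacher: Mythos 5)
Your proposal is correct and follows essentially the same route as the paper, which proves Proposition~\ref{prop:NA bdd} precisely by rerunning the arguments of Theorem~\ref{thm:NA} and Proposition~\ref{thm:NUP} with $\E^\Q[A^\Q_T]$ replaced by $C^\Q$, using \eqref{infC^Q} and Proposition~\ref{prop:duality bdd price impact}. Your observation that boundedness (Definition~\ref{defn:cS_b} (ii)$'$) makes $(\Delta\bcdot S)_T$ integrable and renders the supermartingale lemmas unnecessary, so that $\E^\Q[(\Delta\bcdot S)_T]\le C^\Q=0$ can be used directly, is exactly the simplification the paper's ``replace $\E^\Q[A^\Q_T]$ by $C^\Q$'' prescription amounts to.
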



\subsection{Gamma constraint}\label{sec:Gamma}

Given $\Gamma=(\Gamma_1,\dots,\Gamma_d)\in\R^d_+$, we consider the collection of trading strategies
\[
\cS_\Gamma:=\{\Delta\in\mathcal{H}:|\Delta^n_t-\Delta^n_{t-1}|\le\Gamma_n,\ \forall t=0,\dots,T-1,\; n=1,\dots,d\},
\]
where we set $\Delta_{-1}\equiv 0\in\R^d$.
Observe that $\cS_\Gamma$ does not admit adapted convexity (Definition~\ref{defn:cS} (ii)). Indeed, consider $\Delta\equiv 0$ and $\Delta':=\{1_{\{t=0\}}\Gamma+1_{\{t>0\}} 2\Gamma\}_{t=0}^{T-1}$, both of which trivially lie in $\cS_\Gamma$. Given a fixed $s\in\{1,\dots,T-1\}$, the trading strategy $\tilde{\Delta}:= \{\Delta_t 1_{\{t<s\}} + \Delta'_t 1_{\{t\ge s\}}\}_{t=0}^{T-1}$ does not belong to $\cS_\Gamma$, as $\tilde{\Delta}_s - \tilde{\Delta}_{s-1}=2\Gamma$. The constrained collection $\cS_\Gamma$, instead, satisfies Definition \ref{defn:cS_b}. 

\begin{lem}\label{lem:Gamma bdd}
$\cS_\Gamma$ satisfies Definition \ref{defn:cS_b}.
\end{lem}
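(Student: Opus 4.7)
The goal is to verify the three conditions of Definition~\ref{defn:cS_b} in turn; only the continuous-approximation condition (iii) requires real work. Condition (i) is immediate since the zero strategy trivially satisfies $|0 - 0| \le \Gamma_n$. For condition (ii)$'$, I would telescope the defining inequality: since $\Delta_{-1} \equiv 0$ and $|\Delta^n_t - \Delta^n_{t-1}| \le \Gamma_n$ for each $n$ and $t$, one gets $|\Delta^n_t| \le (t+1)\Gamma_n \le T\Gamma_n$, so every $\Delta \in \cS_\Gamma$ is uniformly bounded by $T \max_n \Gamma_n$.

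For condition (iii), the key idea is to approximate the \emph{increments} rather than the levels, so the Gamma bound is preserved automatically. Given $\Delta \in \cS_\Gamma$, $\Q \in \Pi$, and $\eps > 0$, set $\delta_t := \Delta_t - \Delta_{t-1}$ for $t = 0, \dots, T-1$, where $\delta_0 = \Delta_0$ is a constant vector in $\R^d$ and each component satisfies $|\delta^n_t| \le \Gamma_n$. Put $\delta^\eps_0 := \delta_0$. For $t = 1, \dots, T-1$, apply Lusin's theorem to the bounded Borel map $\delta_t : (\R^d_+)^t \to \R^d$ with respect to the pushforward measure of $\Q$ under the projection $\pi_t : (\R^d_+)^T \to (\R^d_+)^t$: this yields a closed set $E_t \subseteq (\R^d_+)^t$ with $\Q(\pi_t^{-1}(E_t)) > 1 - \eps/(T-1)$ on which $\delta_t$ is continuous. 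Use Tietze's extension theorem to extend $\delta_t|_{E_t}$ to a continuous map on $(\R^d_+)^t$, and compose with the componentwise truncation $\tau(y)^n := (-\Gamma_n) \vee y^n \wedge \Gamma_n$ to obtain a continuous $\delta^\eps_t : (\R^d_+)^t \to \R^d$ with $|\delta^{\eps,n}_t| \le \Gamma_n$ everywhere; since $|\delta^n_t| \le \Gamma_n$ on $E_t$ already, truncation leaves the values there unchanged, so $\delta^\eps_t = \delta_t$ on $E_t$. Finally define $\Delta^\eps_t := \sum_{s=0}^{t} \delta^\eps_s$ and $D_\eps := \bigcap_{t=1}^{T-1} \pi_t^{-1}(E_t)$. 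Then $\Delta^\eps$ is continuous and bounded, satisfies $|\Delta^{\eps,n}_t - \Delta^{\eps,n}_{t-1}| = |\delta^{\eps,n}_t| \le \Gamma_n$ so $\Delta^\eps \in (\cS_\Gamma)_c$; the set $D_\eps$ is closed as a finite intersection of closed sets, has $\Q(D_\eps) > 1 - \eps$ by a union bound, and on $D_\eps$ the telescoping $\Delta^\eps_t = \sum_{s \le t} \delta_s = \Delta_t$ holds for every $t$.

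The main obstacle is precisely the Gamma constraint: a naive level-by-level Lusin approximation of $\Delta_t$ would produce continuous strategies whose increments $\Delta^\eps_t - \Delta^\eps_{t-1}$ need not respect the bound $\Gamma_n$ off the Lusin set, so the output would generally fail to lie in $\cS_\Gamma$. Working at the level of increments and then applying a continuous componentwise truncation cleanly avoids this: the truncation is harmless on the Lusin sets (where the increments already satisfy the bound) and enforces the bound globally, which is exactly what is needed to land back in $\cS_\Gamma$.
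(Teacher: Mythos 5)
Your proof is correct, and for the only nontrivial part (condition (iii) of Definition~\ref{defn:cS}) it takes a genuinely different route from the paper. The paper first invokes Lusin's theorem at the level of the strategies $\Delta_t$ themselves (as in Remark~\ref{rem:adapted convex}), obtaining continuous functions agreeing with $\Delta$ on a closed set $D_\eps$, and then repairs the violated Gamma bound off $D_\eps$ by an inductive construction: at each $t$ it builds a set-valued map equal to the singleton $\{\Delta_t\}$ on the projection of $D_\eps$ and to $\bar\Delta^\eps_{t-1}+\prod_{n}[-\Gamma_n,\Gamma_n]$ elsewhere, checks lower semicontinuity, and applies Michael's continuous selection theorem to produce $\bar\Delta^\eps_t\in\cS_{\Gamma,c}$. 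You instead pass to the increments $\delta_t=\Delta_t-\Delta_{t-1}$, for which the constraint is a fixed componentwise box $|\delta^n_t|\le\Gamma_n$, apply Lusin plus Tietze to each $\delta_t$, and enforce the box globally by a continuous componentwise truncation that does not disturb the values on the Lusin sets; summing the approximated increments and intersecting the (pulled-back) Lusin sets gives $\Delta^\eps\in\cS_{\Gamma,c}$ agreeing with $\Delta$ on a closed set of $\Q$-measure exceeding $1-\eps$, exactly as required. Your telescoping argument for equality on $D_\eps$ is sound because membership of $x$ in $D_\eps$ forces all lower-order projections into the respective $E_s$. What your approach buys is elementarity: no selection theorem and no lower-semicontinuity verification are needed, since the constraint becomes a state-independent box once expressed in increments; what the paper's approach buys is uniformity of method, reusing verbatim the continuous-selection machinery already set up for adapted convex constraints in Remark~\ref{rem:adapted convex}. (Conditions (i) and (ii)$'$ you handle exactly as the paper does, via the telescoping bound $|\Delta^n_t|\le (t+1)\Gamma_n$.)
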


\begin{proof}
It is trivial that $0\in\cS_\Gamma$. For each $\Delta\in\cS_\Gamma$, since $\Delta_t=\sum_{j=0}^t(\Delta_j-\Delta_{j-1})$, we have $|\Delta_t|\le (t+1)|\Gamma|$, which shows that Definition \ref{defn:cS_b} (ii)$'$ is satisfied. It remains to prove Definition \ref{defn:cS} (iii).\\
In view of Remark \ref{rem:adapted convex}, it follows from Lusin's theorem that for any $\Q\in\Pi$ and $\eps>0$, there exist a closed set $D_\eps\subseteq\Omega$ and a sequence of continuous functions $\Delta^\eps(x)=\{\Delta^\eps_t(x_1,\dots,x_t)\}_{t=0}^{T-1}$ such that $\Q(D_\eps)>1-\eps$ and $\Delta=\Delta^\eps$ on $D_\eps$. That is, for all $t=1,\dots,T-1$, $\Delta_t$ is a continuous function when it is restricted to the domain $\hbox{proj}_{(\R^d_+)^t}{D_\eps}:=\{x\in(\R^d_+)^t:\exists\ y\in(\R^d_+)^{T-t}\ \hbox{such that}\ (x,y)\in D_\eps\}$. 
In the following, by induction over time $t$, we will construct a continuous strategy $\bar{\Delta}^\eps\in\cS_{\Gamma,c}$. At time $t=0$, $\bar\Delta_0^\eps:=\Delta_0$ is a constant in $\prod_{n=1}^d[-\Gamma_n,\Gamma_n]$, and therefore continuous. Fix $t\ge 1$. We assume that we have constructed continuous functions $\{\bar\Delta_s^\eps:(\R^d_+)^s\mapsto\R^d\}_{s=0}^{t-1}$ such that $\bar\Delta_s^\eps=\Delta_s$ on $\hbox{proj}_{(\R^d_+)^s}{D_\eps}$ and $|\bar\Delta_s^\eps-\bar\Delta_{s-1}^\eps|\le\Gamma$ on $(\R^d_+)^s\setminus\hbox{proj}_{(\R^d_+)^s}{D_\eps}$, for any $s<t$. By the continuity of $\bar\Delta_{t-1}^\eps$, the set-valued function defined by  
\[
K_t(x_1,\dots,x_t):=\left\{
\begin{array}{ll}
\{\Delta_t(x_1,\dots,x_t)\}&\ \  \text{on}\ \  \  \hbox{proj}_{(\R^d_+)^t}{D_\eps}\\
 \Delta_{t-1}(x_1,\dots,x_{t-1})+\prod_{n=1}^d[-\Gamma_n,\Gamma_n]&\ \  \text{on}\ \  \  (\R^d_+)^t\setminus\hbox{proj}_{(\R^d_+)^t}{D_\eps}
 \end{array}
 \right.
\]
satisfies \eqref{lower semiconti.'} and thus admits a continuous selection (\cite[Theorem 3.2$''$]{Michael56}); i.e. there is a continuous function $\bar\Delta_t^\eps:(\R^d_+)^t\mapsto\R^d$ such that, ${\bar\Delta_t^\eps(x_1,\dots,x_t)} \in K_t(x_1,\dots,x_t)$ for all $x\in\R^t_+$. Thus, we can construct $\bar\Delta^\eps\in\cS_{\Gamma,c}$, as required by Definition \ref{defn:cS} (iii).
\qed\end{proof}

\begin{prop}\label{prop:Q_S P_S Gamma}
$\cQ'_{\cS_\Gamma} = \Pi$ and $\cP'_{\cS_\Gamma} = \cM$.
\end{prop}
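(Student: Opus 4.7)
}

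The plan is to handle the two equalities separately, both via direct computation of $C^\Q$ on appropriately chosen strategies.

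For the equality $\cQ'_{\cS_\Gamma}=\Pi$, I would argue that $C^\Q<\infty$ for every $\Q\in\Pi$ by exploiting the built-in boundedness of $\cS_\Gamma$. Namely, iterating $|\Delta^n_t-\Delta^n_{t-1}|\le\Gamma_n$ with $\Delta_{-1}\equiv 0$ gives $|\Delta_t|\le (t+1)|\Gamma|\le T|\Gamma|$ for every $\Delta\in\cS_\Gamma$ and $t=0,\dots,T-1$. Consequently, for any $\Q\in\Pi$,
\[
\E^\Q[(\Delta\bcdot S)_T]\le T|\Gamma|\sum_{t=0}^{T-1}\E^\Q[|S_{t+1}-S_t|],
\]
which is finite by Remark~\ref{rem:E[S] finite}. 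Taking the supremum over $\Delta\in\cS_\Gamma$ yields $C^\Q<\infty$, so $\Q\in\cQ'_{\cS_\Gamma}$.

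For the equality $\cP'_{\cS_\Gamma}=\cM$, I would first check the inclusion $\cM\subseteq\cP'_{\cS_\Gamma}$. Fix $\Q\in\cM$ and $\Delta\in\cS_\Gamma$. Since $\Delta$ is uniformly bounded (by $T|\Gamma|$) and $\Q$-integrable increments $S_{t+1}-S_t$ form a martingale difference, $(\Delta\bcdot S)_t$ is a $\Q$-martingale; hence $\E^\Q[(\Delta\bcdot S)_T]=0$ and $C^\Q=0$.

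For the reverse inclusion $\cP'_{\cS_\Gamma}\subseteq\cM$, I would argue by contrapositive. Suppose $\Q\in\Pi\setminus\cM$; the goal is to construct $\Delta\in\cS_\Gamma$ such that $\E^\Q[(\Delta\bcdot S)_T]>0$, forcing $C^\Q>0$. Since $\Q\notin\cM$, there exist $t\in\{0,\dots,T-1\}$ and $n\in\{1,\dots,d\}$ (with $\Gamma_n>0$, tacitly assumed throughout for the constraint to be nontrivial on asset $n$) such that either $A:=\{\E^\Q[S^n_{t+1}-S^n_t\mid\cF_t]>0\}$ or $B:=\{\E^\Q[S^n_{t+1}-S^n_t\mid\cF_t]<0\}$ has positive $\Q$\=/measure. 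Assume $\Q(A)>0$ (the other case is symmetric). Define the strategy $\Delta$ by setting $\Delta^m_s\equiv 0$ whenever $(m,s)\neq(n,t)$ and $\Delta^n_t:=\Gamma_n\,\mathbf 1_A$. Then $\Delta_t-\Delta_{t-1}$ and $\Delta_{t+1}-\Delta_t$ both have absolute value $\le\Gamma_n$ in the $n$-th coordinate and vanish elsewhere, so $\Delta\in\cS_\Gamma$. The tower property yields
\[
\E^\Q[(\Delta\bcdot S)_T]=\Gamma_n\,\E^\Q\bigl[\mathbf 1_A\,\E^\Q[S^n_{t+1}-S^n_t\mid\cF_t]\bigr]>0,
\]
contradicting $\Q\in\cP'_{\cS_\Gamma}$.

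The main obstacle (really just the one subtle point) is verifying that the one-step ``spike'' strategy $\Delta^n_t=\Gamma_n\mathbf 1_A$ genuinely lies in $\cS_\Gamma$: one must check the Gamma bound both at the step where it is switched on from the zero position and at the step where it is switched back off, which works precisely because the magnitude is exactly $\Gamma_n$. Everything else is straightforward integrability.
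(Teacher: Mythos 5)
Your proof is correct and follows essentially the same route as the paper: boundedness of $\cS_\Gamma$ (via Remark~\ref{rem:E[S] finite}) gives $\cQ'_{\cS_\Gamma}=\Pi$, and one-step ``spike'' strategies of the form $\Gamma_n\mathbf 1_A\mathbf 1_{\{s=t\}}$ with $A\in\cF_t$ force the martingale property, the paper doing this with $\pm\Gamma\mathbf 1_A$ for every $A$ while you argue by contrapositive with the single set where the conditional drift has a sign (and you make the easy inclusion $\cM\subseteq\cP'_{\cS_\Gamma}$ explicit, which the paper leaves implicit). Your tacit assumption $\Gamma_n>0$ is likewise implicit in the paper's argument, so no substantive difference there.
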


\begin{proof}
From the proof of Lemma~\ref{lem:Gamma bdd}, every $\Delta\in\cS_\Gamma$ is bounded by $c:=(T+1)|\Gamma|$. This gives $\cQ'_{\cS_\Gamma} = \Pi$, by Remarks~\ref{rem:uniform bdd}. For any $t\in\{0,\dots,T-1\}$ and  $A\in\cF_t$, observe that $\Delta^{(+)}=\{\Delta^+_s\}_{s=0}^{T-1} :=+\Gamma 1_A1_{\{s=t\}}$ and $\Delta^{(-)}=\{\Delta^-_s\}_{s=0}^{T-1} :=-\Gamma 1_A1_{\{s=t\}}$ both belong to $\cS_\Gamma$. Given $\Q\in\cP'_{\cS_\Gamma}$, the definition of $\cP'_{\cS_\Gamma}$ in \eqref{defn Q_S} implies that $\E^\Q[\Gamma 1_A (S_{t+1}-S_t)]=0$. This readily implies $\E^\Q[S_{t+1}\mid\cF_t] = S_t$, and thus $\Q\in\cM$.
\end{proof}

By Proposition~\ref{prop:Q_S P_S Gamma}, the following is a direct consequence of Proposition \ref{prop:NA bdd}. 

\begin{coro}\label{coro:NA Gamma}
$\mathcal{S}_\Gamma$ satisfies the following:
\begin{enumerate}
\item [(i)] There is no model-independent arbitrage under $\mathcal{S}_\Gamma$ if and only if $\cM_I\neq\emptyset$. 
\item [(ii)] Optimal arbitrage profit is finite under $\mathcal{S}_\Gamma$ (i.e. $G_{\cS_\Gamma,I}<\infty$) if and only if $\cE^\Q_I<\infty$ for some $\Q\in\Pi$. 
\end{enumerate}
\end{coro}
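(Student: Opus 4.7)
The plan is to deduce Corollary~\ref{coro:NA Gamma} directly from Proposition~\ref{prop:NA bdd}, using the two set-theoretic identities in Proposition~\ref{prop:Q_S P_S Gamma} (together with Lemma~\ref{lem:Gamma bdd}, which guarantees that $\cS_\Gamma$ fits the framework of Definition~\ref{defn:cS_b} so that Proposition~\ref{prop:NA bdd} indeed applies). Essentially the entire proof is a definition chase; there is no separate probabilistic argument to make.

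For part (i), I would start by invoking Proposition~\ref{prop:NA bdd}(i) with $\cS=\cS_\Gamma$: there is no model-independent arbitrage under $\cS_\Gamma$ if and only if $\cP'_{\cS_\Gamma,I}\ne\emptyset$. Unfolding the definition
\[
\cP'_{\cS_\Gamma,I}=\bigl\{\Q\in\cP'_{\cS_\Gamma}\;:\;c'_i(0-)\le\E^\Q[\psi_i]\le c'_i(0+),\ \forall\, i\in I\bigr\},
\]
and substituting $\cP'_{\cS_\Gamma}=\cM$ from Proposition~\ref{prop:Q_S P_S Gamma}, I would recognise the right-hand side as exactly $\cM_I$ as defined in \eqref{cM_I}. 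Hence $\cP'_{\cS_\Gamma,I}=\cM_I$, which yields the stated equivalence.

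For part (ii), the reasoning is analogous. Proposition~\ref{prop:NA bdd}(ii) gives $G_{\cS_\Gamma,I}<\infty$ if and only if $\cQ'_{\cS_\Gamma,I}\ne\emptyset$. Writing out
\[
\cQ'_{\cS_\Gamma,I}=\bigl\{\Q\in\cQ'_{\cS_\Gamma}\;:\;\cE^\Q_I<\infty\bigr\},
\]
and using $\cQ'_{\cS_\Gamma}=\Pi$ from Proposition~\ref{prop:Q_S P_S Gamma}, the non-emptiness condition collapses to the existence of some $\Q\in\Pi$ with $\cE^\Q_I<\infty$, which is the desired statement.

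The only item that needs a moment of care is confirming that the hypotheses of Proposition~\ref{prop:NA bdd} are met: the measurability/continuity and linear growth conditions on $\psi_i$ are the same as those imposed in the corollary's ambient setting (they are inherited from the setup used throughout Section~\ref{sec:bounded}), and Lemma~\ref{lem:Gamma bdd} certifies Definition~\ref{defn:cS_b}. Once this is noted, the proof is immediate and requires no further obstacle.
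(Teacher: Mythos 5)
Your proposal is correct and follows exactly the paper's route: the paper also obtains the corollary as an immediate consequence of Proposition~\ref{prop:NA bdd} combined with the identities $\cP'_{\cS_\Gamma}=\cM$ and $\cQ'_{\cS_\Gamma}=\Pi$ from Proposition~\ref{prop:Q_S P_S Gamma}, with Lemma~\ref{lem:Gamma bdd} ensuring $\cS_\Gamma$ fits Definition~\ref{defn:cS_b}. Nothing is missing.
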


\begin{rem}
By Theorem~\ref{thm:NA}, Remark~\ref{rem:S=H}, and Corollary \ref{coro:NA Gamma}, we have equivalence between:
\begin{itemize}
\item [(i)] There is model-independent arbitrage with $\Delta\in\mathcal{H}$ (i.e. the no-constraint case).
\item [(ii)] There is model-independent arbitrage with $\Delta\in\cS_\Gamma$.
\end{itemize}
While these two arbitrage opportunities coexist, they are very different in terms of optimal arbitrage profit defined in \eqref{OA}. By Proposition~\ref{thm:NUP}, we see that ${G_{\mathcal{H}}<\infty}$ if and only if $\cQ_{\mathcal{H},I}=\{\Q\in\cM:\cE^\Q_I<\infty\}\neq\emptyset$, while $G_{\cS_{\Gamma}}<\infty$ if and only if ${\cQ'_{\cS_{\Gamma},I}=\{\Q\in\Pi:\cE^\Q_I<\infty\}\neq\emptyset}$.
\end{rem}


\appendix\normalsize

\section{An example related to definition \ref{defn:cS} (iii)}\label{sec:appendix}
In this appendix, we provide an example showing that if Definition \ref{defn:cS} (iii) is not satisfied, the duality in Proposition~\ref{thm:duality} may fail. Let $d=1$, $T=2$ and $x_0=1$.
Assume $\mu_1(dx)=\frac12\delta_{1}(dx)+\frac12\delta_{2}(dx)$ and $\mu_2(dx)=\delta_{2}(dx)$, where $\delta_x$ is the Dirac  measure at $x\in\R$. Thus, $\Pi=\{\Q\}$ with ${\Q(S_1=1,S_2=2)}=\Q(S_1=2,S_2=2)=\frac12$. Consider the collection of trading strategies
\[
\cS=\{\Delta=(\Delta_0,\Delta_1): \Delta_0\equiv 0,\ \Delta_1(x)=\alpha1_{\{x=1\}}(x)\ \hbox{for some}\ \alpha\in[0,1]\}.
\]
While $\cS$ trivially satisfies Definition \ref{defn:cS} (i) and (ii), Definition \ref{defn:cS} (iii) does not hold. To see this, note that $\cS_c^\infty=\{(0,0)\}$, and thus for any $\Delta\in\cS$ with $\alpha>0$, we have $\Q(\Delta\neq(0,0))= 1/2$. In order to superhedge the claim $\Phi(x_1,x_2)\equiv0$, we need to find $n, m\in\N$, $a, b_i, c_j\in\R$, $K_i^1, K_j^2\ge0$ and  $\Delta\in\cS$ such that for all $(x_1,x_2)\in\R^2_+$
\begin{equation*}
0\le a+\sum_{i=1}^nb_i(x_1-K_i^1)^++\sum_{j=1}^mc_j(x_2-K_j^2)^++\Delta_0(x_1-x_0)+\Delta_1(x_1)(x_2-x_1).
\end{equation*}
Since $\Delta_0\equiv0$ and $\Delta(x_1)=\alpha1_{\{x_1=1\}}$, the above inequality reduces to 
\begin{equation}
f_\alpha(x_1,x_2) := -\alpha1_{\{x_1=1\}}(x_1)(x_2-x_1)
\le a+\sum_{i=1}^nb_i(x_1-K_i^1)^++\sum_{j=1}^mc_j(x_2-K_j^2)^+,
\label{eqn:superhedge alpha}
\end{equation}
for all $(x_1,x_2)\in\R^2_+$. 
Let $f^*_\alpha$ denote the upper semicontinuous envelope of $f_\alpha$. We observe that \eqref{eqn:superhedge alpha} holds for $f_\alpha$ if and only if it holds also for $f^*_\alpha$. It follows that 
\begin{equation*}
\begin{split}
D_\emptyset(0)&=\inf_{0\le\alpha\le1}D_\emptyset(f_\alpha)=\inf_{0\le\alpha\le1}D_\emptyset(f^*_\alpha)=\inf_{0\le\alpha\le1} P_\emptyset(f^*_\alpha)\\
& = \inf_{0\le\alpha\le1}\alpha\E^\Q[1_{\{S_1=1\}}(S_1)(S_2-S_1)^-]=0,
\end{split}
\end{equation*}
where the third equality follows from Proposition \ref{thm:duality} and the fourth equality is due to $f^*_\alpha= \alpha1_{\{x_1=1\}}(x_1)(x_2-x_1)^-$. On the other hand, since 
\[
A^\Q_2=\sup_{\alpha\in[0,1]}\alpha\Q(S_1=1)=\frac12,
\]
we have $P_\emptyset(0)=-\E^\Q[A^\Q_2]=-\frac12$, which indicates a duality gap.

\bibliographystyle{siam}
\bibliography{refs}

\end{document}